\documentclass[runningheads]{llncs}

\usepackage[T1]{fontenc}
\usepackage{graphicx}
\usepackage{amssymb}
\usepackage{amsmath}
\usepackage{amsfonts}
\usepackage{xcolor}
\usepackage{comment}
\usepackage[disable]{todonotes}
\usepackage{subcaption}
\usepackage{stmaryrd}
\usepackage{hyperref}
\usepackage{float} 
\usepackage{fp}
\usepackage[createShortEnv]{proof-at-the-end}
\usepackage{scalefnt}
\usepackage{lmodern}

\usetikzlibrary{patterns}
\tikzset{every picture/.style={line width=0.75pt}} 

\newcommand{\emptycell}[4]{
    \tikzstyle{roundnode}=[circle,draw = #3,fill=white, minimum size = 56];
    \node[roundnode] (#4) at (#1,#2) {};
    \draw[color = #3] (#1,#2 -1) -- (#1,#2 +1);
}

\newcommand{\leftmoover}[4]{
    \tikzstyle{roundnode}=[circle,draw = #3,fill=white, minimum size = 56];
    \node[roundnode] (#4) at (#1,#2) {};
    \draw[color = #3] (#1,#2 -1) -- (#1,#2 +1);
    \begin{scope}
        \clip (#1,#2-1) rectangle (#1-1,#2+1);
        \draw[color = #3][fill = #3] (#1,#2) circle(1);
    \end{scope}
}

\newcommand{\rightmoover}[4]{
    \tikzstyle{roundnode}=[circle,draw = #3,fill=white, minimum size = 56];
    \node[roundnode] (#4) at (#1,#2) {};
    \draw[color = #3] (#1,#2 -1) -- (#1,#2 +1);
    \begin{scope}
        \clip (#1,#2-1) rectangle (#1+1,#2+1);
        \draw[color = #3][fill = #3] (#1,#2) circle(1);
    \end{scope}
}

\newcommand{\rightleft}[4]{
    \tikzstyle{roundnode}=[circle,draw = #3,fill=#3, minimum size = 56];
    \node[roundnode] (#4) at (#1,#2) {};
}

\newcommand{\vertex}[6]{
    \ifnum #4=0
    {\emptycell{#1}{#2}{#3}{#6}}
    \fi
    \ifnum #4=1
    {\rightmoover{#1}{#2}{#3}{#6}}
    \fi
    \ifnum #4=2
    {\leftmoover{#1}{#2}{#3}{#6}}
    \fi
    \ifnum #4=3
    {\rightleft{#1}{#2}{#3}{#6}}
    \fi
    \draw (#1+1,#2) node[right]{{\Huge \color{#3} #5}};
}

\renewcommand{\sp}[1]{\lfloor #1\rfloor}
\newcommand{\renameclass}[1]{[#1]}
\newcommand{\namealg}[1]{\widehat{#1}}
\newcommand{\nameeq}{\,\hat{=}\,}
\newcommand{\namesubseteq}{\,\hat{\subseteq}\,}
\newcommand{\namecap}{\,\hat{\cap}\,}
\newcommand{\restr}[2]{\mathcal{M}_{#1}(#2)}
\newcommand{\restri}[1]{{\mathcal{M}_{#1}}}
\newcommand{\restriD}[2]{{\mathcal{M}^{#1}_{#2}}}
\newcommand{\restrD}[3]{{\mathcal{M}^{#1}_{#2}}(#3)}

\newcommand{\crestri}[1]{{{\overline{\mathcal{M}}}_{#1}}}
\newcommand{\restrA}[2]{\mathcal{M}_{#1}(#2)}
\newcommand{\crestrA}[2]{{\overline{\mathcal{M}}}_{#1}(#2)}
\newcommand{\restriA}[1]{{\mathcal{M}_{#1}}}
\newcommand{\crestriA}[1]{{{\overline{\mathcal{M}}}_{#1}}}
\newcommand{\restriB}[1]{{\mathcal{N}_{#1}}}
\newcommand{\crestriB}[1]{{{\overline{\mathcal{N}}}_{#1}}}
\newcommand{\restrB}[2]{\mathcal{N}_{#1}(#2)}

\newcommand{\valid}[2]{\Omega_{#1}(#2)}
\newcommand{\graphvalidexplicit}[2]{\Gamma_{{#1}_{#2}}}
\newcommand{\graphvalid}[2]{\graphvalidexplicit{#1}{#2}}
\newcommand{\graphvalidmutex}[2]{\graphvalidexplicit{\restriD{#1}{#2}}{}}

\newcommand{\upast}[1]{{\cal P}_{#1}}
\newcommand{\ufut}[1]{\mathcal{F}_{#1}}
\newcommand{\transpose}[1]{{#1}^{\mathrm{T}}}

\newcommand{\sshift}{\mathcal{S}}
\newcommand{\position}{position}

\newcommand{\Past}{\mathrm{Past}}
\newcommand{\Fut}{\mathrm{Fut}}
\newcommand{\I}{\mathrm{I}}
\newcommand{\B}{\mathrm{B}}
\newcommand{\V}{\mathrm{V}}

\newcommand{\D}{\mathcal{D}}

\newcommand{\E}{\mathrm{E}}

\renewcommand{\restriction}{\mathrel{\!\raisebox{-.5ex}{$\vert$}}}

\newcommand{\VL}[1]{}
\newcommand{\withproofs}[1]{}

\newcommand{\calD}{{\ensuremath{\mathcal{D}}}}
\newcommand{\calE}{{\ensuremath{\mathcal{E}}}}
\newcommand{\calG}{{\ensuremath{\mathcal{G}}}}

\newcommand{\calP}{{\ensuremath{\mathcal{P}}}}

\newcommand{\calV}{{\ensuremath{\mathcal{V}}}}

\newcommand{\calX}{{\ensuremath{\mathcal{X}}}}
\newcommand{\X}{\calX}

\newcommand{\maxi}[0]{^{\infty}}
\newcommand{\maxgraph}[0]{fully-explored }
\newcommand{\maximality}[0]{forward full-exploration }
\newcommand{\bckwmaximality}[0]{backward full-exploration }
\newcommand{\xCone}[0]{\mathcal{C}_x}
\newcommand{\xcone}[1]{\mathcal{C}_{#1}}
\newcommand{\diskA}[1]{\sshift_{\restriA{#1}}}
\newcommand{\diskB}[1]{\sshift_{\restriB{#1}}}


\definecolor{lowgreen}{rgb}{0.40390625,0.6109375,0.09109375}
\definecolor{myblack}{RGB}{0,0,0}
\definecolor{border}{RGB}{206,206,206}
\definecolor{port}{RGB}{155,155,155}
\definecolor{setBorder}{RGB}{80,227,194}
\definecolor{internal}{RGB}{38,105,185}
\definecolor{greenstate}{RGB}{80,183,100}
\definecolor{greenstategray}{RGB}{158,220,170}
\definecolor{redstate}{RGB}{201,109,76}
%
%
%
\DeclareRobustCommand{\sizeFtwo}[0]{\Large} 
\DeclareRobustCommand{\sizeFtwop}[0]{\normalsize} 


\newcommand{\MC}[1]{{\color{cyan}#1}}
\renewcommand{\MC}[1]{{#1}}
\newcommand{\PA}[1]{{\color{magenta}#1}}
\newcommand{\RX}[2]{{#1\todo{WARNING}}}{}
\renewcommand{\PA}[1]{{#1}}

\begin{document}
\title{Space-time reversible graph rewriting}
%
%
%
%
\author{Pablo Arrighi\inst{1}\orcidID{0000-0002-3535-1009} \and
Marin Costes\inst{1}\orcidID{0000-0003-0915-9192} \and
Luidnel Maignan\inst{2}\orcidID{0009-0009-5297-5022}}
\authorrunning{P. Arrighi et al.}
%
\institute{Université Paris-Saclay, Inria, CNRS, LMF, 91190 Gif-sur-Yvette, France \and
Univ Paris Est Creteil, LACL, 94000, Creteil, France}
\maketitle              
\begin{abstract}
In the mathematical tradition, reversibility requires that the evolution of a dynamical system be a bijective function. In the context of graph rewriting, however, the evolution is not even a function, because it is not even deterministic---as the rewrite rules get applied at non-deterministically chosen locations. 
Physics, by contrast, suggests a more flexible understanding of reversibility in space-time, whereby any two closeby snapshots (aka `space-like cuts'), must mutually determine each other.
We build upon the recently developed framework of space-time deterministic graph rewriting, in order to formalise this notion of space-time reversibility, and henceforth study reversible graph rewriting.  
We establish sufficient, local conditions on the rewrite rules so that they be space-time reversible. 
We provide an example 
featuring time dilation, in the spirit of general relativity.

\keywords{Causal graph dynamics \and Cellular automata \and Time covariance \and Renaming invariance \and Homogeneity \and Invertible \and Distributed computation \and DAG \and Poset \and Foliation \and Graph rewriting \and Reversibility.}
\end{abstract}
%
%
%
\todo[inline]{Marin: Uncolour Pablo's edits. More diagrams for pedagogy.}
\todo[inline,color=green]{Marin: Check relax node creation.}
\todo[inline,color=red]{Luidnel: Tout relire avec un oeuil novice.\\ Check that it's self-contained. That every notation is defined.\\ Mark where an extra figure could help understand a complicated notion.\\ Check plural vs singular uses.}

\section{Introduction}


{\em Dynamical systems from grids to graphs}.
Dynamical systems refer to the evolution of an entire configuration, seen as a monolithic global state at time $t$, into another configuration at time $t+1$, and then $t+2$, etc., iteratively. When the global state is not really monolithic but rather a compound of local systems, the global function is often described as a composition of local functions acting across space. 
This is famously the case for grid-based dynamical systems such as cellular automata~\cite{Hedlund}, but also for their graph-based extension, namely causal graph dynamics~\cite{ArrighiCGD,ArrighiCayley}. These were designed to model distributed systems whose the interconnection network is dynamical, e.g. social networks, biological systems, or physical systems such as discretised general relativity~\cite{Sorkin}.
Recently, causal graph dynamics have then been extended in two directions which this paper aims to merge: reversibility on the one hand, and asynchronism on the other.

{\em Reversibility} is one of the most important property of dynamical systems. It refers to the requirement that the global function be a bijection over the set of configurations---and that the local functions composing it be themselves bijective.
Reversibility shows up in various aspects of Computer Science. It holds the key to diminishing power consumption~\cite{Landauer}, which has become so critical. It is useful for debugging~\cite{ReversibleDebugging} and more generally for the reproducibility of system behaviour. In distributed computing, it is useful as a failure handling primitive, e.g. for rolling back a transaction~\cite{ReversibleRollback}. In natural computing, it may capture the features of reversible chemical/biological reactions~\cite{ReversibleReactions}. In quantum computing, unitarity entails reversibility: the study of the reversible version of a model of computation is quite often a sound prior step to take before moving to its quantum version. Both cellular automata and causal graph dynamics have followed this path, i.e. from reversible cellular automata~\cite{KariBlock} and reversible causal graph dynamics~\cite{ArrighiRCGD} to their quantum counterpart~\cite{ArrighiOverview,ArrighiQNT}.
For reversible causal graph dynamics, one of the challenges was to handle the creation/destruction of nodes whilst remaining reversible and renaming-invariant. The issue was overcome by introducing algebraic operations upon the names of nodes, akin to splitting/merging~\cite{ArrighiCreation}.


{\em Asynchronism.} The assumption of synchronism that underlies dynamical systems is often criticised: distributed computation is fundamentally asynchronous and Physics itself departs from the idea of a global time across the universe. Asynchronism---the application of local operators at arbitrary places, non-determinis\-tically---fits both these pictures better and is also well studied. Still it is often the case that, in spite of non-determinism, some form of well-definiteness of events must be preserved. Space-time deterministic graph rewriting~\cite{ArrighiDetRew} identifies a way to relax the synchronism of causal graph dynamics to allow all possible non-deterministic scheduling and yet lead to a unique unfolding of events---a unique space-time diagram in the sense of cellular automata or physics. The challenge there was to identify simple enough local conditions, to ensure the existence of a consistent space-time diagram at the global level.


{\em Reversiblity vs Asynchronism.} 
On the face of it reversibility and asynchronism are incompatible. Indeed, asynchronism leads to non-determinism, but then how can we demand that the evolution be a bijective function when it is not even a function? Under space-time determinism~\cite{ArrighiDetRew}, however, asynchronism leads to a non-determinism of scheduling which \emph{does not really matter} as far as space-time is concerned. The question arises, therefore, whether this scheduling non-determinism which we already overcame to recover determinism at the space-time level, can again be overcome to recover a meaningful notion of `space-time reversibility'.
From a general philosophy of science point of view, we find it compelling to formally reconcile asynchronism and reversibility.
These considerations also have applications in Physics, e.g. for the sake of  mathematically sound, constructive frameworks for discrete models of general relativity~\cite{Sorkin}. 

{\em Contributions.} This works aims at coming up with a sensible notion of reversibility in the presence of asynchronism. It does so within the space-time deterministic graph rewriting framework, using the lessons taking from reversible causal graph dynamics. Our core theoretical contributions are: 1/ a rigorous formalisation of space-time reversibility 2/ the provision of an axiomatic characterization of this reversibility---whereby a limited number of high-level conditions need be checked 3/ the provision of a constructive characterization of reversibility---in terms of concrete input/output patterns in the style of rewrite systems.\\ 
{\em Plan.} Prior work is necessary to get there: extending the name algebra of~\cite{ArrighiCreation} (Sec.~\ref{sec:names}); recalling the notions of graphs and locality that we use, and establishing a first formal connection with rewrite systems (Sec.~\ref{sec:graphs}). Once this is done we provide our three characterizations of reversibility (Sec.~\ref{sec:rev}) and achieve space-time determinism of the inverse (Sec.~\ref{sec:commutation}). Finally, we provide a complete example featuring time-dilation reversibility, renaming-invariance (Sec.~\ref{sec:timedilation}) before we conclude (Sec.~\ref{sec:conclusion}).


\section{An algebra for naming vertices}\label{sec:names}

In a variety of different early formalisms, it was shown that reversibility and causality leads to vertex-preservation, i.e. the forbidding of vertex creation and destruction~\cite{ArrighiRCGD}. 
This limitation was finally overcome by introducing a \textit{name algebra}~\cite{ArrighiCreation}. Let us quickly remind the reader of why we cannot do without such an algebra. 
Say that some reversible evolution splits a node named $x$ into two nodes. We need to name the two infants in a way that avoids name conflicts with the vertices of the rest of the graph. But if the evolution is locally-causal, we are unable to just `pick a fresh name out of the blue', because we do not know which names are available. Thus, we have to construct new names locally. A natural choice is to use $x.l$ and $x.r$ (for left and right respectively).
But then, reversibility forces us to allow for the merger of two node names as the correct inverse operation of node splitting.
We are therefore compelled to accept that vertex names obey such algebraic rules as $x.l\lor x.r=x$. We must also disallow that one node be called $x$ and another be called $x.l$, because this could cause a name conflict e.g. if $x$ splits into $x.l$ and $x.r$.

In the context of  space-time deterministic graph rewriting, however, each vertex must actually be understood as a computational process at position $x$ and time tag $t$, hence carrying a time tag~\cite{ArrighiDetRew}. This time tag is essential if we want to be able to distinguish individual events and demand that they be well-determined. We must therefore understand how time tags interact with the position algebra. It turns out that adding a few extra algebraic rules suffices to enforce commutation between time increments one the one hand, and name splitting/merging on the other
. We reach:
\begin{definition}[Name algebra]\label{def:namealgebra}
The {\em name algebra} $\calV$ is defined as the terms given by the grammar 
\begin{equation}
u,v\ ::=\ m\ |\ u.p\ |\ u\lor v\ |\  t.u\quad\text{with}\quad m\in \mathbb{N},\ p\in \{l,r\}^{*},\ t\in \mathbb{Z}
\label{eq:algebra}
\end{equation}
and endowed with the following equality theory (with $ \varepsilon $ the empty word):
\begin{align*}
u.\varepsilon&=u  & u.(p\cdot q)&=(u.p).q   &  u.l\lor u.r&=u  \\  
(u\lor v).l&=u   &  (u\lor v).r&=v  &  t.(u.p)&=(t.u).p \\
t.(u\lor v)&=t.u\lor t.v   &  s.(t.u)&=(s+t).u  
 &   0.u&=u.
\end{align*}
The {\em \position\ algebra} $\X$ is defined as the subset of those terms of $\calV$ obtained without using the last case of Eq.~\eqref{eq:algebra}.
Let  $\sp{\cdot}:\calV \to \X$ be the projector on the \position s inductively as follows:
\begin{align*}
\sp{m}&=m & \sp{u.p}&=\sp{u}.p & \sp{u\lor v}&=\sp{u}\lor \sp{v} & \sp{t.u}&=\sp{u}
\end{align*}
Consider $U\subseteq\calV$. The {\em closure} $\namealg{U}$ is defined as the smallest subalgebra of $\calV$ (i.e. closed under the last three operations of Eq. \eqref{def:namealgebra}) that contains $U$.\\ 
We use letters $m,n$ to designate elements of $\mathbb{N}$; $x,y$ to designate \position s in $\X$;  
$s,t$ to designate time tags in $\mathbb{Z}$; and $u,v$ to designate names in $\calV$.\\
We use $U\nameeq U'$ as a shorthand notation for $\namealg{U}=\namealg{U}'$, $U\namecap U'$ as a shorthand notation for $\namealg{U}\cap\namealg{U}'\neq \varnothing$, and  $U\namesubseteq U'$ as a shorthand notation for $\namealg{U}\subseteq\namealg{U}'$. 
\end{definition}

\noindent Notice that giving a simple name such as $u$ is not that innocuous at this stage. Indeed, the graphs may not contain vertex $u$, but it may contain $u.l$ and $u.r\lor w$ say, and these two may well lie far apart in the graph---in general $\{v\in V_G \mid u\namecap v\}$ could be quite large and spread out. 

Still, in the end the names of the vertices are just intended to describe the geometry, and nothing else. Thus, the kind of operators (neighbourhoods, local operators) that we will consider will typically required to be renaming-invariant---capturing the idea that no matter where we are in space-time, the same causes lead to the same effects. 
\begin{definition}[Renaming and renaming-invariance]\label{def:renaminginvariance}
A renaming is a function $R:\calV\rightarrow \calV$ such that
\begin{equation*}
R(u.p) =R(u) .p\qquad R(u\lor v) =R(u) \lor R(v)\qquad R(t.u)=t.R(u)
\end{equation*}
and verifying that $\sp{R(.)}:\X\rightarrow \X$ is a bijection. It is fully specified by its action on domain $ \mathbb{N}$. It is extended to act upon graphs by renaming their nodes.\\
Let $F$ be a function over graphs, possibly parameterized by positions $x \in \X$.
It said to be renaming-invariant if and only if $RF_{x} =F_{\sp{R(x)}} R$.
\end{definition}

\section{Graphs and locality}\label{sec:graphs}

\subsection{A high-level specification of the dynamics}

Let us start by formally introducing the type of graphs that we consider: directed acyclic labelled port graphs. By ``port graph'' we mean that edges are attached to the ports of the nodes, rather than the nodes themselves. We will fix $\pi=\{a,b,\ldots\}$ a finite set of ports, and write $u\!:\!a$ to designate port $:\!a$ of node $u$. By ``labelled graph'' we mean that node $u$ carries an internal state $\sigma(u)$ taken to belong to a finite set $\Sigma=\{0,1,\ldots\}$.
The use of labelled port graphs is totally standard in distributed computing~\cite{PapazianRemila,Chalopin,KrivineKappa}. This is because ports are mandatory in order to be able to tell a neighbouring process from another, whereas labels are required in order to capture the state of each process. The use of a DAG is also quite common in order to capture the dependency between the processes~\cite{CausalSets,AsynchronousSimsSync}. The merger of DAG and port graphs is less common and well-argued in~\cite{ArrighiDetRew}. In addition, because we take partial views of these graphs, they will possibly have ``borders'', i.e. dangling edges from/to internal vertices to/from border vertices---whose internal states are unknown. We reach the following definition, as illustrated by Fig.~\ref{fig : Induced subgraph 1}\todo[inline,color=gray]{To Marin: Referees want a better examples e.g. illustrating more conditions.}.
\begin{definition}[Graphs]\label{def : graphs}
A \emph{graph} $G$ is given by a tuple $(\I_G, \B_G, \E_G, \sigma_G)$ where:
\begin{itemize}
    \item $\I_G \subseteq \calV$ is the set of \emph{internal vertices of $G$},
    \item $\B_G \subseteq \calV \setminus \I_G$ is the set of \emph{border vertices of $G$},
    \item $\E_G \subseteq (V_G\!:\!\pi )^{2} \setminus ( \B_{G}\!:\!\pi )^2$ is the set of \emph{(oriented) edges}, and
    \item $\sigma_G :\I_G\rightarrow \Sigma$ maps each internal vertex to its state.
\end{itemize}
where we denote by $\V_G := \I_G\cup \B_G$ the set of all vertices of the graph, and by $(V\!:\!\pi):=\{v\!:\!p\mid v\in V,\,p\in\pi\}$ the set of ports of some set of vertices $V$. Moreover the tuple has to be such that:
\begin{description} 
    \item{acyclicity:} 
    the graph has no cycles.
    \todo[inline,color=gray]{To Marin: we are not asking for any finiteness, well-foundedness, OK to have infinte chains? Marin : Maybe we can ask to only have finite chains (cf Cauchy surface). (if we ask well foundness, then it is not stable by transposition)}
    \item{border-attachment:} $\forall u \in \B_G,\, \exists (v\!:\!a,v'\!:\!a') \in \E_G,\, u \in \{\,v,v'\,\}$,\\ i.e. border vertices lie at distance one from internal vertices.
    \item{port-saturation:} $\forall (u\!:\!a,v\!:\!b),(u'\!:\!a',v'\!:\!b') \in {\E_G},\, u\!:\!a\neq v'\!:\!b' \,\wedge\, u\!:\!a=u'\!:\!a' \Leftrightarrow v\!:\!b=v'\!:\!b'$\\
    i.e. ports are used only once---so as to distinguish each neighbour.
    \item{non-overlapping positions:} $\forall t.x, t'.x' \in {\V_{G}},\,\forall p,p'\in\{l,r\}^*,\, x.p = x'.p' \Rightarrow t.x = t'.x'$,\\
    i.e. position fragments appear only once---so as to avoid name conflicts.
\end{description}
We denote by $\Past(G)\subseteq I_G$ the vertices of $I_G$ with no incoming edges, and by $\Fut(G)\subseteq I_G$ those with no outgoing edges, \textit{i.e.}, $\Fut(G) = \Past(\transpose{G})$ where $\transpose{G}$ is the transposed graph.\\
We denote by $\mathcal{G}$ the set of all graphs. Given a set of graphs $\mathcal{S}$, we denote by $\mathcal{S}\maxi=\{G\in \mathcal{S}\,|\, \forall (u\!:\!a,v\!:\!b)\in E_G,\,v\in I_G \}$ its subset of \maxgraph graphs, and by ${}\maxi\!\mathcal{S}=\{G\in \mathcal{S}\,|\, \forall (u\!:\!a,v\!:\!b)\in E_G,\,u\in I_G \}$ those who are backwards \maxgraph\!.
\end{definition}
Intuitively, vertices represent computational processes and each edge expresses that the target process is awaiting for the source process. The $\Past(G)$ vertices stand for processes that are no longer awaiting for results by others, and are therefore ready to be executed.

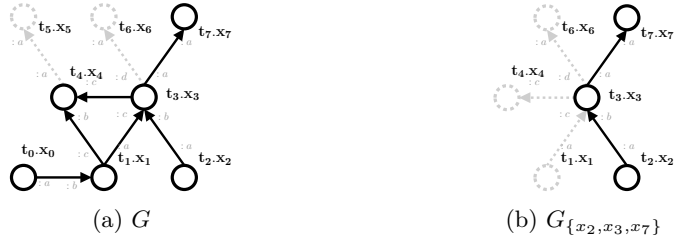
\begin{figure}[t]
\hfil
\begin{subfigure}{0.25\textwidth}
    \captionsetup{justification=centering}
    \centering
    \resizebox{\textwidth}{!}{\tikzset{every picture/.style={line width=0.75pt}} 

\begin{tikzpicture}[x=0.75pt,y=0.75pt,yscale=-1,xscale=1]

\draw [line width=2.25]    (220,185) -- (155,185) ;
\draw [shift={(150,185)}, rotate = 360] [fill=myblack  ][line width=0.08]  [draw opacity=0] (14.29,-6.86) -- (0,0) -- (14.29,6.86) -- cycle    ;
\draw [line width=2.25]    (100,285) -- (165,285) ;
\draw [shift={(170,285)}, rotate = 180] [fill=myblack  ][line width=0.08]  [draw opacity=0] (14.29,-6.86) -- (0,0) -- (14.29,6.86) -- cycle    ;
\draw [color=border  ,draw opacity=1 ][line width=2.25]  [dash pattern={on 2.53pt off 3.02pt}]  (135,170) -- (87.91,104.07) ;
\draw [shift={(85,100)}, rotate = 54.46] [fill=border  ,fill opacity=1 ][line width=0.08]  [draw opacity=0] (14.29,-6.86) -- (0,0) -- (14.29,6.86) -- cycle    ;
\draw [line width=2.25]    (185,270) -- (137.91,204.07) ;
\draw [shift={(135,200)}, rotate = 54.46] [fill=myblack  ][line width=0.08]  [draw opacity=0] (14.29,-6.86) -- (0,0) -- (14.29,6.86) -- cycle    ;
\draw [color=border  ,draw opacity=1 ][line width=2.25]  [dash pattern={on 2.53pt off 3.02pt}]  (235,170) -- (187.91,104.07) ;
\draw [shift={(185,100)}, rotate = 54.46] [fill=border  ,fill opacity=1 ][line width=0.08]  [draw opacity=0] (14.29,-6.86) -- (0,0) -- (14.29,6.86) -- cycle    ;
\draw [line width=2.25]    (285,270) -- (237.91,204.07) ;
\draw [shift={(235,200)}, rotate = 54.46] [fill=myblack  ][line width=0.08]  [draw opacity=0] (14.29,-6.86) -- (0,0) -- (14.29,6.86) -- cycle    ;
\draw [line width=2.25]    (185,270) -- (232.09,204.07) ;
\draw [shift={(235,200)}, rotate = 125.54] [fill=myblack  ][line width=0.08]  [draw opacity=0] (14.29,-6.86) -- (0,0) -- (14.29,6.86) -- cycle    ;
\draw [line width=2.25]    (235,170) -- (282.09,104.07) ;
\draw [shift={(285,100)}, rotate = 125.54] [fill=myblack  ][line width=0.08]  [draw opacity=0] (14.29,-6.86) -- (0,0) -- (14.29,6.86) -- cycle    ;
\draw  [color=myblack  ,draw opacity=1 ][fill={rgb, 255:red, 255; green, 255; blue, 255 }  ,fill opacity=1 ][line width=3]  (170,285) .. controls (170,276.72) and (176.72,270) .. (185,270) .. controls (193.28,270) and (200,276.72) .. (200,285) .. controls (200,293.28) and (193.28,300) .. (185,300) .. controls (176.72,300) and (170,293.28) .. (170,285) -- cycle ;
\draw  [color=myblack  ,draw opacity=1 ][fill={rgb, 255:red, 255; green, 255; blue, 255 }  ,fill opacity=1 ][line width=3]  (70,285) .. controls (70,276.72) and (76.72,270) .. (85,270) .. controls (93.28,270) and (100,276.72) .. (100,285) .. controls (100,293.28) and (93.28,300) .. (85,300) .. controls (76.72,300) and (70,293.28) .. (70,285) -- cycle ;
\draw  [color=myblack  ,draw opacity=1 ][fill={rgb, 255:red, 255; green, 255; blue, 255 }  ,fill opacity=1 ][line width=3]  (270,285) .. controls (270,276.72) and (276.72,270) .. (285,270) .. controls (293.28,270) and (300,276.72) .. (300,285) .. controls (300,293.28) and (293.28,300) .. (285,300) .. controls (276.72,300) and (270,293.28) .. (270,285) -- cycle ;
\draw  [color=myblack  ,draw opacity=1 ][fill={rgb, 255:red, 255; green, 255; blue, 255 }  ,fill opacity=1 ][line width=3]  (120,185) .. controls (120,176.72) and (126.72,170) .. (135,170) .. controls (143.28,170) and (150,176.72) .. (150,185) .. controls (150,193.28) and (143.28,200) .. (135,200) .. controls (126.72,200) and (120,193.28) .. (120,185) -- cycle ;
\draw  [color=myblack  ,draw opacity=1 ][fill={rgb, 255:red, 255; green, 255; blue, 255 }  ,fill opacity=1 ][line width=3]  (220,185) .. controls (220,176.72) and (226.72,170) .. (235,170) .. controls (243.28,170) and (250,176.72) .. (250,185) .. controls (250,193.28) and (243.28,200) .. (235,200) .. controls (226.72,200) and (220,193.28) .. (220,185) -- cycle ;
\draw  [color=border  ,draw opacity=1 ][fill={rgb, 255:red, 255; green, 255; blue, 255 }  ,fill opacity=1 ][dash pattern={on 3.38pt off 3.27pt}][line width=3]  (170,85) .. controls (170,76.72) and (176.72,70) .. (185,70) .. controls (193.28,70) and (200,76.72) .. (200,85) .. controls (200,93.28) and (193.28,100) .. (185,100) .. controls (176.72,100) and (170,93.28) .. (170,85) -- cycle ;
\draw  [color=border  ,draw opacity=1 ][fill={rgb, 255:red, 255; green, 255; blue, 255 }  ,fill opacity=1 ][dash pattern={on 3.38pt off 3.27pt}][line width=3]  (70,85) .. controls (70,76.72) and (76.72,70) .. (85,70) .. controls (93.28,70) and (100,76.72) .. (100,85) .. controls (100,93.28) and (93.28,100) .. (85,100) .. controls (76.72,100) and (70,93.28) .. (70,85) -- cycle ;
\draw  [color=myblack  ,draw opacity=1 ][fill={rgb, 255:red, 255; green, 255; blue, 255 }  ,fill opacity=1 ][line width=3]  (270,85) .. controls (270,76.72) and (276.72,70) .. (285,70) .. controls (293.28,70) and (300,76.72) .. (300,85) .. controls (300,93.28) and (293.28,100) .. (285,100) .. controls (276.72,100) and (270,93.28) .. (270,85) -- cycle ;

\draw (81,242.4) node [anchor=north west][inner sep=0.75pt]  [font=\sizeFtwo]  {$\mathbf{t_{0} .x_{0}}$};
\draw (201,252.4) node [anchor=north west][inner sep=0.75pt]  [font=\sizeFtwo]  {$\mathbf{t_{1} .x_{1}}$};
\draw (301,252.4) node [anchor=north west][inner sep=0.75pt]  [font=\sizeFtwo]  {$\mathbf{t_{2} .x_{2}}$};
\draw (141,145.4) node [anchor=north west][inner sep=0.75pt]  [font=\sizeFtwo]  {$\mathbf{t_{4} .x_{4}}$};
\draw (261,172.4) node [anchor=north west][inner sep=0.75pt]  [font=\sizeFtwo]  {$\mathbf{t_{3} .x_{3}}$};
\draw (102,88.4) node [anchor=north west][inner sep=0.75pt]  [font=\sizeFtwo]  {$\mathbf{t_{5} .x_{5}}$};
\draw (202,88.4) node [anchor=north west][inner sep=0.75pt]  [font=\sizeFtwo]  {$\mathbf{t_{6} .x_{6}}$};
\draw (301,95.4) node [anchor=north west][inner sep=0.75pt]  [font=\sizeFtwo]  {$\mathbf{t_{7} .x_{7}}$};
\draw (279,242.4) node [anchor=north west][inner sep=0.75pt]  [font=\sizeFtwop,color={rgb, 255:red, 155; green, 155; blue, 155 }  ,opacity=1 ]  {$:a$};
\draw (201,242.4) node [anchor=north west][inner sep=0.75pt]  [font=\sizeFtwop,color={rgb, 255:red, 155; green, 155; blue, 155 }  ,opacity=1 ]  {$:a$};
\draw (102,288.4) node [anchor=north west][inner sep=0.75pt]  [font=\sizeFtwop,color={rgb, 255:red, 155; green, 155; blue, 155 }  ,opacity=1 ]  {$:a$};
\draw (99,152.4) node [anchor=north west][inner sep=0.75pt]  [font=\sizeFtwop,color={rgb, 255:red, 155; green, 155; blue, 155 }  ,opacity=1 ]  {$:a$};
\draw (69,112.4) node [anchor=north west][inner sep=0.75pt]  [font=\sizeFtwop,color={rgb, 255:red, 155; green, 155; blue, 155 }  ,opacity=1 ]  {$:a$};
\draw (169,112.4) node [anchor=north west][inner sep=0.75pt]  [font=\sizeFtwop,color={rgb, 255:red, 155; green, 155; blue, 155 }  ,opacity=1 ]  {$:a$};
\draw (279,112.4) node [anchor=north west][inner sep=0.75pt]  [font=\sizeFtwop,color={rgb, 255:red, 155; green, 155; blue, 155 }  ,opacity=1 ]  {$:a$};
\draw (249,152.4) node [anchor=north west][inner sep=0.75pt]  [font=\sizeFtwop,color={rgb, 255:red, 155; green, 155; blue, 155 }  ,opacity=1 ]  {$:a$};
\draw (251,202.4) node [anchor=north west][inner sep=0.75pt]  [font=\sizeFtwop,color={rgb, 255:red, 155; green, 155; blue, 155 }  ,opacity=1 ]  {$:b$};
\draw (199,202.4) node [anchor=north west][inner sep=0.75pt]  [font=\sizeFtwop,color={rgb, 255:red, 155; green, 155; blue, 155 }  ,opacity=1 ]  {$:c$};
\draw (137,288.4) node [anchor=north west][inner sep=0.75pt]  [font=\sizeFtwop,color={rgb, 255:red, 155; green, 155; blue, 155 }  ,opacity=1 ]  {$:b$};
\draw (151,202.4) node [anchor=north west][inner sep=0.75pt]  [font=\sizeFtwop,color={rgb, 255:red, 155; green, 155; blue, 155 }  ,opacity=1 ]  {$:b$};
\draw (161,162.4) node [anchor=north west][inner sep=0.75pt]  [font=\sizeFtwop,color={rgb, 255:red, 155; green, 155; blue, 155 }  ,opacity=1 ]  {$:c$};
\draw (198,152.4) node [anchor=north west][inner sep=0.75pt]  [font=\sizeFtwop,color={rgb, 255:red, 155; green, 155; blue, 155 }  ,opacity=1 ]  {$:d$};
\draw (151,252.4) node [anchor=north west][inner sep=0.75pt]  [font=\sizeFtwop,color={rgb, 255:red, 155; green, 155; blue, 155 }  ,opacity=1 ]  {$:c$};

\end{tikzpicture}}
    \caption{$G$}
    \label{fig : Induced subgraph 1}
\end{subfigure}
\hfil
\hfil
\begin{subfigure}{0.41\textwidth}
    \captionsetup{justification=centering}
    \centering
    \resizebox{.5\textwidth}{!}{\tikzset{every picture/.style={line width=0.75pt}} 

\begin{tikzpicture}[x=0.75pt,y=0.75pt,yscale=-1,xscale=1]

\draw [color=border  ,draw opacity=1 ][line width=2.25]  [dash pattern={on 2.53pt off 3.02pt}]  (222,185) -- (157,185) ;
\draw [shift={(152,185)}, rotate = 360] [fill=border  ,fill opacity=1 ][line width=0.08]  [draw opacity=0] (14.29,-6.86) -- (0,0) -- (14.29,6.86) -- cycle    ;
\draw [color=border  ,draw opacity=1 ][line width=2.25]  [dash pattern={on 2.53pt off 3.02pt}]  (237,170) -- (189.91,104.07) ;
\draw [shift={(187,100)}, rotate = 54.46] [fill=border  ,fill opacity=1 ][line width=0.08]  [draw opacity=0] (14.29,-6.86) -- (0,0) -- (14.29,6.86) -- cycle    ;
\draw [line width=2.25]    (287,270) -- (239.91,204.07) ;
\draw [shift={(237,200)}, rotate = 54.46] [fill=myblack  ][line width=0.08]  [draw opacity=0] (14.29,-6.86) -- (0,0) -- (14.29,6.86) -- cycle    ;
\draw [color=border  ,draw opacity=1 ][line width=2.25]  [dash pattern={on 2.53pt off 3.02pt}]  (187,270) -- (234.09,204.07) ;
\draw [shift={(237,200)}, rotate = 125.54] [fill=border  ,fill opacity=1 ][line width=0.08]  [draw opacity=0] (14.29,-6.86) -- (0,0) -- (14.29,6.86) -- cycle    ;
\draw [line width=2.25]    (237,170) -- (284.09,104.07) ;
\draw [shift={(287,100)}, rotate = 125.54] [fill=myblack  ][line width=0.08]  [draw opacity=0] (14.29,-6.86) -- (0,0) -- (14.29,6.86) -- cycle    ;
\draw  [color=border  ,draw opacity=1 ][fill={rgb, 255:red, 255; green, 255; blue, 255 }  ,fill opacity=1 ][dash pattern={on 3.38pt off 3.27pt}][line width=3]  (172,285) .. controls (172,276.72) and (178.72,270) .. (187,270) .. controls (195.28,270) and (202,276.72) .. (202,285) .. controls (202,293.28) and (195.28,300) .. (187,300) .. controls (178.72,300) and (172,293.28) .. (172,285) -- cycle ;
\draw  [color=myblack  ,draw opacity=1 ][fill={rgb, 255:red, 255; green, 255; blue, 255 }  ,fill opacity=1 ][line width=3]  (272,285) .. controls (272,276.72) and (278.72,270) .. (287,270) .. controls (295.28,270) and (302,276.72) .. (302,285) .. controls (302,293.28) and (295.28,300) .. (287,300) .. controls (278.72,300) and (272,293.28) .. (272,285) -- cycle ;
\draw  [color=border  ,draw opacity=1 ][fill={rgb, 255:red, 255; green, 255; blue, 255 }  ,fill opacity=1 ][dash pattern={on 3.38pt off 3.27pt}][line width=3]  (122,185) .. controls (122,176.72) and (128.72,170) .. (137,170) .. controls (145.28,170) and (152,176.72) .. (152,185) .. controls (152,193.28) and (145.28,200) .. (137,200) .. controls (128.72,200) and (122,193.28) .. (122,185) -- cycle ;
\draw  [color=myblack  ,draw opacity=1 ][fill={rgb, 255:red, 255; green, 255; blue, 255 }  ,fill opacity=1 ][line width=3]  (222,185) .. controls (222,176.72) and (228.72,170) .. (237,170) .. controls (245.28,170) and (252,176.72) .. (252,185) .. controls (252,193.28) and (245.28,200) .. (237,200) .. controls (228.72,200) and (222,193.28) .. (222,185) -- cycle ;
\draw  [color=border  ,draw opacity=1 ][fill={rgb, 255:red, 255; green, 255; blue, 255 }  ,fill opacity=1 ][dash pattern={on 3.38pt off 3.27pt}][line width=3]  (172,85) .. controls (172,76.72) and (178.72,70) .. (187,70) .. controls (195.28,70) and (202,76.72) .. (202,85) .. controls (202,93.28) and (195.28,100) .. (187,100) .. controls (178.72,100) and (172,93.28) .. (172,85) -- cycle ;
\draw  [color=myblack  ,draw opacity=1 ][fill={rgb, 255:red, 255; green, 255; blue, 255 }  ,fill opacity=1 ][line width=3]  (272,85) .. controls (272,76.72) and (278.72,70) .. (287,70) .. controls (295.28,70) and (302,76.72) .. (302,85) .. controls (302,93.28) and (295.28,100) .. (287,100) .. controls (278.72,100) and (272,93.28) .. (272,85) -- cycle ;

\draw (203,252.4) node [anchor=north west][inner sep=0.75pt]  [font=\sizeFtwo]  {$\mathbf{t_{1} .x_{1}}$};
\draw (303,252.4) node [anchor=north west][inner sep=0.75pt]  [font=\sizeFtwo]  {$\mathbf{t_{2} .x_{2}}$};
\draw (141,142.4) node [anchor=north west][inner sep=0.75pt]  [font=\sizeFtwo]  {$\mathbf{t_{4} .x_{4}}$};
\draw (263,172.4) node [anchor=north west][inner sep=0.75pt]  [font=\sizeFtwo]  {$\mathbf{t_{3} .x_{3}}$};
\draw (204,88.4) node [anchor=north west][inner sep=0.75pt]  [font=\sizeFtwo]  {$\mathbf{t_{6} .x_{6}}$};
\draw (303,95.4) node [anchor=north west][inner sep=0.75pt]  [font=\sizeFtwo]  {$\mathbf{t_{7} .x_{7}}$};
\draw (281,242.4) node [anchor=north west][inner sep=0.75pt]  [font=\sizeFtwop,color={rgb, 255:red, 155; green, 155; blue, 155 }  ,opacity=1 ]  {$:a$};
\draw (203,242.4) node [anchor=north west][inner sep=0.75pt]  [font=\sizeFtwop,color={rgb, 255:red, 155; green, 155; blue, 155 }  ,opacity=1 ]  {$:a$};
\draw (171,112.4) node [anchor=north west][inner sep=0.75pt]  [font=\sizeFtwop,color={rgb, 255:red, 155; green, 155; blue, 155 }  ,opacity=1 ]  {$:a$};
\draw (281,112.4) node [anchor=north west][inner sep=0.75pt]  [font=\sizeFtwop,color={rgb, 255:red, 155; green, 155; blue, 155 }  ,opacity=1 ]  {$:a$};
\draw (251,152.4) node [anchor=north west][inner sep=0.75pt]  [font=\sizeFtwop,color={rgb, 255:red, 155; green, 155; blue, 155 }  ,opacity=1 ]  {$:a$};
\draw (253,202.4) node [anchor=north west][inner sep=0.75pt]  [font=\sizeFtwop,color={rgb, 255:red, 155; green, 155; blue, 155 }  ,opacity=1 ]  {$:b$};
\draw (201,202.4) node [anchor=north west][inner sep=0.75pt]  [font=\sizeFtwop,color={rgb, 255:red, 155; green, 155; blue, 155 }  ,opacity=1 ]  {$:c$};
\draw (163,162.4) node [anchor=north west][inner sep=0.75pt]  [font=\sizeFtwop,color={rgb, 255:red, 155; green, 155; blue, 155 }  ,opacity=1 ]  {$:c$};
\draw (200,152.4) node [anchor=north west][inner sep=0.75pt]  [font=\sizeFtwop,color={rgb, 255:red, 155; green, 155; blue, 155 }  ,opacity=1 ]  {$:d$};

\end{tikzpicture}}
    \caption{$G_{\{x_2,x_3,x_7\}}$}
    \label{fig : Induced subgraph 2}
\end{subfigure}
\caption{
{\em Induced subgraphs and borders.} 
$(a)$ A graph $G$ and $(b)$ its induced subgraph $G_{\{x_2,x_3,x_7\}}$. Both graphs have borders, as shown by the dashed lines. 
}
\label{fig : Induced subgraph}
\end{figure}

Next, we define the induced subgraph $G_U \sqsubseteq G$ as the graph whose internal vertices are $I_{G_U}=I_G\cap U$, and whose edges are all those edges of $G$ which touch a vertex in $I_{G_U}$. Thus, its border vertices $B_{G_U}$ are those nodes of $V_G\setminus I_{G_U}$ which lie at distance one of $I_{G_U}$ in $G$ (see Fig.~\ref{fig : Induced subgraph}). We also introduce the operation $G\sqcup H$, a union which is only defined if both $G$ and $H$ can be viewed as induced subgraphs of the same larger graph. In particular, this implies that if $u\in I_G$,  $v\in \B_G$, $v\in I_H$ and $(u\!:\!a,v\!:\!b)\in E_G$, it must be the case that $(u\!:\!a,v\!:\!b)\in E_H$ and $u\in V_H$. This union will allow us to express locality. In practice it is often convenient to restrict to a subset of the set of all graphs, but then we need to assume a number of closure properties: 
\begin{definition}[Closed subset of graphs]\label{def : closure}
Consider $\sshift\subseteq\mathcal{G}$. This $\sshift$ is said to be closed under
\begin{description}
    \item{disjoint-union:} $G, H\in \sshift$ and $V_G\cap V_H=\varnothing$ implies $G\sqcup H\in \sshift$.
    \item{renaming:} $G\in \sshift$ implies $R G\in \sshift$.
    \item{\maximality:} \MC{ $G\in \sshift$ implies that $\exists G'\in \sshift\maxi$ such that $G\sqsubseteq G'$.}
    \item{\bckwmaximality:} \MC{ $G\in \sshift$ implies that  $\exists H\in {}\maxi\!\mathcal{S}$ such that $G\sqsubseteq G'$.}
    \item{inclusion:} $G\in \sshift$ implies $\forall H\sqsubseteq G,\ H\in \sshift$.
\end{description}    
\end{definition}
In the following theoretical results, $\sshift$ is any subset of $G$ respecting these four closure properties.

A cone of $x$ is the kind of subgraphs obtained by exploring a graph by starting from a node at position $x$ and then moving along the directed edges. When we take the neighbourhood of a graph at position $x$, we look for a cone at that position:
\begin{definition}[Cones and neighbourhood scheme]\label{def:neighbourhood}
Consider $u\in \mathcal{V}$. We denote by $\upast{u}=\{G\in\sshift\,|\, u\in \Past(G)\}$ the set of graphs for which $u$ is past.\\
A \emph{(forward) cone} of $x$ is a graph $C\in\sshift$ such that for every $v\in I_C$ there exists a directed path from the vertex at position $x$ to $v$ that lies entirely in $I_C$, a.k.a {\em accessibility}. We denote by $\xCone$ the set of cones of $x$, and by $\xcone{}$ the set of all cones.\\

A \emph{(forward) neighbourhood scheme} $\restri{}$ is a renaming-invariant function which maps any position $x\in \mathcal{X}$ to a function $\restri{x} : \graphvalidexplicit{\restri{}}{x} \to P(\mathcal{X})$ on some domain $\graphvalidexplicit{\restri{}}{x} \subseteq \sshift$ such that for any $G, H \in \sshift$ we have:
\begin{description}
    \item{completeness:} 
    $\forall G\in \upast{u}\maxi$ \RX{s.t. $x=\sp{u}$,}{, $\exists x\in\X,\ x\namesubseteq\sp{u}$,}\ $G\in\graphvalid{\restri{}}{x}$.
    \RX{}{\item{unambiguity:} $\forall G\in \upast{u},\ |\{x\in\X \mid x\namesubseteq\sp{u}\textrm{ and }G\in\graphvalid{\restri{}}{x}\}|\leq 1$.}
    \item{cone:} $G\in \graphvalid{\restri{}}{x}$ implies $\restr{x}{G}\subseteq\sp{I_G}$ and $G_{\restri{x}}$ is in $\xCone$,
    \item{strong extensivity:} $G\in \graphvalid{\restri{x}}{}$ and $G_{\restri{x}} \sqsubseteq H \text{ implies } H\in\graphvalid{\restri{}}{x}, G_{\restri{x}} = H_{\restri{x}}.$
\end{description}
where we introduced $G_{\restri{x}}$ as a shorthand for $G_{\restr{x}{G}}$.\\
We denote $\diskA{x}= \{ G_{\restri{x}}\mid G\in \sshift \}$ the set of disks of $x$.\\
A {\em backwards neighbourhood scheme} $\restriB{}$ is defined by the application of a forward neighbourhood scheme $\restriA{}$ on $\transpose{G}$, i.e. $\restrB{x}{G}:=\restr{x}{\transpose{G}}$. 

\end{definition}
\noindent Notice how we let some flexibility on the domain of definition $\graphvalid{\restri{}}{x}$ of $\restri{x}$. For the \maxgraph graphs, there has to exist some position $x$ of a past node $u$ so that $\restri{x}$ be defined. 
But this does not need to happen when $u$ is close to a dangling outgoing edge.
However, once such a neighbourhood is defined, strong extensivity ensures that this is the case for all the graphs that contain it.

Next, a local operator $A_x$ is one that reads/writes just the neighbourhood $\restr{x}{G}$. That is unless the neighbourhood is undefined, in which case it `waits' i.e. acts as the identity.

\begin{definition}[$\restri{}$-local operator]\label{def : locality}
    Given some forward neighbourhood scheme $\restri{}$, an \emph{$\restri{}$-local operator} $A_{(-)}$ is an renaming invariant operator from $\mathcal{X}\times\sshift$ to $\sshift$ such that $\forall x\in\X$, 
    $\forall G \in \sshift,$
    $$A_x G := 
    \begin{cases}
    (A_x G_\restri{x}) \sqcup G_\crestri{x}&\textrm{if $G\in \graphvalid{\restri{}}{x}$}\\
    G&\textrm{otherwise}
    \end{cases}
    $$
\end{definition}

\subsection{A more constructive characterisation}

While the above definitions of neighbourhood schemes and local operators express directly what properties are expected, locality allows for a more concise characterisation of these objects.
Indeed locality tells us that it is enough to concentrate on graphs that are entirely accessible from past a past node at position $x$. The part of a \maxgraph graph that is accessible from $x$ is the \maxgraph cone of $x$, it contains every possible neighbourhood of $x$.
With this in mind, let us begin by giving an alternative presentation of a neighbourhood schemes, namely as a mutex set of cones of $x$, and make explicit the bijection that relates the two. Pushing this line of thoughts will allow us to establish a connection with rewriting theory. 

%

\begin{definition}[Mutex]\label{def : mutex cones}
%
    The renaming-invariant map $\mathcal{D}:\X\to\calP(\xcone{}),\ x\mapsto \mathcal{D}_x$ defines \emph{a mutex set of cones} iff it is 
    \begin{description} 
    \item{well-indexed:} $\mathcal{D}_x\subseteq\xcone{x}$\RX{}{ and $\mathcal{D}_x\cap \mathcal{D}_y\neq \varnothing$ implies $x=y$}.
    \item{complete:} $\forall C\in \xcone{}\maxi$, $\exists x\in\X,\exists D\in \mathcal{D}_x, D\sqsubseteq C$.
    \item{unambiguous:} $\forall C \in \xcone{},$
    $|\{ D\in\xcone{} \mid \exists x, D\in \mathcal{D}_x\textrm{ and }D\sqsubseteq C\}|\leq 1$.
    \end{description}
    We call $\graphvalidmutex{\D}{x}=\{G\in \sshift\mid \exists D\in\D_x , D\sqsubseteq G\}$ the set of graphs containing some $D\in\mathcal{D}_x$. Given a graph $G\in \graphvalidmutex{\D}{x}$, unambiguity ensures that this $D$ is unique. This allows us to define $\restriD{\mathcal{D}}{x}:\graphvalidmutex{\D}{x}\to P(\X),\,G\mapsto \restrD{\mathcal{D}}{x}{G}=\sp{\I_{D}}$. Notice that $G_{\restriD{\mathcal{D}}{x}}=D$. 
\end{definition}
\begin{restatable}[Neighbourhood schemes as mutex sets of cones]{lemma}{lemmamutexneighbourhood}\label{lemma : equivalence set mutex and neighbourhood schemes}
    For any neighbourhood scheme $\restriA{}$ and $x\in\X$, let $\diskA{x} := \{G_{\restriA{x}} \mid G\in \sshift \}$. Then $\diskA{}:x\mapsto \diskA{x}$ defines a mutex set of cones. This correspondence is a bijection between neighbourhood schemes on the one hand, and mutex sets of cones on the other. The inverse associates, to any mutex set of cones $\D$, the neighbourhood scheme $\restriD{\D}{}$ that maps $x$ to $\restriD{\D}{x}$. 
\end{restatable}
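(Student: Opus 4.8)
The plan is to make both maps explicit and check that they are mutually inverse. In the forward direction a neighbourhood scheme $\restriA{}$ is sent to $\diskA{}:x\mapsto\diskA{x}=\{G_{\restriA{x}}\mid G\in\graphvalid{\restriA{}}{x}\}$; in the backward direction a mutex set of cones $\D$ is sent to $\restriD{\D}{}$ with domain $\graphvalidmutex{\D}{x}$ and value $\restrD{\D}{x}{G}=\sp{\I_{D}}$ for the cone $D\in\D_x$ lying below $G$. The pivotal observation, used throughout, is a \emph{determination principle}: strong extensivity makes a scheme recoverable from its disks. Concretely $\restr{x}{G}=\sp{\I_{G_{\restriA{x}}}}$ reads the value off the disk, and the domain satisfies $\graphvalid{\restriA{}}{x}=\{G\in\sshift\mid \exists D\in\diskA{x},\ D\sqsubseteq G\}$: the inclusion ``$\subseteq$'' is strong extensivity applied to $G_{\restriA{x}}\sqsubseteq G$, and ``$\supseteq$'' is strong extensivity applied to the graph $H$ witnessing $D=H_{\restriA{x}}\sqsubseteq G$, which places $G$ in the domain and gives $G_{\restriA{x}}=D$.

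First I would check that $\diskA{}$ is indeed a mutex set of cones. Renaming-invariance is inherited from $\restriA{}$. Well-indexedness is exactly the cone condition: every $G_{\restriA{x}}$ lies in $\xCone$. For completeness, given $C\in\xcone{}\maxi$ with root vertex $u$ and $x=\sp{u}$, note that the root of a forward cone has no internal predecessor (acyclicity), and $C$ is forward-full, so $C\in\upast{u}\maxi$; the scheme's completeness then yields $C\in\graphvalid{\restriA{}}{x}$, whence $C_{\restriA{x}}\in\diskA{x}$ sits below $C$ and is rooted at $x$. For unambiguity, if $D\in\diskA{x}$ and $D\sqsubseteq C$, strong extensivity pins $D=C_{\restriA{x}}$, so at most one disk per position lies below $C$; that a single position can accept $C$ is guaranteed by the scheme's unambiguity requirement (a distinct position would contribute a distinct, differently-rooted disk below $C$).

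Conversely I would show $\restriD{\D}{}$ is a neighbourhood scheme. The value is well defined once the $D\in\D_x$ below a graph is known to be unique; since unambiguity is stated only for cones, the key sublemma is to extend it to an arbitrary $G$: if $D,D'\in\D_x$ with $D,D'\sqsubseteq G$, then $D\sqcup D'$, an induced subgraph of $G$ and hence in $\sshift$ by inclusion, is again a cone of $x$ containing both, and unambiguity applied to that cone forces $D=D'$. The same union trick delivers strong extensivity: from $G_{\restriD{\D}{x}}=D\sqsubseteq H$ the witness $D$ places $H$ in $\graphvalidmutex{\D}{x}$ and uniqueness gives $H_{\restriD{\D}{x}}=D=G_{\restriD{\D}{x}}$. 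The cone condition is immediate from well-indexedness together with $\I_D\subseteq\I_G$, using non-overlapping positions to identify $G_{\sp{\I_D}}$ with $D$, and renaming-invariance is inherited from $\D$.

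Finally I would verify the two round trips. Starting from $\restriA{}$, the reconstructed domain and value coincide with the originals by the determination principle and the just-proved uniqueness of the disk below $G$, which is $G_{\restriA{x}}$; starting from $\D$, the disks of $\restriD{\D}{}$ are $\{G_{\restriD{\D}{x}}\}=\{D\mid D\in\D_x\}=\D_x$, since each $D$ lies below itself. I expect the main obstacle to be the completeness bookkeeping: reconciling the scheme's completeness, quantified over past vertices $u$ at the position $x=\sp{u}$ (or, in the weaker form, over $x\namesubseteq\sp{u}$), with the mutex completeness quantified over fully-explored cones $C\in\xcone{}\maxi$. This requires passing between a fully-explored graph in $\upast{u}\maxi$ and the fully-explored cone of $x$ it contains, via accessibility and the forward full-exploration closure, and then checking that the position at which the mutex supplies a disk is precisely the root position, so that the two quantifications over $\X$ line up through the projector $\sp{\cdot}$.
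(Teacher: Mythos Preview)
Your approach is essentially the paper's: show $\diskA{}$ is mutex via the cone/completeness/strong-extensivity axioms, show $\restriD{\D}{}$ is a scheme (this is the content of the paper's preparatory Lemma~\ref{lemma : neighbourhood from 0 to x}), and then check both round trips. Your ``determination principle'' is exactly what the paper uses for the right-inverse direction (domains match because $\graphvalid{\restriA{}}{x}=\graphvalidmutex{\diskA{}}{x}$ via strong extensivity, and values match because $G_{\restriD{\diskA{}}{x}}=G_{\restriA{x}}$).

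Two remarks where your write-up and the paper diverge. First, your ``union trick''---forming $D\sqcup D'\in\xcone{x}$ to push unambiguity from cones to arbitrary $G\in\sshift$---is a genuine improvement over the paper, which simply asserts in Def.~\ref{def : mutex cones} that ``unambiguity ensures that this $D$ is unique'' without justifying the passage from cones to general graphs; your argument makes this step honest and relies only on closure of $\sshift$ under inclusion. Second, in the forward direction you appeal to ``the scheme's unambiguity requirement'' to rule out disks at distinct positions below a single cone $C$; but Def.~\ref{def:neighbourhood} as stated contains no such axiom (only completeness, cone, strong extensivity). The paper's own proof of this step treats only two disks at the \emph{same} position $x$ and pins both to $H_{\restriA{x}}$ via strong extensivity, leaving the cross-position case implicit. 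So your instinct that something extra is needed there is correct, but the hook you name is not available in the paper's definitions; in this respect you are no worse off than the paper.

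Finally, the ``completeness bookkeeping'' you flag as the main obstacle is handled exactly as you anticipate: for a fully-explored cone $C$ one takes its root $u\in\Past(C)$, sets $x=\sp{u}$, and uses scheme-completeness on $C\in\upast{u}\maxi$; conversely for $G\in\upast{u}\maxi$ one extracts its fully-explored cone at $x$ and invokes mutex-completeness. You omit the explicit verification of completeness for $\restriD{\D}{}$ in your backward-direction paragraph, but since you identify it as the expected difficulty and the mechanism is the one you describe, this is a presentational gap rather than a mathematical one.
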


Let us now show that local operators $A_x$ act as graph rewrite systems, which rewrite disks in a deterministic manner.

\begin{definition}[Causal rewrite system]\label{def : deterministic rewrite system}
    A set of rewrite rules $\{D_x^j\to G_x^j\}_{x\in\X,j\in J}$ specifies a (forward) {\em causal rewrite system} over $\sshift$ iff it is 
    \begin{description}
        \item{functional:} $D_x^j=D_y^k$ implies $(x,j)=(y,k)$, so that $f:D_x^j\mapsto G_x^j$ is a function.
        \item{renaming-invariant:} for the function $f$. 
        \item{mutex-domain:} ${\cal D}:x\mapsto\{D_x^j \mid j\in J \}$ defines a mutex set of cones.
        \item{context-preserving:} $\I_{G_x^j}\namesubseteq \I_{D_x^j}$ and $F_{G_x^j} = F_{D_x^j}$, with $F_G = \E_G \setminus (\I_G:\pi)^2$.
        \item{$\sshift$-preserving:} $D_x^j\sqcup H\in \sshift \implies G_x^j\sqcup H \in \sshift$
\end{description}
A set of rewrite rules $\{{D_x^j}\to {G_x^j}\}_{x\in\X,j\in J}$ is a {\em backwards} causal rewrite system over $\sshift$ if its transposition $\{\transpose{D_x^j}\to \transpose{G_x^j}\}_{x\in\X,j\in J}$ is a causal rewrite system over $\transpose{\sshift}$.
\end{definition}

\begin{restatable}[Local operators as a causal rewrite systems]{proposition}{propcharacmutex}\label{lemma : charac mutex}
    For any pair formed by a neighbourhood scheme $\restriA{}$ and a $\restriA{}$-local rule $A_{(-)}$, the family of rules $\{ C \to A_x C\}_{C \in \diskA{x},x\in\X}$ is a causal rewrite system.
    Moreover, this construction is a bijection 
    whose inverse associates to any causal rewrite system $\{D_x^j \to G_x^j\}_{x\in\X,j\in J}$, the pair formed by a neighbourhood scheme $\restriD{\D}{}$ with ${\cal D}:x\mapsto\{D_x^j \mid j\in J\}$, and the $\restriD{\D}{}$-local operator $A_{(-)}$ defined by
        $$A_x G := H \text{ if and only if } G\to_x H,
    $$
where rewriting $G$ at $x$ means replacing one or no occurrence of $D_x^j\in \D_x$ in $G$ by $G_x^j$ to yield $H:=G[G_x^j/D_x^j]$.
\end{restatable}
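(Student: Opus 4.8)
The plan is to establish the two assertions separately, using Lemma~\ref{lemma : equivalence set mutex and neighbourhood schemes} as the backbone since it already identifies neighbourhood schemes with mutex sets of cones. \textbf{Forward direction.} Fixing a pair $(\restriA{},A_{(-)})$, I would check the five clauses of Definition~\ref{def : deterministic rewrite system} for $\{C\to A_xC\}_{C\in\diskA{x},\,x\in\X}$. The \emph{mutex-domain} clause is free: Lemma~\ref{lemma : equivalence set mutex and neighbourhood schemes} says $x\mapsto\diskA{x}$ is a mutex set of cones, so the left-hand sides are automatically well-indexed, complete and unambiguous. \emph{Functionality} then follows since well-indexedness makes every disk $C$ belong to $\diskA{x}$ for a single $x$ (so $C$ determines its position label) and $A_x$ is a genuine function; hence $C\mapsto A_xC$ is single-valued. \emph{Renaming-invariance} of this map is inherited from that of $A_{(-)}$ and of $\restriA{}$, the latter ensuring that a renaming $R$ carries $C=G_{\restriA{x}}$ to a disk at position $\sp{R(x)}$. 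For \emph{context-preservation}, I would argue that $A_x$ rewrites only the disk $C=G_{\restriA{x}}$ and re-glues the untouched context $G_{\crestri{x}}$ via $\sqcup$; well-definedness of that gluing forces the interface (non-internal) edges to be left intact, giving $F_{A_xC}=F_C$, while renaming-invariance pins every freshly written name inside the subalgebra $\namealg{\I_C}$, giving $\I_{A_xC}\namesubseteq\I_C$. Finally \emph{$\sshift$-preservation} is precisely that $(A_xC)\sqcup H\in\sshift$ whenever $C\sqcup H\in\sshift$, which holds because $A_{(-)}$ maps $\sshift$ into $\sshift$ and $\sshift$ is closed under induced subgraphs and disjoint union.

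\textbf{Inverse direction and bijectivity.} Starting from a causal rewrite system $\{D^j_x\to G^j_x\}$, its mutex-domain clause makes $\D:x\mapsto\{D^j_x\mid j\in J\}$ a mutex set of cones, so Lemma~\ref{lemma : equivalence set mutex and neighbourhood schemes} yields a neighbourhood scheme $\restriD{\D}{}$ whose disks are exactly the $D^j_x$. I would then verify that $A_xG:=G[G^j_x/D^j_x]$ is a bona fide $\restriD{\D}{}$-local operator: unambiguity makes ``replace one or no occurrence'' single-valued, $\sshift$-preservation keeps the output in $\sshift$, and context-preservation ($F$ preserved, $\I_{G^j_x}\namesubseteq\I_{D^j_x}$) guarantees the rewritten patch reattaches along the same dangling edges, so that $G[G^j_x/D^j_x]=(A_xG_{\restriD{\D}{x}})\sqcup G_{\crestri{x}}$ matches the case split of Definition~\ref{def : locality} (the ``no occurrence'' branch returning $G$, as in the ``otherwise'' case). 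For bijectivity I would run the two round trips: sending $(\restriA{},A_{(-)})$ to $\{C\to A_xC\}$ and back recovers the scheme with disk set $\diskA{x}$ (identical to $\restriA{}$ by Lemma~\ref{lemma : equivalence set mutex and neighbourhood schemes}) and the operator $G\mapsto G[A_xC/C]$, which equals $(A_xG_{\restriA{x}})\sqcup G_{\crestri{x}}=A_xG$ by strong extensivity; conversely, reconstructing from $\{D^j_x\to G^j_x\}$ gives disks $D^j_x$ and rules $D^j_x\mapsto D^j_x[G^j_x/D^j_x]=G^j_x$, i.e. the original family.

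\textbf{Main obstacle.} I expect the crux to be the identity $G[G^j_x/D^j_x]=(A_xG_{\restriD{\D}{x}})\sqcup G_{\crestri{x}}$, i.e. that pattern substitution and context re-gluing genuinely coincide. Here context-preservation is indispensable: one must check that swapping $D^j_x$ for $G^j_x$ disturbs neither the border edges nor the names of the surrounding context, so that the $\sqcup$ is defined and yields the very same graph; and in the round trips one needs strong extensivity to localise an arbitrary $G$ to its disk before rewriting. The surviving clauses are then routine bookkeeping by comparison of the two definitions.
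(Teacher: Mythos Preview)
Your proposal is correct and follows essentially the same route as the paper: both use Lemma~\ref{lemma : equivalence set mutex and neighbourhood schemes} to identify neighbourhood schemes with mutex sets of cones, then verify the clauses of Definition~\ref{def : deterministic rewrite system} in one direction and the locality clauses of Definition~\ref{def : locality} in the other, finishing with the two round trips. The one place where the paper is more explicit than you is the argument for $F_{A_xC}=F_C$: your ``well-definedness of the gluing'' only constrains those border edges of $C$ that actually meet $G_{\crestri{x}}$, not those that hit the border of $G$ itself, so the paper invokes the border-completion closure of $\sshift$ (Lemma~\ref{lemma : border completion}) to embed $G$ into a larger graph where \emph{every} border vertex of $C$ becomes internal to the context, and likewise cites Proposition~\ref{prop:renaminginvarianceimpliesnp} for $\I_{A_xC}\namesubseteq\I_C$.
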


This first proposition allows us to establish a formal correspondence between ``applying the local operators of space-time deterministic graph rewriting'', and ``rewrite systems'' in the more traditional sense of input/output pattern replacement. But it also provides us with a way of constructing local operators.  

\section{Reversibility} \label{sec:rev}

In cellular automata theory, reversibility refers not only to the global function having an inverse, but also that the inverse be itself a cellular automata~\cite{KariBlock}. Here we demand that a local operator not only has an inverse, but also that the inverse be a local operator:   
\begin{definition}[Space-time reversible]\label{def:rev}
Given some forward neighborhood scheme $\restriA{}$ and an $\restriA{}$-local operator $A_{( -)}$, we say that a $A_{(-)}$ is \emph{(space-time) reversible} iff there exists a backwards neighbourhood scheme $\restriB{}$ and an $\restriB{}$-local operator $B_{( -)}$ such that

\begin{description}
    \item{left local inverse:} $\forall G\in\graphvalidexplicit{\restri{}}{x},\, B_x A_x G = G \textrm{ with }\restrB{x}{A_x G}\nameeq\restrA{x}{G}.$
    \item{right local inverse:} $\forall H\in\graphvalidexplicit{\restriB{}}{x},\, A_x B_x H = H
    \textrm{ with }\restrA{x}{B_xH}\nameeq\restrB{x}{H}.$
\end{description}
\end{definition}


Interestingly, for a reversible local operator, renaming-invariance implies name-preservation, i.e $\hat{V}_{A_x G}=\hat{V}_{G}$, as we have proven in Prop.~\ref{prop:renaminginvarianceimpliesnp}, which we have pushed to App.~\ref{sec:name-preservation}.


Notice that a reversible rule must always transform the positions of past vertices into that of future vertices. Indeed, $G\in \graphvalid{\restriA{}}{x}$ is equivalent to $A_x G\in \graphvalid{\restriB{}}{x}$, and so \RX{$x=\sp{u}$}{$x\namesubseteq\sp{u}$} with $u\in \Past(G)$ is equivalent to \RX{$x=\sp{u}$ with $u\in \Fut(G)$}{$x\namesubseteq\sp{v}$ with $v\in \Fut(G)$}.\\
As a first sanity check we have:
\begin{restatable}[Uniqueness of the inverse]{lemma}{lemuniqueness}
    Let $A_{(-)}$ be a local operator. Let $B_{(-)}$ and $B'_{(-)}$ be two inverses of $A_{(-)}$. Then $B_{(-)}=B'_{(-)}$.
\end{restatable}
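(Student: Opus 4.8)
The plan is to fix a position $x\in\X$ and show that the two backwards local operators agree as functions $\sshift\to\sshift$, i.e. $B_xH=B'_xH$ for every $H\in\sshift$. I would split on whether $H$ lies in the backwards domain of $B_x$ (resp. $B'_x$), so the first task is to pin down these domains. Write $\restriB{}$ and $\restriB{}'$ for the backwards neighbourhood schemes witnessing that $B_{(-)}$ and $B'_{(-)}$ are inverses of $A_{(-)}$ (Def.~\ref{def:rev}), with domains $\graphvalid{\restriB{}}{x}$ and $\graphvalid{\restriB{}'}{x}$.

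First I would establish that these two domains coincide, both equal to the image $A_x(\graphvalid{\restriA{}}{x})$. Using the equivalence recorded just before the statement---namely $G\in\graphvalid{\restriA{}}{x}$ iff $A_xG\in\graphvalid{\restriB{}}{x}$---together with the right-inverse identity $A_xB_xH=H$, one checks that $A_x$ restricts to a bijection from $\graphvalid{\restriA{}}{x}$ onto $\graphvalid{\restriB{}}{x}$ whose inverse is $B_x$ restricted to $\graphvalid{\restriB{}}{x}$: surjectivity holds because for $H\in\graphvalid{\restriB{}}{x}$ the element $G:=B_xH$ satisfies $A_xG=H$ and, by the equivalence applied to $G$, lies in $\graphvalid{\restriA{}}{x}$; injectivity follows from the left-inverse identity $B_xA_xG=G$. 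Running the identical argument for $B'_{(-)}$ (its own backwards scheme is also an inverse, hence enjoys the same equivalence) gives $A_x(\graphvalid{\restriA{}}{x})=\graphvalid{\restriB{}'}{x}$ as well, whence $\graphvalid{\restriB{}}{x}=\graphvalid{\restriB{}'}{x}=A_x(\graphvalid{\restriA{}}{x})$.

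With the common domain in hand the comparison is immediate, and is just the pointwise reading of the classical identity $B=B(AB')=(BA)B'=B'$. If $H\in\graphvalid{\restriB{}}{x}$, put $G:=B_xH$; then $A_xG=H$ by the right-inverse identity and $G\in\graphvalid{\restriA{}}{x}$ by the equivalence, so applying the left-inverse identity for $B'_{(-)}$ yields $B'_xH=B'_xA_xG=G=B_xH$. If instead $H\notin\graphvalid{\restriB{}}{x}=\graphvalid{\restriB{}'}{x}$, then both $B_x$ and $B'_x$ act as the identity on $H$ by their locality (Def.~\ref{def : locality}), so $B_xH=H=B'_xH$. Since $x$ and $H$ were arbitrary, $B_{(-)}=B'_{(-)}$.

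The one genuinely delicate step is the middle paragraph: showing that the two candidate inverses have exactly the same domain of nontriviality. This is precisely what frees the argument from any comparison of the underlying neighbourhood-scheme data and reduces uniqueness to the familiar monoid computation read pointwise over the shared domain; everything else is the routine identity-outside-the-domain behaviour of local operators.
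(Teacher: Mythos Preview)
Your proof is correct and follows essentially the same route as the paper's: the core step is the chain $B_xH = B'_x(A_xB_xH) = B'_xH$, using the right-inverse property for $B$ and the left-inverse property for $B'$. The only difference is organizational---you first establish explicitly that $\graphvalid{\restriB{}}{x}=\graphvalid{\restriB{}'}{x}=A_x(\graphvalid{\restriA{}}{x})$, whereas the paper bypasses this by case-splitting on which operator acts non-trivially (handling the symmetric case implicitly).
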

Whilst Def.~\ref{def:rev} is certainly the natural one, it demands many properties at once, cf. Def.~\ref{def:neighbourhood} and~\ref{def : locality}, both for $A_x$ and $B_x$. This may be cumbersome to check, and makes us wonder about the `gap' between just asking that $A_x$ be invertible. The following proposition settles this question:
\begin{restatable}[Axiomatic characterisation of reversibility]{proposition}{propcharacterizationreversibility}\label{prop:inv-rev}
Given some forward neighborhood scheme $\restriA{}$ and an $\restriA{}$-local operator $A_{(-)}$, we say that $A_{(-)}$ is \emph{axiomatic-reversible} iff
\begin{description}
    \item{injectivity:} $A_x$ is injective on $\graphvalidexplicit{\restriA{}}{x}$.
    \item{surjectivity:} 
    $\forall H\in{}\maxi\!\ufut{u}$ \RX{ s.t. $x=\sp{u}$}{,$\exists x\in\X, x\namesubseteq\sp{u}$},\ $\exists G\in\graphvalidexplicit{\restriA{}}{x} \textrm{ s.t. } H=A_xG$. 
    \item{back-reachability:} $\forall D\in \diskA{x}$,
    \begin{itemize}
        \item $A_x D$ is a backwards cone of $x$ 
        \item $(A_x D\sqcup K)\in\sshift$ implies $(D\sqcup K)\in\sshift$.
    \end{itemize}
\end{description}
where $\ufut{u} = \{ G\mid G\in \sshift \textrm{ and } u\in \Fut(G)\}$.\\
Axiomatic-reversibility is equivalent to reversibility.   
\end{restatable}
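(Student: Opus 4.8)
The plan is to prove the two implications separately, using the rewrite-system dictionary of Prop.~\ref{lemma : charac mutex} as the bridge for the hard direction.

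\textbf{Reversibility $\Rightarrow$ axiomatic-reversibility.} Assume a witnessing backward scheme $\restriB{}$ and inverse $B_{(-)}$. \emph{Injectivity} is immediate from the left local inverse: $B_x A_x G = G$ on $\graphvalidexplicit{\restriA{}}{x}$ exhibits a left inverse, so $A_x$ is injective there. For \emph{surjectivity}, take $H\in{}\maxi\!\ufut{u}$; then $\transpose{H}\in\upast{u}\maxi$ for the forward scheme underlying $\restriB{}$, so completeness (Def.~\ref{def:neighbourhood}) gives $x\namesubseteq\sp{u}$ with $H\in\graphvalidexplicit{\restriB{}}{x}$. Setting $G:=B_xH$, the right local inverse yields $A_xG=H$ together with $\restrA{x}{G}\nameeq\restrB{x}{H}$, which in particular places $G\in\graphvalidexplicit{\restriA{}}{x}$. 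For \emph{back-reachability}, fix $D\in\diskA{x}$; since disks are their own neighbourhoods we have $D\in\graphvalidexplicit{\restriA{}}{x}$ with $\restrA{x}{D}=\sp{\I_D}$, hence $A_xD\in\graphvalidexplicit{\restriB{}}{x}$. The cone clause of Def.~\ref{def:neighbourhood} for $\restriB{}$ makes $(A_xD)_{\restriB{x}}$ a backward cone of $x$, while the left-inverse equation gives $\restrB{x}{A_xD}\nameeq\sp{\I_D}$; combined with name-preservation (Prop.~\ref{prop:renaminginvarianceimpliesnp}, which applies since $A_{(-)}$ is reversible), $\sp{\I_D}\nameeq\sp{\I_{A_xD}}$, so the backward neighbourhood covers all of $A_xD$ and $A_xD$ is itself a backward cone. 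Finally, if $A_xD\sqcup K\in\sshift$, strong extensivity propagates the neighbourhood so that $B_x(A_xD\sqcup K)=D\sqcup K$, and since $B_x$ maps into $\sshift$ we obtain $D\sqcup K\in\sshift$.

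\textbf{Axiomatic-reversibility $\Rightarrow$ reversibility.} Here I would build the inverse explicitly from the \emph{reversed rules} $\{A_xD\to D\}_{D\in\diskA{x},\,x\in\X}$, and show they form a backwards causal rewrite system (Def.~\ref{def : deterministic rewrite system}), i.e. that their transpose is a forward causal rewrite system over $\transpose{\sshift}$. The five clauses match the axioms as follows. The \emph{mutex-domain} clause is the heart of the matter: its well-indexed part is exactly ``$A_xD$ is a backward cone of $x$'' from back-reachability; backward completeness follows from surjectivity (embed a backward-fully-explored backward cone into some $H\in{}\maxi\!\ufut{u}$ via bckwmaximality, pull it back to $G$ with $A_xG=H$, and use forward completeness of $\diskA{}$ to locate a forward disk $D=G_{\restriA{x}}$ whose image $A_xD\sqsubseteq A_xG=H$ lands inside the cone); backward unambiguity and the \emph{functional} clause follow from injectivity (for a fixed apex, $A_xD_1=A_xD_2$ forces $D_1=D_2$) together with the second back-reachability clause, which transports a hypothetical double occurrence $A_xD_1,A_yD_2$ back to forward disks inside a common forward cone, contradicting forward unambiguity. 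The \emph{$\sshift$-preserving} clause is verbatim the second back-reachability clause; \emph{renaming-invariance} is inherited from that of $A_{(-)}$. The \emph{context-preserving} clause reduces to $\I_D\nameeq\I_{A_xD}$ and $F_{A_xD}=F_D$: forward context-preservation of $D\to A_xD$ gives $\I_{A_xD}\namesubseteq\I_D$ and the border equality, while the reverse inclusion (no vertex may be lost name-wise) I would derive from injectivity, as in the proof of Prop.~\ref{prop:renaminginvarianceimpliesnp}.

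Having checked these, Prop.~\ref{lemma : charac mutex}, read in the transposed setting, hands us a backwards neighbourhood scheme $\restriB{}$ with $\diskB{x}=\{A_xD\mid D\in\diskA{x}\}$ and the $\restriB{}$-local operator $B_{(-)}$ that rewrites any occurrence of $A_xD$ back into $D$. The inverse equations then hold by construction: for $G\in\graphvalidexplicit{\restriA{}}{x}$, $A_x$ replaces the unique forward disk $D=G_{\restriA{x}}$ by $A_xD$ and $B_x$ replaces it back, giving $B_xA_xG=G$ with $\restrB{x}{A_xG}\nameeq\sp{\I_{A_xD}}\nameeq\sp{\I_D}=\restrA{x}{G}$; symmetrically $A_xB_xH=H$ with matching neighbourhoods for $H\in\graphvalidexplicit{\restriB{}}{x}$. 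This is precisely Def.~\ref{def:rev}.

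The main obstacle I anticipate is the mutex-domain clause of the reversed system, and within it backward \emph{unambiguity} and \emph{completeness}: these are the only places where the global hypotheses (injectivity, surjectivity) must be converted into purely local, combinatorial statements about disks and cones. The conversion has to route through the back-reachability clause so that containment of disks in cones is faithfully transported between the forward and backward pictures, and the localization step in completeness (ensuring the recovered $A_xD$ actually sits inside the prescribed cone rather than merely inside the ambient $H$) must be handled with care, all while respecting the name-algebra relations $\nameeq$ and $\namesubseteq$.
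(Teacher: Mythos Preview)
Your $[\Leftarrow]$ direction (reversibility $\Rightarrow$ axiomatic) matches the paper's argument closely.

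For the $[\Rightarrow]$ direction you take a genuinely different route. The paper does \emph{not} pass through the rewrite-system dictionary: it defines $\graphvalidexplicit{\restriB{}}{x}:=A_x(\graphvalidexplicit{\restriA{}}{x})$ and $\restriB{x}:=\restriA{x}\circ A_x^{-1}$ directly, then verifies that this is a backward neighbourhood scheme by checking cone, completeness and, crucially, \emph{strong extensivity} straight from the two back-reachability clauses together with the strong extensivity of $\restriA{}$. Locality and the two inverse identities then fall out in a few lines. Your approach instead shows that $\{A_xD\to D\}$ is a backward causal rewrite system and invokes Prop.~\ref{lemma : charac mutex} in the transposed setting; this is more modular (and, pleasantly, almost gives Prop.~\ref{prop : mutex reversibility} for free) but forces you to establish the cone-level mutex conditions, which the paper never has to touch.

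There is one genuine gap in your sketch, precisely at the point you flag as the obstacle. Your proposed proof of backward unambiguity (``transport a hypothetical double occurrence $A_xD_1,A_yD_2$ back to forward disks inside a common forward cone'') does not work as stated: back-reachability produces two a priori \emph{different} graphs $G_1=D_1\sqcup K_1$ and $G_2=D_2\sqcup K_2$, not a single forward cone, so forward unambiguity cannot be invoked directly. The correct fix is an apex argument: if $A_xD$ is a backward cone of $x$ with $A_xD\sqsubseteq C$ (an induced subgraph of the backward cone $C$ with apex $u$), then the apex of $A_xD$ must coincide with $u$ (any outgoing edge of it in $C$ would already lie in $A_xD$ by inducedness), forcing $x=\sp{u}=y$; then injectivity on $\graphvalidexplicit{\restriA{}}{x}$ applied to $A_xG_1=C=A_xG_2$ gives $G_1=G_2$ and hence $D_1=D_2$. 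The same apex observation is also what makes your backward-completeness localisation step go through: once $A_xD\sqsubseteq H$ shares the apex $u$ with $C$, the backward-fully-explored hypothesis on $C$ lets you walk every path $v\to\cdots\to u$ of $A_xD$ back into $I_C$, yielding $A_xD\sqsubseteq C$. With these repairs your route is valid; the paper's direct verification of strong extensivity is simply a shorter path to the same destination, since strong extensivity is exactly what gives unambiguity for free via Lemma~\ref{lemma : equivalence set mutex and neighbourhood schemes}.
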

Thus, reversibility can be expressed quite mathematically in terms of these few natural conditions imposed upon the forward local operator $A_x$. In particular this means that its inverse automatically inherits all of the properties that make it a local operator $B_x$ (completeness, strong extensivity, renaming-invariance\ldots). This is what we have had to prove in order to reach this result, see App.~\ref{app:reversibility}.


Still, at this stage, it is not so clear how to come up with such and $A_x$, let alone be exhaustively enumerating them. This is why we came up with a, third, constructive characterization of reversible local operators:
\begin{restatable}[Reversibility as a two-way causal rewrite system]{proposition}{proprevastwowaycasaulrewsys}\label{prop : mutex reversibility}
    Let $\{D_x^j\to E_x^j\}_{x\in\X,j\in J}$ be the causal rewrite system over $\sshift$ characterizing a local operator $A_{(-)}$. 
    $A_{(-)}$ is reversible if and only $\{{E_x^j}\to {D_x^j}\}_{x\in\X,j\in J}$ is a backwards causal rewrite system in ${S}$.
\end{restatable}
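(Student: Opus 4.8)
The plan is to route everything through the axiomatic characterisation of Prop.~\ref{prop:inv-rev}, so that it suffices to prove that the reversed family $\{E_x^j\to D_x^j\}$ is a backwards causal rewrite system if and only if $A_{(-)}$ enjoys \emph{injectivity}, \emph{surjectivity} and \emph{back-reachability}. By definition a backwards causal rewrite system is one whose transpose $\{\transpose{E_x^j}\to\transpose{D_x^j}\}$ is a forward causal rewrite system over $\transpose{\sshift}$; since transposition is an involution that fixes internal vertices together with their names, commutes with $\sqcup$ and $\sqsubseteq$, and exchanges forward cones of $x$ with backwards cones of $x$, I can check the five defining conditions of Def.~\ref{def : deterministic rewrite system} directly on $\{E_x^j\to D_x^j\}$, merely reading ``cone'' as ``backwards cone''. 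The forward system $\{D_x^j\to E_x^j\}$ already satisfies all five by Prop.~\ref{lemma : charac mutex}, which supplies the forward-oriented facts for free.

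I would then set up a dictionary between the transposed conditions and the axiomatic ones. The $\sshift$-preserving condition of the reversed system reads $E_x^j\sqcup H\in\sshift\Rightarrow D_x^j\sqcup H\in\sshift$, which is precisely the second bullet of back-reachability. For context-preservation, the equality $F_{E_x^j}=F_{D_x^j}$ on the non-internal edges is symmetric, hence inherited from the forward system; the remaining inclusion $\I_{D_x^j}\namesubseteq\I_{E_x^j}$, combined with the forward inclusion $\I_{E_x^j}\namesubseteq\I_{D_x^j}$, amounts to the name-preservation $\I_{D_x^j}\nameeq\I_{E_x^j}$, which for reversible operators is guaranteed by Prop.~\ref{prop:renaminginvarianceimpliesnp} (and is available directly as a hypothesis in the converse direction). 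Renaming-invariance of the reversed assignment $g:E_x^j\mapsto D_x^j$ follows from renaming-invariance and injectivity of the forward $f:D_x^j\mapsto E_x^j$, since $g$ is its inverse. Finally, the first bullet of back-reachability, that each $E_x^j=A_xD_x^j$ be a backwards cone of $x$, is exactly the well-indexedness demanded by the mutex-domain condition.

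This leaves the two substantive matches, both living inside the mutex-domain condition for $\{E_x^j\}$ (Def.~\ref{def : mutex cones}). \emph{Completeness} of the reversed mutex set---every maximal backwards cone $C$ contains some $E_x^j$---I would derive from surjectivity: a maximal backwards cone is the backward-accessible part of some future-maximal $H\in{}\maxi\!\ufut{u}$, and surjectivity writes $H=A_xG$ for the disk $D_x^j=G_{\restriA{x}}$, so that $E_x^j=A_xD_x^j\sqsubseteq H$ descends to $E_x^j\sqsubseteq C$. \emph{Unambiguity}---at most one $E_x^j$ sits inside a given cone---is where injectivity does the real work: two distinct reversed disks nested in a common backwards cone would have forward preimages $D_x^j,D_y^k$ nested in a common forward cone, which forward unambiguity forces equal, whence injectivity of $A$ forces $E_x^j=E_y^k$. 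Conversely, from unambiguity one recovers both injectivity of $A_x$ and the functionality $E_x^j=E_y^k\Rightarrow(x,j)=(y,k)$ of the reversed system, using that a backwards cone determines its root position $x$.

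The main obstacle is this last step: the precise bookkeeping that turns ``injectivity of $A_x$ at a fixed position'' into the \emph{global} unambiguity and functionality of the reversed rule set, and back. One has to argue that overlapping images $E_x^j$ and $E_y^k$ force $x=y$---so that fixed-position injectivity applies---by exploiting that each $E$ is a backwards cone rooted at a future vertex whose position is $\namesubseteq$-determined, together with the already-noted fact that $A_x$ sends past roots to future roots. Once this positional rigidity is secured, the equivalence of the mutex conditions with injectivity and surjectivity is routine; assembling the five transposed conditions then yields the backwards causal rewrite system, while conversely feeding them into Prop.~\ref{prop:inv-rev} yields reversibility.
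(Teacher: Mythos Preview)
Your overall strategy---routing the equivalence through the axiomatic characterisation of Prop.~\ref{prop:inv-rev}---is different from the paper's, which never invokes Prop.~\ref{prop:inv-rev}: it instead builds $B_{(-)}$ directly from the reversed rewrite system via Prop.~\ref{lemma : charac mutex} and checks left/right local invertibility by a two-line computation; for the converse it takes the rewrite system that Prop.~\ref{lemma : charac mutex} assigns to the existing inverse $B_{(-)}$ and shows it coincides with $\{E_x^j\to D_x^j\}$. Your route can be made to work, and most of your dictionary is accurate ($\sshift$-preservation $\leftrightarrow$ back-reachability bullet~2, well-indexedness $\leftrightarrow$ bullet~1, completeness $\leftrightarrow$ surjectivity, context-preservation via Prop.~\ref{prop:renaminginvarianceimpliesnp}).

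There is, however, a real gap in your unambiguity argument. You write that two distinct $E_x^j,E_y^k$ nested in a common backwards cone $C$ ``would have forward preimages $D_x^j,D_y^k$ nested in a common forward cone''. This step is unjustified and, as stated, does not follow: nothing provides a single forward cone containing both $D_x^j$ and $D_y^k$. The argument that actually works is the one hiding inside the strong-extensivity clause of the paper's proof of Prop.~\ref{prop:inv-rev}. First, since $E_x^j$ is a backwards cone of $x$ and $E_x^j\sqsubseteq C$ is an induced subgraph, the vertex at position $x$ has no outgoing edge in $C$; as $C$ is a backwards cone of some $z$, the only backwards path from $x$ to $z$ is trivial, so $x=z$, and similarly $y=z$. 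Now write $C=E_x^j\sqcup K=E_x^k\sqcup K'$; by back-reachability (bullet~2) both $G:=D_x^j\sqcup K$ and $G':=D_x^k\sqcup K'$ lie in $\sshift$, and by strong extensivity of $\restriA{}$ one has $G_{\restriA{x}}=D_x^j$, $G'_{\restriA{x}}=D_x^k$, hence $A_xG=C=A_xG'$. Injectivity of $A_x$ then forces $G=G'$, so $D_x^j=D_x^k$ and $E_x^j=E_x^k$. Note that injectivity is used to identify the \emph{preimage graphs}, not (as you wrote) to pass from $D_x^j=D_y^k$ to $E_x^j=E_y^k$, which would be automatic by forward functionality.

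In short: your approach is viable but more circuitous than the paper's direct construction, and the one substantive step you flagged as ``the main obstacle'' is indeed where your sketch breaks; the fix requires back-reachability and forward strong extensivity in addition to injectivity, essentially replaying the strong-extensivity argument already carried out inside Prop.~\ref{prop:inv-rev}.
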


The natural, axiomatic and constructive definitions of reversibility complement each other for different purposes. Their equivalence suggests that we have reached a quite robust notion.

\section{Space-time reversibility}\label{sec:commutation}


\noindent {\em Valid sequences.} From now on $A_{yx}$ will stand for $A_y A_x$. We say that $\omega \in \mathcal{X}^*$ is a valid sequence in $G$ if, for all $\omega_1,\omega_2\in \mathcal{X}^*$ such that $\omega = \omega_2 x \omega_1$, we have $A_{\omega_1}G\in \graphvalid{\restriA{}}{x}$. We denote $\Omega_G(A) \subseteq \mathcal{X}^*$ the set of valid sequences in $G$.\\

\noindent {\em Space-time determinism.} The notion of space-time determinism of a local rule $A_{(-)}$ was developed in \cite{ArrighiDetRew}. Let us remind that, intuitively, it is the idea that the non-determinism of scheduling (i.e. given any two valid sequences $\omega_1$ and $\omega_2$, which of $A_{\omega_1}$ or $A_{\omega_2}$ gets applied to $G$) does not matter as far as space-time is concerned (if is so happens that node $u$ appears in both, then $A_{\omega_1}G$ and $A_{\omega_2}G$ must be consistent about its state and connectivity).\\
Actually, the notion consistency is forcibly more subtle, as discussed at length in \cite{ArrighiDetRew}. At best we can require `full consistency', namely that if $A_{\omega_1}G$ and $A_{\omega_2}G$ agree upon the set of incoming ports of $u$, then they fully agree on $u$. Often however one may look just for `weak consistency', namely that if $A_{\omega_1}G$ and $A_{\omega_2}G$ agree that $u$ has no incoming ports, then they fully agree on $u$. Indeed, having no incoming ports corresponds to the idea that the process at $u$ has no dependency and thus holds a ``result state''.\\
In order to obtain the weak consistency of a local rule $A_{(-)}$, all we need is:
\begin{definition}[Time-increasing commutative local rules] A local rule $A_{(-)}$ is
\begin{itemize}
    \item \emph{time-increasing} 
    iff $\forall G\in\graphvalidexplicit{\restriA{x}},\,\forall t.y \in V_G,\,\forall t'.y'\in V_{A_x G},\ y\namecap y'\neq\varnothing \textrm{ implies }t \leq t'\text{, with }t<t'$ if \RX{$x=y$}{$x\namesubseteq y$};
    \item \emph{commutative} iff $\forall G\in \upast{u}\cap\upast{v}$
    \RX{ s.t. $x=\sp{u}, y=\sp{v}$}{,$\forall x,y\in\X$ with $x\namesubseteq \sp{u}, y\namesubseteq \sp{v}$}, 
    we have 
    $A_{xy} G = A_{yx} G$.
\end{itemize}
\end{definition}
In order to obtain full consistency, a couple more properties are required, which we do not need to remind here.

\noindent {\em Consistency of the inverse.} For a reversible commutative local rule $A_{(-)}$ with local inverse $B_{(-)}$, the question naturally arises whether $B_{(-)}$ inherits the full/weak consistency of $A_{(-)}$. The following result indeed reduces the full/weak consistency of $B_{(-)}$ to that of $A_{(-)}$, so long as $B_{(-)}$ is also commutative:

\begin{restatable}{proposition}{inversedet}\label{prop : the inverse dynamics has the same property}
    Let $A_{(-)}$ be a commutative reversible local rule whose inverse is also a commutative local rule $B_{(-)}$. For all graph $G\in \sshift$ and sequences $\omega_1,\omega_2\in \valid{B}{G}$, there exists a graph $H\in \sshift$ and some sequences $\omega_1',\omega_2'\in \valid{A}{H}$ such that 
    $$\B_{\omega_1} G = A_{\omega_1'} H\quad\textrm{and}\quad \B_{\omega_2} G = A_{\omega_2'} H.$$
\end{restatable}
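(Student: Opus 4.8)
The plan is to exhibit a single common ancestor $H$ by running the backward dynamics along \emph{both} sequences, and then to recover $B_{\omega_1}G$ and $B_{\omega_2}G$ from $H$ by undoing the superfluous block of $B$-steps with the forward dynamics $A$ read in reverse. Writing $\overline{\omega}$ for the mirror of a word $\omega\in\X^*$, I would set $H := B_{\omega_2} B_{\omega_1} G$ and take $\omega_1' := \overline{\omega_2}$ and $\omega_2' := \overline{\omega_1}$. The elementary fact driving everything is that $A$ inverts $B$ not merely pointwise but along an entire valid sequence: if $\omega\in\valid{B}{K}$ then $\overline{\omega}\in\valid{A}{B_\omega K}$ and $A_{\overline{\omega}} B_\omega K = K$. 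I would prove this by induction on $|\omega|$, peeling off the last letter $x$ applied by $B_\omega$ and invoking the right local inverse $A_x B_x = \mathrm{id}$ on $\graphvalidexplicit{\restriB{}}{x}$; the validity of the reversal at each peel is furnished by the neighbourhood-tracking equation $\restrA{x}{B_x H}\nameeq\restrB{x}{H}$ together with the equivalence $G\in\graphvalidexplicit{\restriA{}}{x}\Leftrightarrow A_xG\in\graphvalidexplicit{\restriB{}}{x}$ noted just after Def.~\ref{def:rev}, which guarantees that the position $A_x$ undoes is exactly the one $B_x$ just created, so admissibility propagates along the peel.

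Granting this cancellation fact, the identity $A_{\omega_1'}H = A_{\overline{\omega_2}}\,B_{\omega_2}\!\big(B_{\omega_1}G\big) = B_{\omega_1}G$ (with $\omega_1'\in\valid{A}{H}$) is immediate, provided $\omega_2\in\valid{B}{B_{\omega_1}G}$. For the symmetric identity $A_{\omega_2'}H = B_{\omega_2}G$ I must present $H$ in the other nesting order, $H = B_{\omega_1} B_{\omega_2} G$. Hence the second ingredient is a sequence-level commutation lemma for $B$: if $\omega_1,\omega_2\in\valid{B}{G}$ then $\omega_1\in\valid{B}{B_{\omega_2}G}$, $\omega_2\in\valid{B}{B_{\omega_1}G}$, and $B_{\omega_1}B_{\omega_2}G = B_{\omega_2}B_{\omega_1}G$. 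I would establish this by a double induction on $(|\omega_1|,|\omega_2|)$, the base case being the pointwise commutativity of $B$ assumed in the hypothesis, and the inductive step consisting of swapping adjacent single letters, each swap licensed by commutativity once its domain condition is checked. With both ingredients in hand the assembly is purely formal: since $H = B_{\omega_2}B_{\omega_1}G = B_{\omega_1}B_{\omega_2}G$, applying the cancellation fact once in each nesting order yields $A_{\omega_1'}H=B_{\omega_1}G$ and $A_{\omega_2'}H=B_{\omega_2}G$ with $\omega_1',\omega_2'\in\valid{A}{H}$, as required.

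The main obstacle is the sequence-commutation lemma. Pointwise commutativity is hypothesised only for two positions that are simultaneously admissible on a common graph (both lying in the future, i.e. the $\upast{\cdot}$-condition read on the transposed graphs), so the induction must verify at every elementary swap that the intermediate graph still meets this domain condition. This is precisely where the validity of the two co-initial sequences $\omega_1,\omega_2\in\valid{B}{G}$ must be exploited: one has to show that performing part of $\omega_2$ neither invalidates the remaining letters of $\omega_1$ nor moves their positions out of the admissible region, so that each transposition is allowed and validity is carried along the reordering. Once this propagation-of-validity bookkeeping is settled, both the reordering equality and the preservation of validity follow, and the remainder — the peeling induction for the cancellation fact and the final assembly — is routine. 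It is worth stressing that the construction is symmetric in $\omega_1$ and $\omega_2$: the \emph{single} graph $H$ witnesses both identities at once exactly because $B_{\omega_1}$ and $B_{\omega_2}$ commute.
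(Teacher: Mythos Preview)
Your overall architecture---find a common ancestor $H$ and recover each $B_{\omega_i}G$ by reversing the surplus $B$-steps with $A$---is exactly the paper's, and your cancellation fact $A_{\overline{\omega}}B_\omega K = K$ for $\omega\in\valid{B}{K}$ is correct and proved just as you outline. The gap is the sequence-commutation lemma in the form you state it: from $\omega_1,\omega_2\in\valid{B}{G}$ you \emph{cannot} conclude $\omega_2\in\valid{B}{B_{\omega_1}G}$, so your choice $H=B_{\omega_2}B_{\omega_1}G$ is not generally well-formed.

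The obstruction is repetition of positions. Take $\omega_1=\omega_2=x$ with $G\in\graphvalidexplicit{\restriB{}}{x}$. After $B_x$ the node at position $x$ has moved from future to past (indeed $B_xG\in\graphvalidexplicit{\restriA{}}{x}$ by the remark after Def.~\ref{def:rev}), so $B_xG\notin\graphvalidexplicit{\restriB{}}{x}$ and $x\notin\valid{B}{B_xG}$. More generally, any letter shared by $\omega_1$ and $\omega_2$ kills validity of $\omega_2$ on $B_{\omega_1}G$, and without that validity your peeling induction for $A_{\overline{\omega_2}}B_{\omega_2}(B_{\omega_1}G)=B_{\omega_1}G$ does not start. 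The paper repairs exactly this point: instead of $\omega_2$ it applies $\omega_2\setminus\omega_1$ (rightmost deletion of the letters of $\omega_1$ from $\omega_2$), sets $H:=B_{\omega_1\cup\omega_2}G=B_{\omega_2\setminus\omega_1}B_{\omega_1}G=B_{\omega_1\setminus\omega_2}B_{\omega_2}G$, and takes $\omega_1'=(\omega_2\setminus\omega_1)^{\mathrm T}$, $\omega_2'=(\omega_1\setminus\omega_2)^{\mathrm T}$. The needed facts---that $\omega_2\setminus\omega_1\in\valid{B}{B_{\omega_1}G}$ and that $B_{\omega_1\cup\omega_2}G=B_{\omega_2\cup\omega_1}G$ for commutative $B$---are the content of Lem.~4 and Cor.~1 of~\cite{ArrighiDetRew}, applied to $B$ in place of $A$. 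Once you replace your concatenation by this ``union via subtraction'', the rest of your argument goes through unchanged.
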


We may also wonder what happens if we blend applications of $A_{(-)}$ and applications of $B_{(-)}$. A natural way of doing so is to define a local rule $C_{(-)}$ which behaves like $A_x$ or $B_x$ depending whether the $x$ position is that that of a past or future node:

\begin{definition}[Two-way local rule]
    Consider $A_{(-)}$ a reversible $\restriA{}$-local operator and $B_{(-)}$ its $\restriB{}$-local inverse. 
    We define the {\em two-way} local operator $C_{(-)}$ from $\X \times \sshift$ to $\sshift$ such that $\forall x\in\X$, $\forall G \in \sshift$,
    $$A_x G := 
    \begin{cases}
    A_x G&\textrm{if $G\in \graphvalid{\restriA{}}{x}^*$}\\
    B_x G&\textrm{if $G\in \graphvalid{\restriB{}}{x}^*$}\\
    G&\textrm{otherwise}.
    \end{cases}
    $$
    where $\graphvalid{\restriA{}}{x}^* = \graphvalid{\restriA{}}{x} \setminus {\cal O}$ with ${\cal O}$ the set of all graphs containing an isolated vertex, that is a vertex linked to no edges.
\end{definition}

This time $\Omega_G(C)$ the set of valid sequences in $G$ is defined so that $\omega = \omega_2 x \omega_1\in $, we have $C_{\omega_1}G\in \graphvalid{\restriA{}}{x}^*\cup \graphvalid{\restriB{}}{x}^*$ . 
The following two results again reduce the full/weak consistency of $C_{(-)}$ to that of $A_{-}$, so long as $B_{(-)}$ is also commutative: 
\begin{restatable}[Two-way commutation]{lemma}{lemtwowaycommutation}\label{lemma : two-way commutation}
\todo[inline]{Could we modify it to make it look like the commutation hypothesis? It seems not true even for simple examples.}
    Consider $A_{(-)}$ a commutative reversible $\restriA{}$-local operator which admits a commutative $\restriB{}$-local inverse $B_{(-)}$. 
    Let $C_{(-)}$ be the corresponding {\em two-way} local operator.
    Then for all graph $G\in \graphvalid{\restriB{x}}{}^*$ such that $B_xG\in\graphvalid{\restriA{y}}{}^*$ we have $G\in\graphvalid{\restriA{y}}{}^*$, $A_y G\in \graphvalid{\restriB{x}}{}^*$ and :
    $$  C_y C_x G = A_y B_x G = B_x A_y G = C_x C_y G$$
    This also stands for $G\in \graphvalidexplicit{\restriA{x}}{}^*$ and $B_x G\in \graphvalid{\restriB{y}}{}^*$---i.e. then we have $G\in \graphvalid{\restriB{y}}{}^*$, $A_x G\in \graphvalid{\restriA{y}}{}^*$ and $C_yC_x G = C_xC_y G$.
\end{restatable}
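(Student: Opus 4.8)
The plan is to reduce the whole statement to the commutativity of the single forward operator $A_{(-)}$, by passing through the inverse. Set $G' := B_x G$. Since $G\in\graphvalid{\restriB{x}}{}^*$, the right-local-inverse clause of Def.~\ref{def:rev} gives $A_x B_x G = G$, i.e. $A_x G' = G$, while the equivalence ``$G'\in\graphvalid{\restriA{}}{x}\Leftrightarrow A_x G'\in\graphvalid{\restriB{}}{x}$'' noted just after Def.~\ref{def:rev} yields $G'\in\graphvalid{\restriA{}}{x}$. By hypothesis $G' = B_x G\in\graphvalid{\restriA{y}}{}^*$, so $G'$ lies in $\graphvalid{\restriA{}}{x}\cap\graphvalid{\restriA{}}{y}$; by completeness both $x$ and $y$ are positions of past vertices of $G'$, say $x\namesubseteq\sp{u}$ and $y\namesubseteq\sp{v}$ with $u,v\in\Past(G')$. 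Hence $G'\in\upast{u}\cap\upast{v}$ and the commutativity hypothesis on $A_{(-)}$ applies to $G'$.

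With this set-up the middle equality falls out quickly. Commutativity gives $A_x A_y G' = A_y A_x G'$, and since $A_x G' = G$ the right-hand side equals $A_y G$. Assuming for the moment that $A_y G'\in\graphvalid{\restriA{}}{x}$, I apply $B_x$ and use the left-local-inverse identity $B_x A_x = \mathrm{id}$ on $\graphvalid{\restriA{}}{x}$ to peel off the outer $A_x$ on the left-hand side, obtaining $B_x A_y G = B_x A_x(A_y G') = A_y G' = A_y B_x G$. This is precisely the commutation $A_y B_x G = B_x A_y G$.

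The two domain claims both reduce to the single reachability fact $A_y G'\in\graphvalid{\restriA{}}{x}$, which is the crux. Indeed, writing $A_y G = A_y A_x G' = A_x(A_y G')$ and reusing the same equivalence gives $A_y G\in\graphvalid{\restriB{}}{x}$, i.e. claim~(2). For claim~(1), that $G = A_x G'\in\graphvalid{\restriA{}}{y}$, I argue by contradiction: the operator identity $A_x A_y G' = A_y A_x G'$ holds even when out-of-domain applications act as the identity, so if $A_y G'\in\graphvalid{\restriA{}}{x}$ but $G\notin\graphvalid{\restriA{}}{y}$, then $A_x(A_y G') = A_y G = G = A_x G'$; since $A_y G'$ and $G'$ both lie in $\graphvalid{\restriA{}}{x}$ and $A_x$ is injective there (injectivity of Prop.~\ref{prop:inv-rev}), this forces $A_y G' = G'$, contradicting time-increasingness, which imposes a strict time increase at position $y$ under $A_y$. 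Thus everything hinges on establishing $A_y G'\in\graphvalid{\restriA{}}{x}$ for two past positions whose forward cones may overlap. I expect this to be the main obstacle: one cannot simply invoke locality to claim that $A_y$ leaves the disk at $x$ untouched. I would obtain it either from strong extensivity (Def.~\ref{def:neighbourhood}), by showing $G'_{\restriA{x}}\sqsubseteq A_y G'$ so that the disk at $x$ survives the rewrite, or, more safely, by importing the reachability lemma for commutative rules from the space-time determinism framework of~\cite{ArrighiDetRew}, where this coherence of overlapping cones is exactly what commutativity buys.

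Finally I would discharge the $^*$ (no isolated vertex) bookkeeping and assemble the $C$-chain. Name- and time-preservation (Prop.~\ref{prop:renaminginvarianceimpliesnp}) ensure that applying $A_y$ or $B_x$ neither creates nor isolates vertices in a way that would move the relevant graphs out of the starred domains, so in fact $G\in\graphvalid{\restriA{y}}{}^*$ and $A_y G\in\graphvalid{\restriB{x}}{}^*$. Unfolding the two-way operator then gives $C_y C_x G = C_y(B_x G) = A_y B_x G$ and $C_x C_y G = C_x(A_y G) = B_x A_y G$, which together with the commutation of the second paragraph yields the full chain $C_y C_x G = A_y B_x G = B_x A_y G = C_x C_y G$. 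The symmetric statement, starting from $G\in\graphvalidexplicit{\restriA{x}}{}^*$ with $A_x G\in\graphvalid{\restriB{y}}{}^*$, is obtained by running the same argument on $\transpose{G}$, exchanging the roles of $A_{(-)}$ and $B_{(-)}$ and using the commutativity of $B_{(-)}$ in place of that of $A_{(-)}$.
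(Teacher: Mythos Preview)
Your approach is essentially the paper's: put $G'=B_xG$, observe $G'\in\graphvalid{\restriA{}}{x}\cap\graphvalid{\restriA{}}{y}$, apply commutativity of $A_{(-)}$ to $G'$, and cancel via the inverse identities. The paper's proof is much terser and does not spell out the domain claims you carefully track; it writes the chain $A_xA_yB_xG=A_yA_xB_xG\Rightarrow B_xA_xA_yB_xG=B_xA_yA_xB_xG\Rightarrow A_yB_xG=B_xA_yG$ and declares the second half symmetric.

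One remark: the ``crux'' you worry about, namely $A_yG'\in\graphvalid{\restriA{}}{x}$, is precisely the content of the paper's own Lemma~\ref{lemma : 2valid} (Validity of commutativity), which from $G'\in\graphvalid{\restriA{}}{x}\cap\graphvalid{\restriA{}}{y}$ yields validity of both $xy$ and $yx$. So you need neither strong extensivity nor an import from~\cite{ArrighiDetRew}; that lemma also gives $G=A_xG'\in\graphvalid{\restriA{}}{y}$ directly, making your contradiction argument for claim~(1) superfluous.
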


\begin{restatable}[Two-way consistency]{proposition}{proptwowayconsistency}
    Consider $A_{(-)}$ a reversible $\restriA{}$-local operator and $B_{(-)}$ its $\restriB{}$-local inverse. 
    Let $C_{(-)}$ be the corresponding {\em two-way} local operator.
    For all graph $G\in\sshift$ and $\omega_1,\omega_2\in \Omega_G(C)$, there exists \PA{$G'\in\sshift$} and $\omega_1',\omega_2'\in \Omega_{G'}(A)$ such that :
    $$C_{\omega_1} G = A_{\omega_1'} G'$$
    $$C_{\omega_2} G = A_{\omega_2'} G'$$
\end{restatable}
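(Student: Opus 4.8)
The plan is to reduce every $C$-sequence to a \emph{normal form} in which all backward ($B$) steps are performed first and all forward ($A$) steps afterwards, and then to glue the two resulting ``pasts'' into a single common ancestor by invoking the consistency of the inverse (Prop.~\ref{prop : the inverse dynamics has the same property}). This is a confluence/diamond argument whose two ingredients are the two-way commutation Lemma~\ref{lemma : two-way commutation} and the local cancellation $B_x A_x = \mathrm{id}$ coming from the left-local-inverse clause of Def.~\ref{def:rev}.

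First I would establish a \textbf{normalization lemma}: for any $G$ and any $\omega \in \Omega_G(C)$ there exist $\alpha,\beta$ with $\beta \in \Omega_G(B)$ and $\alpha$ a valid $A$-sequence on $B_\beta G$, such that $C_\omega G = A_\alpha B_\beta G$. Since $\omega$ is valid, each of its letters acts either as a genuine $A$-move (the current graph lies in $\graphvalid{\restriA{}}{x}^*$) or as a genuine $B$-move (in $\graphvalid{\restriB{}}{x}^*$), never as the identity. I would sort these moves by a bubble-sort on the word, taking as measure the number of \emph{inversions}, i.e.\ pairs consisting of an $A$-step applied before a $B$-step. Each adjacent offending pair has the form $B_y A_x$ (an $A$ at $x$ applied first, then a $B$ at $y$). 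If $x = y$, the property $B_x A_x = \mathrm{id}$ cancels the pair and strictly decreases the measure. If $x \neq y$, the two-way commutation Lemma~\ref{lemma : two-way commutation} applies (in the second of its two cases), yielding $B_y A_x G'' = A_x B_y G''$ on the relevant intermediate graph $G''$, swapping the two letters while keeping all subsequent graphs in their required starred domains, and again strictly decreasing the measure. Termination yields the $A_\alpha B_\beta$ form, and the domain bookkeeping carried along the sort certifies $\beta \in \Omega_G(B)$ and the validity of $\alpha$ from $B_\beta G$.

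Second, I would apply the normalization to both $\omega_1$ and $\omega_2$, obtaining $C_{\omega_i} G = A_{\alpha_i} B_{\beta_i} G$ with $\beta_1,\beta_2 \in \Omega_G(B)$. Now the two backward sequences start from the \emph{same} graph $G$, so Prop.~\ref{prop : the inverse dynamics has the same property} (consistency of the inverse) furnishes a common $H \in \sshift$ and forward sequences $\gamma_1,\gamma_2 \in \Omega_H(A)$ with $B_{\beta_i} G = A_{\gamma_i} H$. Setting $G' := H$ and $\omega_i' := \alpha_i \gamma_i$ then gives $C_{\omega_i} G = A_{\alpha_i} A_{\gamma_i} H = A_{\omega_i'} G'$. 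Validity $\omega_i' \in \Omega_{G'}(A)$ follows by concatenation: $\gamma_i \in \Omega_H(A)$ by the proposition, while $\alpha_i$ is valid from $A_{\gamma_i} H = B_{\beta_i} G$ by the normalization step, so the concatenated sequence is valid.

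The main obstacle I anticipate is the \textbf{domain bookkeeping inside the normalization}. Lemma~\ref{lemma : two-way commutation} is stated under precise hypotheses about membership in the starred domains $\graphvalid{\restriA{}}{x}^*$ and $\graphvalid{\restriB{}}{x}^*$, and one must verify that after each swap the entire tail of the word remains a valid $C$-sequence so that the next swap's hypotheses are met; in particular one must check that a swap never creates an isolated vertex that would drop a graph out of a starred domain, and that the $x=y$ cancellation case is cleanly separated from the $x \neq y$ commutation case. Handling this rigorously — most likely by an induction that simultaneously tracks the word and the validity of every prefix — is where the real work lies; once the normal form is secured, the gluing via Prop.~\ref{prop : the inverse dynamics has the same property} is immediate.
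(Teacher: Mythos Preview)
Your proposal is correct and follows the same two-step strategy as the paper: normalize each $C$-sequence into an $A$-after-$B$ form by iterating Lem.~\ref{lemma : two-way commutation}, then merge the two backward parts via Prop.~\ref{prop : the inverse dynamics has the same property}. The paper's normalization is phrased as an induction on $|\omega|$ (pushing a newly added $B$-letter through the already-sorted $A$-block) rather than your bubble-sort with an inversion count, and it does not single out the $x=y$ cancellation case, but these are cosmetic differences; your anticipated domain-bookkeeping concern applies equally to both versions.
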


Summarizing,
\begin{corollary}
    Consider $A_{(-)}$ a reversible $\restriA{}$-local operator which admits a commutative inverse and $C_{(-)}$ its corresponding {\em two-way} local operator. If $A_{(-)}$ is fully/weakly consistent then so is $C_{(-)}$.
\end{corollary}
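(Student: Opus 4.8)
The plan is to obtain the corollary as an immediate consequence of the Two-way consistency proposition, which carries all of the combinatorial content. Recall first that full (resp. weak) consistency of a dynamic is a property quantified over every graph, every pair of valid sequences, and every vertex $u$ common to the two resulting graphs: whenever the two evolutions agree on the set of incoming ports of $u$ (resp. agree that $u$ has no incoming ports), the two evolutions must fully agree on the state and connectivity of $u$.

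First I would fix an arbitrary $G\in\sshift$, an arbitrary pair $\omega_1,\omega_2\in\Omega_G(C)$ of valid sequences for the two-way operator, and an arbitrary vertex $u$ that is internal in both $C_{\omega_1}G$ and $C_{\omega_2}G$. Since $A_{(-)}$ is reversible and admits a commutative inverse $B_{(-)}$, the hypotheses of the Two-way consistency proposition are exactly met, so there exist a graph $G'\in\sshift$ and valid sequences $\omega_1',\omega_2'\in\Omega_{G'}(A)$ with $C_{\omega_1}G=A_{\omega_1'}G'$ and $C_{\omega_2}G=A_{\omega_2'}G'$.

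The crux is then purely formal: these are equalities of graphs, so any predicate about the vertex $u$ evaluated in $C_{\omega_1}G$ and $C_{\omega_2}G$ is, verbatim, the same predicate evaluated in $A_{\omega_1'}G'$ and $A_{\omega_2'}G'$. In particular the consistency hypothesis for $C_{(-)}$ at $u$---that the two evolutions agree on the incoming ports of $u$ (full case), or that they agree $u$ has no incoming ports (weak case)---transports unchanged into the corresponding agreement between $A_{\omega_1'}G'$ and $A_{\omega_2'}G'$. Applying the assumed full (resp. weak) consistency of $A_{(-)}$ to $G'$, the sequences $\omega_1',\omega_2'$, and the vertex $u$, we deduce that $A_{\omega_1'}G'$ and $A_{\omega_2'}G'$ fully agree on the state and connectivity of $u$; transporting back along the two equalities yields that $C_{\omega_1}G$ and $C_{\omega_2}G$ fully agree on $u$, i.e. $C_{(-)}$ is fully (resp. weakly) consistent.

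There is essentially no obstacle remaining at the level of the corollary: every genuinely new argument has already been absorbed into the Two-way consistency proposition (and upstream into the two-way commutation lemma and the commutativity of $B_{(-)}$). The only points to verify carefully are that the notion of ``agreement on $u$'' used to define consistency interacts trivially with equality of the underlying graphs---which it does, since equal graphs carry identical state and port data at every vertex---and that the objects produced by the proposition are genuinely a graph $G'\in\sshift$ together with valid sequences in $\Omega_{G'}(A)$, which is precisely part of the proposition's conclusion. Hence the corollary reduces to a clean transport of the consistency property across the equalities $C_{\omega_i}G=A_{\omega_i'}G'$.
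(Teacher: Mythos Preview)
Your proposal is correct and follows exactly the route the paper intends: the corollary is stated right after the Two-way consistency proposition as a direct summary, with no separate proof given, precisely because the only step left is the transport of the consistency predicate across the equalities $C_{\omega_i}G=A_{\omega_i'}G'$ that the proposition supplies. Your careful unpacking of why equal graphs make this transport trivial is a bit more explicit than what the paper writes, but the approach is the same.
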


\noindent {\em Commutativity of the inverse.} We have seen that the good behaviour of the inverse local rule $B_{(-)}$ or the two-way local rule $C_{(-)}$ both depend on the commutativity of $B_{(-)}$. Thus the new question that arises is whether $B_{(-)}$ inherits the commutativity of $A_{(-)}$. We have identified three important scenarios for which this is the case.\\
First, if $A_{(-)}$ is time-symmetric \cite{GajardoTimeSym,ArrighiBLOCKREP}:
\begin{restatable}[Time-symmetric commutativity]{lemma}{lemtimesymcommutation}
    Let $A_{(-)}$ be a reversible and commutative
    $\restriA{}$-local rule whose inverse local rule is
    $$B_{(-)}=\transpose{~}\circ A^-_{(-)} \circ \transpose{~}.$$
    where $A^-_{(-)}$ acts as $A_{(-)}$ except it modifies time tags in the opposite way. Then $B_{(-)}$ is commutative.
\end{restatable}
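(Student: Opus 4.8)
The plan is to reduce the commutativity of $B_{(-)}$ to that of $A_{(-)}$ by exploiting the two symmetries hidden in the hypothesis $B_{(-)}=\transpose{~}\circ A^-_{(-)}\circ\transpose{~}$: the transposition $\transpose{~}$, and a time-negation involution relating $A^-$ to $A$. First I would record the purely algebraic consequence of this hypothesis. Since transposition is an involution, $B_xG=\transpose{(A^-_x\transpose{G})}$ for every position $x$, and using $\transpose{(B_yG)}=A^-_y\transpose{G}$ we get
\[
B_x B_y G=\transpose{\bigl(A^-_x\,\transpose{(B_y G)}\bigr)}=\transpose{\bigl(A^-_x A^-_y\,\transpose{G}\bigr)},
\]
and symmetrically $B_y B_x G=\transpose{(A^-_y A^-_x\transpose{G})}$. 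As transposition is injective, $B_x B_y G=B_y B_x G$ holds if and only if $A^-_x A^-_y\transpose{G}=A^-_y A^-_x\transpose{G}$. Moreover $\Fut(G)=\Past(\transpose{G})$, so the premise under which $B$ must commute, namely $G\in\ufut{u}\cap\ufut{v}$ with $x\namesubseteq\sp{u}$ and $y\namesubseteq\sp{v}$, is exactly the premise $\transpose{G}\in\upast{u}\cap\upast{v}$ under which $A^-$ must commute. The lemma therefore reduces to the commutativity of $A^-_{(-)}$.

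Second, I would make precise the phrase \emph{``modifies time tags in the opposite way''} by introducing the time-negation map $\tau:\calV\to\calV$ fixing $\mathbb{N}$ and satisfying $\tau(u.p)=\tau(u).p$, $\tau(u\lor v)=\tau(u)\lor\tau(v)$, and $\tau(t.u)=(-t).\tau(u)$. A routine verification shows that $\tau$ respects the equality theory of Def.~\ref{def:namealgebra} (only the cases $s.(t.u)=(s+t).u$ and $0.u=u$ need checking), that it is involutive, and that it preserves the position projection, $\sp{\tau(u)}=\sp{u}$. Extended vertexwise to graphs, $\tau$ leaves edges, states and the graph axioms untouched and merely relabels vertices bijectively, so it preserves the $\Past/\Fut$ sets: $\Past(\tau H)=\tau(\Past(H))$. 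The content of the definition of $A^-$ is then the identity $A^-_{(-)}=\tau\circ A_{(-)}\circ\tau$, since conjugation by $\tau$ turns every increment of a time tag into the opposite decrement while leaving positions, connectivity and states fixed, which is precisely acting as $A$ but modifying time tags oppositely.

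Finally, with this identity the transfer of commutativity is immediate. Given $\transpose{G}\in\upast{u}\cap\upast{v}$, the graph $\tau(\transpose{G})$ lies in $\upast{\tau(u)}\cap\upast{\tau(v)}$ with $\sp{\tau(u)}=\sp{u}$ and $\sp{\tau(v)}=\sp{v}$, so the commutativity of $A$ applies to it and yields $A_x A_y\,\tau(\transpose{G})=A_y A_x\,\tau(\transpose{G})$. Applying $\tau$ and using $\tau\circ\tau=\mathrm{id}$,
\[
A^-_x A^-_y\transpose{G}=\tau A_x A_y\tau\transpose{G}=\tau A_y A_x\tau\transpose{G}=A^-_y A^-_x\transpose{G},
\]
which by the first paragraph is exactly $B_x B_y G=B_y B_x G$, establishing that $B_{(-)}$ is commutative.

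I expect the main obstacle to be the second step: pinning down the exact meaning of $A^-$ and justifying $A^-=\tau\circ A\circ\tau$ rigorously. Through the rewrite-system description of local operators (Prop.~\ref{lemma : charac mutex}) this amounts to checking that conjugation by $\tau$ is compatible with the neighbourhood scheme $\restriA{}$---that $\tau$ maps disks of $\restriA{}$ to disks and that the applicability domains match, so that $A$ and its time-reflected variant fire on exactly the corresponding graphs---together with the stability of $\sshift$ under $\tau$, which is part of the ambient time-symmetry assumption. Once these compatibilities are secured, the transpose bookkeeping of the first paragraph and the commutativity transfer of the third are routine.
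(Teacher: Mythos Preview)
Your proposal is correct and follows essentially the same route as the paper: unwind $B_{xy}H$ via the transposition involution to land on $A^-_{xy}\transpose{H}$, and then invoke the commutativity of $A$ on $\transpose{H}\in\upast{u}\cap\upast{v}$. The paper's proof is a one-line computation that silently identifies $A^-$ with $A$ for the purpose of commutativity (since the time-tag direction is irrelevant there), whereas you spell this identification out via an explicit time-negation conjugation $\tau$; your extra care is not wrong, but the paper regards that step as immediate.
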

Second, if $A_{(-)}$ is such that two future nodes always come from two past nodes: 
\begin{restatable}[Two-two implies commutative inverse]{proposition}{proptwotwo}
\label{lemma : reducing hypothesis for commutation of B}
Let $A_{(-)}$ be a reversible local rule.
We say that it is {\em two-two} iff for all $H\in\graphvalidexplicit{\restriB{}}{x}\cap\graphvalidexplicit{\restriB{}}{y}$ 
there exists $G\in\graphvalidexplicit{\restriA{}}{x}\cap\graphvalidexplicit{\restriA{}}{y}$ such that $H=A_{xy}G$.\\
Conversely, we say that its inverse $B_{(-)}$ is {\em two-two} iff for all $G\in\graphvalidexplicit{\restriA{}}{x}\cap\graphvalidexplicit{\restriA{}}{y}$ there exists $H\in\graphvalidexplicit{\restriB{}}{x}\cap\graphvalidexplicit{\restriB{}}{y}$ such that $G=B_{xy}H$.\\
We have that if $A_{(-)}$ is commutative, then $B_{(-)}$ is two-two, and symmetrically.\\  
We have that $A_{(-)}$ is commutative and two-two if and only if $B_{(-)}$ also is.
\end{restatable}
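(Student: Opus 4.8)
The plan is to reduce all four claims to two short, essentially mechanical computations, using only the cancellation identities of Def.~\ref{def:rev} (namely $B_x A_x G = G$ for $G\in\graphvalidexplicit{\restriA{}}{x}$ and $A_x B_x H = H$ for $H\in\graphvalidexplicit{\restriB{}}{x}$), the validity equivalence recalled after Def.~\ref{def:rev} (i.e.\ $G\in\graphvalidexplicit{\restriA{}}{x}$ iff $A_x G\in\graphvalidexplicit{\restriB{}}{x}$), and the forward/backward symmetry of the setup. For the latter I would first record that, since a backward neighbourhood scheme is just a forward one read on $\transpose{~}$, the pair $(A_{(-)},B_{(-)})$ is reversible exactly when $(B_{(-)},A_{(-)})$ is, with the roles of forward and backward exchanged. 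Hence any implication of the shape ``[property of $A_{(-)}$]$\Rightarrow$[property of $B_{(-)}$]'' dualises automatically, which is what delivers both the ``symmetrically'' clause and the converse direction of the final equivalence for free.

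First I would prove that $A_{(-)}$ commutative implies $B_{(-)}$ two-two. Fix $G\in\graphvalidexplicit{\restriA{}}{x}\cap\graphvalidexplicit{\restriA{}}{y}$; since $x,y$ are then positions of past nodes of $G$, commutativity applies and gives the diamond property: both orders are valid and $H:=A_{xy}G=A_x A_y G=A_y A_x G$, with $A_y G\in\graphvalidexplicit{\restriA{}}{x}$ and $A_x G\in\graphvalidexplicit{\restriA{}}{y}$. Applying the validity equivalence to $A_y G$ at $x$ and to $A_x G$ at $y$ shows $H\in\graphvalidexplicit{\restriB{}}{x}\cap\graphvalidexplicit{\restriB{}}{y}$, so $H$ is an admissible witness. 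It remains to check $B_{xy}H=G$: writing $H=A_y A_x G$, the left inverse gives $B_y H = B_y A_y(A_x G)=A_x G$ because $A_x G\in\graphvalidexplicit{\restriA{}}{y}$, and then $B_x(A_x G)=G$ because $G\in\graphvalidexplicit{\restriA{}}{x}$. Hence $B_x B_y H=G$, which is two-two for $B_{(-)}$. The ``symmetrically'' statement, $B_{(-)}$ commutative implies $A_{(-)}$ two-two, is the transpose of this argument.

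For the equivalence it suffices, by the symmetry above, to show that $A_{(-)}$ commutative and two-two implies $B_{(-)}$ commutative and two-two. Two-two of $B_{(-)}$ is already granted by the previous paragraph, so I only need $B_{(-)}$ commutative. Take $H\in\graphvalidexplicit{\restriB{}}{x}\cap\graphvalidexplicit{\restriB{}}{y}$. Since $A_{(-)}$ is two-two there is $G\in\graphvalidexplicit{\restriA{}}{x}\cap\graphvalidexplicit{\restriA{}}{y}$ with $H=A_{xy}G$, and since $A_{(-)}$ is commutative also $H=A_{yx}G$, with $A_y G$ and $A_x G$ valid at $x$ and $y$ as above. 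Then the cancellations give, writing $H=A_y A_x G$, that $B_x B_y H=B_x B_y A_y A_x G=B_x A_x G=G$, and writing $H=A_x A_y G$, that $B_y B_x H=B_y B_x A_x A_y G=B_y A_y G=G$; thus $B_{xy}H=B_{yx}H=G$. Moreover the intermediate graphs $B_y H=A_x G$ and $B_x H=A_y G$ are backward-valid at $x$ and $y$ respectively by the validity equivalence, so both backward sequences are genuinely valid. This is exactly commutativity of $B_{(-)}$, and the converse implication follows by transposition.

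The only genuinely delicate point, and the one I would treat most carefully, is the bookkeeping of validity domains: turning the bare equation $A_{xy}G=A_{yx}G$ furnished by commutativity into the statement that the intermediate graphs $A_y G$ and $A_x G$ are themselves forward-valid at the complementary position, so that the cancellation identities apply and the produced witnesses land in $\graphvalidexplicit{\restriB{}}{x}\cap\graphvalidexplicit{\restriB{}}{y}$. I would isolate this as the diamond consequence of commutativity and locality; once it is in hand, every remaining step is a direct application of the left/right inverse identities and the validity equivalence, and no property of $A_{(-)}$ beyond commutativity and two-two is ever used.
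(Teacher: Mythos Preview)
Your argument for the first part (commutative $A_{(-)}$ implies two-two $B_{(-)}$) and for the case $H\in\graphvalidexplicit{\restriB{}}{x}\cap\graphvalidexplicit{\restriB{}}{y}$ in the second part is fine and matches the paper. But there is a genuine gap in your proof that $B_{(-)}$ is commutative.

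Commutativity of a backward local rule is \emph{not} stated only for $H\in\graphvalidexplicit{\restriB{}}{x}\cap\graphvalidexplicit{\restriB{}}{y}$: by the paper's definition it must hold for every $H\in\ufut{u}\cap\ufut{v}$ with $x=\sp{u}$, $y=\sp{v}$, i.e.\ whenever $x$ and $y$ are positions of future nodes, regardless of whether $H$ lies in the validity domain of $\restriB{}$. The case $H\notin\graphvalidexplicit{\restriB{}}{x}\cup\graphvalidexplicit{\restriB{}}{y}$ is indeed trivial since both $B_x$ and $B_y$ act as the identity. The mixed case $H\in\graphvalidexplicit{\restriB{}}{x}\setminus\graphvalidexplicit{\restriB{}}{y}$, however, is not: here $B_xB_yH=B_xH$, but $B_yB_xH=B_xH$ only if $B_xH\notin\graphvalidexplicit{\restriB{}}{y}$, and nothing you have said rules out that applying $B_x$ suddenly makes $y$ backward-valid. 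The paper handles this case by a separate argument: it shows that $B_xH\in\graphvalidexplicit{\restriB{}}{y}$ would force $H\in\graphvalidexplicit{\restriB{}}{y}$, via backward full-exploration of $\sshift$, the two-two hypothesis on $A_{(-)}$, strong extensivity, and Lem.~\ref{lemma : Non valid pasts are preserved}. This is precisely the ``intricate'' step in the paper's proof, and it is where the two-two hypothesis does real work beyond the doubly-valid case. Your sketch skips this case entirely, so as written the commutativity of $B_{(-)}$ is not established.
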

Third, if $A_{(-)}$ is bounded:
\begin{restatable}[Bounded neighbourhood implies commutative inverse]{lemma}{lemcommutation}
    Let $A_{(-)}$ be a reversible and commutative 
    $\restriA{}$-local rule
    . Let $B_{(-)}$ be the $\restriB{}$-local inverse of $A_{(-)}$. If $\restriA{}$ is bounded by $k$ and there exists a renaming invariant bijection between $\graphvalid{\restriA{}}{x}\cap\graphvalid{\restriA{}}{y}$ and $ \graphvalid{\restriB{}}{x}\cap \graphvalid{\restriB{}}{y}$, then $B_{(-)}$ is commutative.
\end{restatable}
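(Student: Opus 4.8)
The plan is to reduce the commutativity of $B_{(-)}$ to a single property of $A_{(-)}$ and then settle that property by counting. Since $A_{(-)}$ is assumed commutative, Prop.~\ref{lemma : reducing hypothesis for commutation of B} tells us that $B_{(-)}$ is commutative as soon as $A_{(-)}$ is moreover \emph{two-two}, i.e. as soon as $A_{xy}$ maps $\graphvalid{\restriA{}}{x}\cap\graphvalid{\restriA{}}{y}$ \emph{onto} $\graphvalid{\restriB{}}{x}\cap\graphvalid{\restriB{}}{y}$ for all positions $x,y$. So the whole proof reduces to establishing this surjectivity, and I would obtain it from a finite-cardinality argument on renaming classes.

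First I would check that $A_{xy}$ is an injection \emph{into} the backward double set. Fix $G\in\graphvalid{\restriA{}}{x}\cap\graphvalid{\restriA{}}{y}$. Commutativity guarantees that the schedules $xy$ and $yx$ are valid on $G$, hence $A_yG\in\graphvalid{\restriA{}}{x}$ and $A_xG\in\graphvalid{\restriA{}}{y}$; combined with the single-step equivalence recalled after Def.~\ref{def:rev} ($G'\in\graphvalid{\restriA{}}{z}\Leftrightarrow A_zG'\in\graphvalid{\restriB{}}{z}$) and with $A_{xy}G=A_xA_yG=A_yA_xG$, one reads off that $A_{xy}G$ lies in both $\graphvalid{\restriB{}}{x}$ and $\graphvalid{\restriB{}}{y}$: applying the operator attached to one position leaves the backward neighbourhood of the other untouched. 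Injectivity is then immediate, since $B_{yx}$ is a left inverse of $A_{xy}$ on this domain (apply the left-local-inverse identity $B_zA_zG'=G'$ twice).

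Next comes the counting, where the two hypotheses combine. Boundedness of $\restriA{}$ by $k$ caps every forward disk at $k$ internal vertices; as $\Sigma$ and $\pi$ are finite, there are only finitely many disks of $\graphvalid{\restriA{}}{x}\cap\graphvalid{\restriA{}}{y}$ up to renaming. Because $A_{xy}$ and the hypothesised bijection are both renaming-invariant, all the data descends to these finitely many renaming classes, and the bijection forces the forward and backward classes to be equinumerous. An injection between two finite sets of equal size is onto; applying this to the class-level map induced by $A_{xy}$ gives surjectivity up to renaming, which locality (strong extensivity together with context-preservation) promotes to genuine surjectivity onto $\graphvalid{\restriB{}}{x}\cap\graphvalid{\restriB{}}{y}$. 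This is exactly the two-two property, so Prop.~\ref{lemma : reducing hypothesis for commutation of B} yields the commutativity of $B_{(-)}$.

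The hard part will be the finiteness bookkeeping, not the reduction. Finiteness is genuinely indispensable: an injection between equinumerous \emph{infinite} sets need not be onto, so without boundedness the bijection hypothesis alone could not close the argument—which is precisely why both hypotheses are needed. The care required is twofold: handling renamings that move the positions $x$ and $y$, so that the induced class-level map is well defined and injective; and verifying cleanly the into-claim, namely that commutativity really does keep $A_{xy}G$ inside the backward double set. Both are routine but must be carried out explicitly.
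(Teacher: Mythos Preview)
Your proposal is correct and follows essentially the same route as the paper: reduce to the two-two property via Prop.~\ref{lemma : reducing hypothesis for commutation of B}, show that $A_{xy}$ injects $\graphvalid{\restriA{}}{x}\cap\graphvalid{\restriA{}}{y}$ into $\graphvalid{\restriB{}}{x}\cap\graphvalid{\restriB{}}{y}$, and then obtain surjectivity by a finite-cardinality argument on renaming classes. The paper carries out the counting explicitly on the double-disk restrictions $G_{\restriA{x}\cup\restriA{y}}$ (bounded by $2k$ internal vertices) under renamings fixing $x$ and $y$, and then lifts back to full graphs by the decomposition $H=H_{\restriB{x}\cup\restriB{y}}\sqcup H_{\overline{\restriB{x}\cup\restriB{y}}}$; your sketch gestures at exactly this but leaves the choice of counted objects slightly implicit.
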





\section{Reversible time-dilation}\label{sec:timedilation}


In~\cite{ArrighiDetRew}, it has be shown how asynchronous applications of a local operator help express phenomena that go beyond the mere asynchronous simulation of a synchronous dynamical system. Namely, a local rule was designed to represent particles moving left and right, but on a background that is subject to `time dilation' in analogy with relativistic physics. In this Section we show that this is still doable in the presence of reversibility.
\begin{figure}[t]
\centering
\begin{minipage}{0.39\textwidth}
    \begin{subfigure}{\textwidth}
    \centering
    \includegraphics[width=\textwidth]{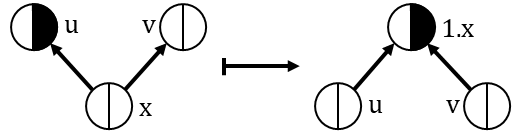}
    \caption{Basis case.}
    \label{subfig : informal basis case}
\end{subfigure}
\begin{subfigure}{\textwidth}
    \centering
    \includegraphics[width=\textwidth]{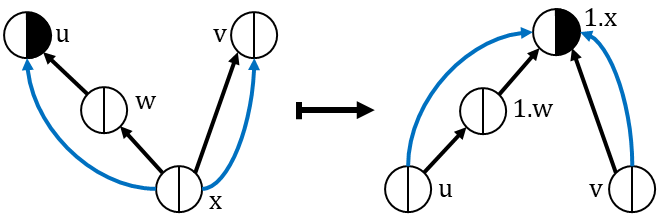}
    \caption{Easy case of dilation.}
    \label{subfig : informal bar}
\end{subfigure}
\begin{subfigure}{\textwidth}
    \centering
    \includegraphics[width=\textwidth]{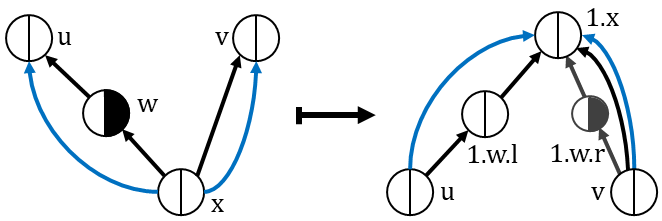}
    \caption{Hard case of dilation.}
    \label{subfig : creating gray vertices}
\end{subfigure}
\end{minipage}
\hfill
\begin{minipage}{0.6\textwidth}
\begin{subfigure}{\textwidth}
    \centering
    \includegraphics[width=0.8\textwidth]{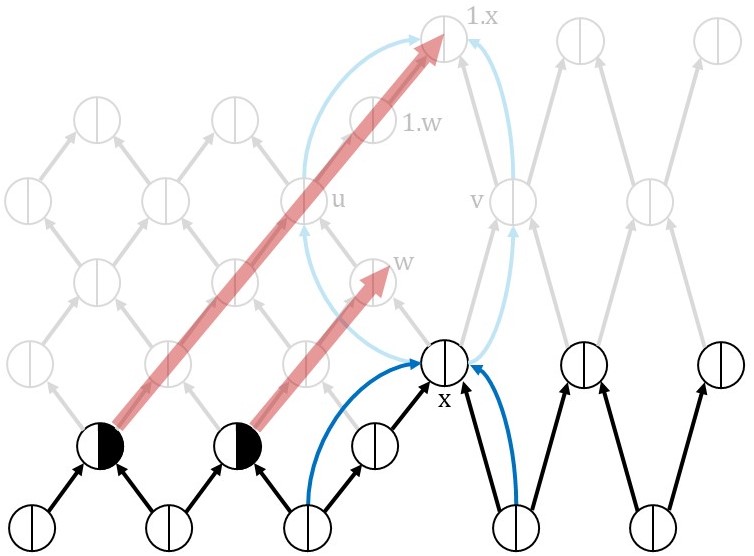}
    \caption{In black we describe an initial configuration with two right-moving particles. In grey we represent all possible rewritings obtained by applying the local operator described in~(\ref{subfig : informal basis case}) and~(\ref{subfig : informal bar}). In red we highlight the ``trajectories'' of the two particles, i.e. all possible vertices where they could end up after some rewriting steps.}
    \label{subfig : non reversible ST diagram}
\end{subfigure}
\end{minipage}
\caption{\emph{Time-dilation and reversibility.} $(a)$ The local operator acts by consuming the particle at $u$, thereby moving this particle to position $x$. It also flips the arrows pointing to $x$, and increments its timetag, in order to move the vertex from
past $x$ to future $1.x$. $(b)$ Here we have the same behaviour except for the fact that dilation edges force to update twice as much vertices on the left of $x$ than on its right. $(d)$ These two rules together generate the trajectory of the particle passing by $u$ and $w$. However it seems hard to extend the trajectory of that passing by $w$. $(c)$ One solution to this problem is to create a fresh gray vertex in between $v$ and $1.x$ to store the information that was at $w$.}
\label{fig : informal description of time dilation}
\end{figure}

\noindent\emph{Time-dilation in~\cite{ArrighiDetRew}.} Let us first recall the previous construction, and explain why it is not reversible. Internal states are pairs of bits $\sigma_G(u) = (\sigma^l_G(u),\sigma^r_G(u))\in\Sigma=\{0,1\}^2$, representing the presence of a left-moving particle or not, and of a right-moving particle or not. The set of \emph{ports} $\pi = C\times D = \{ b,d\}\times \{l,r\}$ is a cartesian product of a set of ``edge colors'' $C$ and a set of ``directions'' $D$. Each edge goes either from port $:\!(c,l)$ to port $:\!(c,r)$, or from port $:\!(c,r)$ to port $:\!(c,l)$, for $c\in C$. Each edge can thus be understood as having a colour $c$ and a spatial direction (``left'' or ``right''). Colours are used to distinguish two kinds of edges : \emph{standard edges} (using letter $b$, shown in black) and \emph{dilation edges} (using letter $d$, shown in blue). The local operator is informally summarized in Fig.~\ref{fig : informal description of time dilation}. In most cases it just makes particles move to the right or to the left (see Fig.~\ref{subfig : informal basis case}). But some vertices are equipped with two dilation edges, we call them `bar vertices'---i.e. a bar vertex is a vertex which has both its ports $:\!(d,l)$ and $:\!(d,r)$ occupied\footnote{This `bar vertices' are encoded slightly differently in~\cite{ArrighiDetRew} : internal states are used to mark them instead of dilation edges.}. When the local operator is applied to at a bar vertex (see Fig.~\ref{subfig : informal bar}), the dilation edges enforce that vertices on the left be updated twice as fast as vertices on the right. At the global level, this generates a space-time background which features a left/right time dilation, as illustrated by Fig.~\ref{subfig : non reversible ST diagram}. As such, this local rule is not reversible. This is because the space-time region on the right is twice less dense as that on the left, and thus some particles will be lost as they travel from left to right. Indeed consider the two particles passing by $u$ and $w$ in Fig.~\ref{subfig : non reversible ST diagram}. The one at $u$ does have a rectilinear trajectory to go along with in the diagram, but it is harder see where the one at $w$ is supposed to go. We could try to make it `bounce' when it arrives at the bar vertex, 
but this would still be non-reversible, as it could still end up overlapping with some left-moving particle, in some more complex scenario.

\noindent {\em Reversible time-dilation idea.} We solve this issue by making use of the name algebra to create a ``gray'' vertex, whose job is to carry the particle in the less dense part of the space-time region, see Fig.~\ref{subfig : creating gray vertices}. 
Considering arbitrary complex situations, with consecutive time-dilations, forces us to allow for arbitrary chains of gray vertices in between black vertices.

\noindent\emph{Defining the working set of graphs.} In order to formally encode these ``gray vertices'', we first enlarge the set of ports by adding a new edge color $g$ (hence we now have $C :=  \{ b,d, g\}$), and call `gray' a vertex which has both its gray ports $:\!(g,l)$ and $:\!(g,r)$  occupied. We then restrict our example to $\sshift$, the largest subset of $\mathcal{G}$ such that :
\begin{description}
    \item{{\em Non-gray vertices:}} each non-gray vertex has its two black ports occupied.
    \item{{\em Dilation edges:}} dilation edges are always connected to a bar vertex and a non-bar vertex. Moreover bar vertices are always connected, by their dilation edges, to a standard-edge-distance $1$ vertex on their right, and to a standard-edge-distance $2$ vertex on their left (this can be generalised but we wish to keep this first example relatively simple).
    \item{{\em Gray vertices:}} all gray vertices form finite chains, that start a the source of a black edge, and end at the target of that same black edge. 
    The chains must be of length $2^n$ for some integer $n$, and must carry at least one particle at an odd position.
    \item{{\em No-border:}} we consider graphs without borders, so that the closure properties of Def.~\ref{def : closure} be trivially respected.
\end{description}

\begin{figure}[t]
\centering
\begin{subfigure}{0.49\textwidth}
    \centering
    \includegraphics[width=\textwidth]{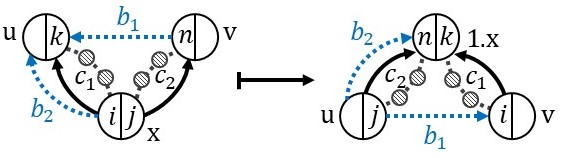}
    \caption{$\{D_x^{1i} \to E_x^{1i}\}_{x\in\X,j\in J}$}
\end{subfigure}
\begin{subfigure}{0.49\textwidth}
    \centering
    \includegraphics[width=\textwidth]{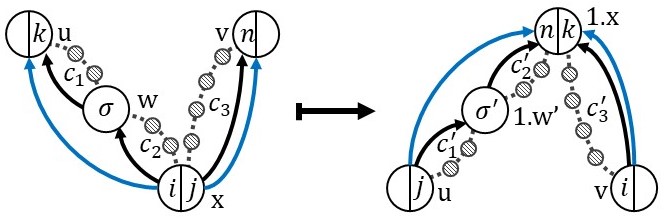}
    \caption{$\{D_x^{1i} \to E_x^{2i}\}_{x\in\X,j\in J}$}
\end{subfigure}
\begin{subfigure}{0.49\textwidth}
    \centering
    \includegraphics[width=\textwidth]{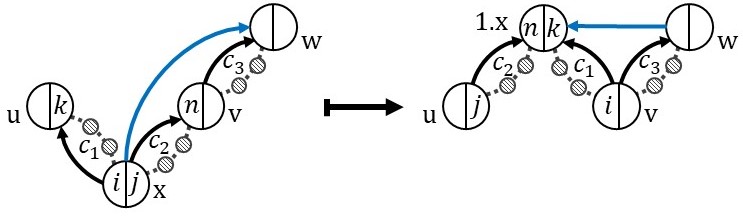}
    \caption{$\{D_x^{3i} \to E_x^{3i}\}_{x\in\X,j\in J}$}
\end{subfigure}
\begin{subfigure}{0.49\textwidth}
    \centering
    \includegraphics[width=\textwidth]{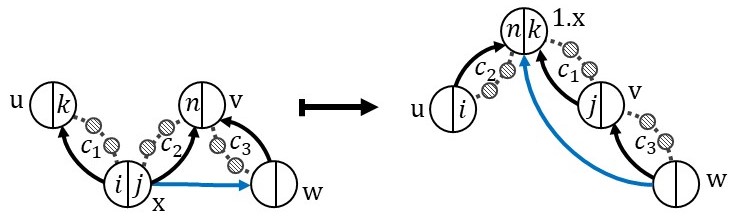}
    \caption{$\{D_x^{4i} \to E_x^{4i}\}_{x\in\X,j\in J}$}
\end{subfigure}
\caption{\emph{Causal rewriting system for reversible time dilation.} 
Whenever a cell (or half cell) is represented empty, the local rule does not change it. Gray chains are optional; if present they are of length $2^n$ and carry a  particle at an odd position. In $(a)$ there are two optional dilation edges (blue dotted lines). 
In $(b)$ the chains $c_1'$ and $c_2'$ and the internal state $\sigma'$ are obtained by ``cutting'' the chain $c_3$ in two pieces so as to propagate its particles in a rectilinear manner. The chain $c_3'$ is obtained by merging $c_1$, $c_2$ conversely.}
\label{fig : local operator}
\end{figure}

\noindent\emph{Defining the local operator.} We can then define the local operator as the causal rewrite system :
$$\{D_x^{kj}\to E_x^{kj}\}_{x\in\X,k\in \llbracket 1,4\rrbracket ,j\in J}$$
where each case $k$ is depicted in Fig.~\ref{fig : local operator}:
\begin{description}
    \item{{\em Case $k=1$:}} the basic case is just to propagate left and right moving particles, in a rectilinear manner. For this purpose, gray chains are just copied across, thereby smoothly extending the trajectories of particles they carry. 
    \item{{\em Case $k=2$:}} Around a bar vertex, gray chains need be created, merged or split in order to preserve the rectilinear trajectory of the particles, whilst ensuring reversibility. This case is further explained in Fig.~\ref{fig: ST diagramm complet}.
    \item{{\em Case $k=3,4$:}}
    After applying case $k=2$, we need to ``reload'' the left of the vertex bar, whilst preserving its left dilation edge. This will happen thought the sequence of applications of $k=3, 1, 4$. Notice that throughout the sequence, the vertex bar is non-past, so that $k=2$ cannot apply.
\end{description}

\begin{figure}[H]
    \centering
    \includegraphics[width=0.8\linewidth]{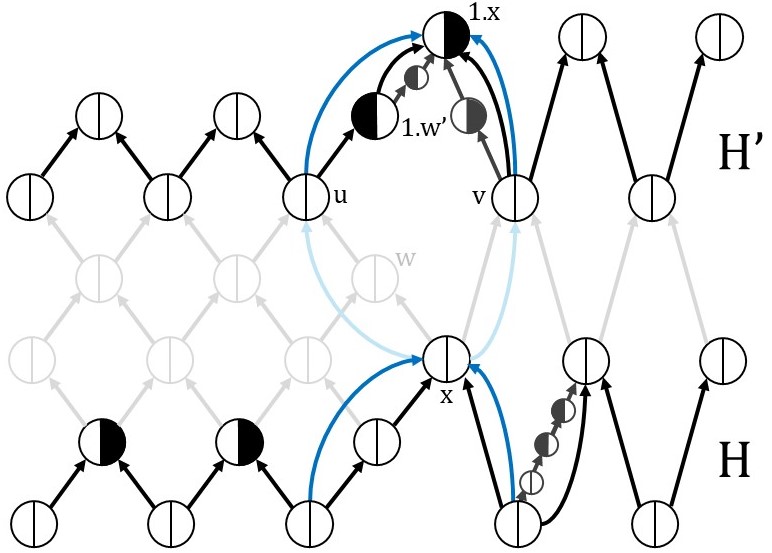}
    \caption{\emph{Reversible time dilation example.} In black we highlight two graphs : $H$ and one of its possible rewritings, $H'$. We start with one right moving particle on the left of the bar and a gray chain containing two particles on its right. After passing through the bar, the particle on the left ends up being stored in a gray vertex. In the other direction, the gray chain is split. Here one of its particles is stored in a black node, and an other in a new smaller gray chain, so that trajectories remain rectilinear. Note that we remove the empty gray.}
    \label{fig: ST diagramm complet}
\end{figure}

\noindent\emph{Reversibility.} It is not too hard to see that this defines a causal rewriting system, although will not provide these details here. Thus Prop.~\ref{lemma : charac mutex} tells us that we have defined a local operator $A_{(-)}$. But, is this local operator reversible ?\\ The easiest way to answer this question, is to check whether $\{{E_x^{kj}}\to {D_x^{kj}}\}_{x\in\X,k\in \llbracket 1,4\rrbracket,j\in J}$ is a backwards causal rewrite system and then apply Prop.~\ref{prop : mutex reversibility}. 
Indeed, it is not too hard to verify that this causal rewrite system is functional, renaming-invariant, context-preserving and $S$-preserving. 
However proving that it has a mutex domain---i.e. that ${\cal E}_x=\{{(E_x^{kj})}\}_{k\in \llbracket 1,4\rrbracket,j\in J}$ is a mutex set of backwards cone---is more tricky. Luckily this set of output graphs happens to be the exact transposition of the set of inputs disks $\D_x= \{D_x^{kj}\}_{k\in \llbracket 1,4\rrbracket,j\in J}$, i.e. $\transpose{{\cal E}_x} = \D_x$. Since we have already admitted that $\D$ is a mutex set of cones, and since $\transpose{\sshift}=\sshift$, we have that $x\mapsto{\cal E}_x$ defines a mutex set of backwards cone.


In fact this example is time-symmetric, and so a simple explicit description of the inverse local rule $B_{(-)}$ can be given. Indeed for all graph $G\in \sshift$ and $x\in \Past(G)$, we have that 
$$\transpose{(A^-_x\transpose{(A_x G)})} = G$$
where $A^-_x$ is the same dynamic than $A_x$ except that it decreases time tags. Thus the inverse local rule is $B_x=\transpose{~}\circ A^-_x\circ \transpose{~}$.

\noindent\emph{Commutativity.} Observe that two disks $G_{\restri{x}}$ and $G_{\restri{y}}$ can intersect at most on a single vertex. This makes it relatively straightforward to verify that the local rule $A_{(-)}$ is commutative (see Sec.\ref{sec:commutation}). Given that $A_{(-)}$ is also time-symmetric, we can then apply Lem.\ref{proof : time-symmetric commutation} to conclude that $B_{(-)}$ is also commutative as well.

\section{Conclusion}\label{sec:conclusion}

{\em Summary of results. } We use the framework of space-time deterministic graph rewriting in order to provide a rigorous notion of space-time reversibility. To do so we first combine the name algebra that is required for node creation/destruction, with the time tags required for space-time determinism, by letting both aspects commute.\\
The paper then just recalls the rest of framework, namely 1/ the definition of port graphs that is common in distributed computation but restricted to being a DAG in order to capture dependencies between events and 2/ the definitions of locality and renaming-invariance that a local operator $A_{(-)}$ must obey.\\ 
As a preparatory contribution we then introduce a notion of causal rewrite systems, which consists in searching for mutually exclusive patterns in the graph, and replacing them by others in a functional manner. We establish the equivalence between rewriting $G\to_x H$, and applying the local operator $H=A_x G$.\\
We are then set to provide three definitions of reversibility. The first demands that $A_{(-)}$ has an inverse that is itself a local operator $B_{(-)}$. Whilst a natural definition, this is asking for many properties at once, which is hard to check or construct. The second is axiomatic: it focuses on $A_{(-)}$ and demands just three high-level conditions be checked. The third is constructive: it provides a concrete way to enumerate these $A_{(-)}$, by ensuring that the corresponding rewrite system is causal both ways. All three definitions are proven equivalent, this is our core theoretical contribution.\\
We then show that the space-determinism of $A_{(-)}$, will be inherited by $B_{(-)}$ and even blending $A_{(-)}$ and $B_{(-)}$, so long as $B_{(-)}$ is also commutative. We identify three cases when $B_{(-)}$ inherits the commutativity of $A_{(-)}$, namely whenever $A_{(-)}$ is time-symmetric, two-two, or bounded.\\
Finally we turn to the construction of an example which is both reversible, space-time deterministic, and authentically asynchronous because it features time-dilation. The challenge here is not to loose information when a signal moves from a fast-ticking, fine-grained region, to a slow-ticking, coarse-grained region.

{\em Open questions. } In its current state our work leaves a number of technical questions open. For instance, the example we developed involves some non-finitary steps. We have good reasons to think that by endowing the graphs with a theory of equality inspired by discrete versions of Lorentz transformations~\cite{ArrighiLorentz} this could likely be avoided. But to make this fully rigorous would require developing a theory of reversible graph rewriting modulo an equality theory, in the style of~\cite{RewritingModulo}. \\
A more basic question is whether our definition of reversibility can be relaxed so that $A_x$ does not systematically send $x$ past to $x$ future.\\
Also, even if we have proven full consistency is inherited from $A_{(-)}$ to its commutative inverse $B_{(-)}$, we may wonder if the sufficient conditions identified in \cite{ArrighiDetRew} for $A_{(-)}$ to be fully consistent, such as `privacy', are also passed on to $B_{(-)}$.\\
In~\cite{ArrighiDetRew} we prove that space-time deterministic graph rewriting is capable of simulating any cellular automata. Is it the case that reversible graph rewriting is capable of simulating any reversible cellular automata? Any Reversible Causal Graph dynamics? In particular, in~\cite{ArrighiRCGD}, it was shown that Reversible Causal Graph Dynamics are implementable as a product of commuting local involutions---the similarity between these and the local operators of this paper is intriguing and deserves investigation.\\

{\em Comparison with other works. } 
A related strand of work is that surrounding `causal consistency' in the field of process algebra~\cite{KrivineDanosCausalConsistency}. Reversibility there is not Physics-inspired. Rather it is concerned with the Computer-oriented possibility of `undoing' and is obtained by having computational processes keep track of their history; which is not something that physical processes do. Still, causal consistency makes precise that actions which have not had further consequences (in the sense of other actions depending on it) can be undone, and that otherwise one can recursively first undo these consequences and then undo the action. This is clearly a property we have here---the connection that deserves further investigation.\\
Geometry is dynamical in our work. We are aware that the dominating vocabulary to describe them is now that of Category theory~\cite{LoweAlgebraic,Taentzer,HarmerFundamentals}.
We instead use the vocabulary of dynamical systems, but we are confident that abstracting away the essential features of our formalism could yield interesting categorical frameworks, e.g. à la~\cite{Maignan,ReversiblePawel,ReversibleHarmer}.

{\em Relativistic Physics analogy.} In General Relativity the primary object is 4D space-time, but it can still be cut into successive 3D slices which can be understood as successive snapshots aka `space-like cuts'. These snapshots are very much alike the spatial configurations of dynamical systems, but with the added subtlety that the slicing is allowed be done in an irregular manner. This is the so-called `time covariance' symmetry. It entails that it is perfectly legitimate to evolve just a small region of space, whilst keeping the rest of it unchanged, as modelled by asynchronism. Reversibility in this physical context can then be understood as the idea that any two closeby snapshots, must mutually determine each other. This idea is often just discussed however. Here we aimed to make it rigorous in a Discrete Mathematics, Computer Science setting.

{\em Perspectives.}  We hope that the notion of reversibility we developed may eventually find concrete applications for distributed computing, in terms of reducing power consumption; debugging/reproducibility; transactions rollbacks; modelling reversible chemical/biological reactions.\\
An interesting prospect is to understand whether space-times of reversible local operators correspond to expansive graph subshifts~\cite{ArrighiSubshifts}, in the same way that space-times of reversible cellular automata correspond to expansive tilings. One of the difficulties in establishing such a result is that the naming conventions we adopt for our vertices systematically prevent our space-times from being cyclic, even when the dynamics described is periodic---whereas the corresponding graph subshift will in fact be cyclic.\\
We, on the other hand, are likely to focus on the quantum regimes of these graph rewriting models.

\begin{credits}
\subsubsection{\ackname} This work has been partially funded by the European Union through the MSCA SE project QCOMICAL, by the French National Research Agency (ANR): projects TaQC ANR-22-CE47-0012 and within the framework of ``Plan France 2030'', under the research projects EPIQ ANR-22-PETQ-0007, OQULUS ANR-23-PETQ-0013, HQI-Acquisition ANR-22-PNCQ-0001 and HQI-R\&D
ANR-22-PNCQ-0002, and by the ID \#62312 grant from the John Templeton Foundation, as part of the \href{https://www.templeton.org/grant/the-quantum-information-structure-of-spacetime-qiss-second-phase}{‘The Quantum Information Structure of Spacetime’ Project (QISS) }. The opinions expressed in this project/publication are those of the author(s) and do not necessarily reflect the views of the John Templeton Foundation.

\end{credits}
%
%

\bibliographystyle{splncs04}
\bibliography{biblio}

\begin{thebibliography}{10}
\providecommand{\url}[1]{\texttt{#1}}
\providecommand{\urlprefix}{URL }
\providecommand{\doi}[1]{https://doi.org/#1}

\bibitem{ArrighiDetRew}
Arrighi, P., Costes, M., Maignan, L.: Space-time deterministic graph rewriting Pre-print arXiv:2404.05838

\bibitem{ArrighiCGD}
Arrighi, P., Dowek, G.: {Causal graph dynamics}. In: Proceedings of ICALP 2012, LNCS. vol.~7392, pp. 54--66 (2012)

\bibitem{ArrighiQNT}
Arrighi, P., Durbec, A., Wilson, M.: Quantum networks theory. Quantum  \textbf{8} (2024), pre-print arXiv:2110.10587

\bibitem{ArrighiCayley}
Arrighi, P., Martiel, S.: {Generalized Cayley graphs and cellular automata over them}. In: Proceedings of GCM 2012, Bremen, September 2012. Pre-print arXiv:1212.0027. pp. 129--143 (2012)

\bibitem{ArrighiBLOCKREP}
Arrighi, P., Nesme, V.: {A simple block representation of Reversible Cellular Automata with time-symmetry}. In: 17th International Workshop on Cellular Automata and Discrete Complex Systems, (AUTOMATA 2011), Santiago de Chile, November 2011. (2011)

\bibitem{ArrighiLorentz}
Arrighi, P., Patricot, C.: {A note on the correspondence between qubit quantum operations and special relativity}. Journal of Physics A: Mathematical and General  \textbf{36}(20),  L287--L296 (2003)

\bibitem{ArrighiOverview}
Arrighi, P.: An overview of quantum cellular automata. Natural Computing  \textbf{18}, ~885 (2019), arXiv preprint arXiv:1904.12956

\bibitem{ArrighiSubshifts}
Arrighi, P., Durbec, A., Guillon, P.: Graph subshifts. In: Vedova, G.D., Dundua, B., Lempp, S., Manea, F. (eds.) Unity of Logic and Computation - 19th Conference on Computability in Europe, CiE 2023, Batumi, Georgia, July 24-28, 2023, Proceedings. Lecture Notes in Computer Science, vol. 13967, pp. 261--274. Springer (2023). \doi{10.1007/978-3-031-36978-0\_21}, \url{https://doi.org/10.1007/978-3-031-36978-0\_21}

\bibitem{ArrighiCreation}
Arrighi, P., Durbec, N., Emmanuel, A.: Reversibility vs local creation/destruction. In: Proceedings of RC 2019, LLNCS. vol. 11497, pp. 51--66. Springer (2019). \doi{10.1007/978-3-030-21500-2_4}

\bibitem{ArrighiRCGD}
Arrighi, P., Martiel, S., Perdrix, S.: Reversible causal graph dynamics: invertibility, block representation, vertex-preservation. Natural Computing  \textbf{19}(1),  157--178 (2020), pre-print arXiv:1502.04368

\bibitem{ReversibleReactions}
Bagossy, A., Vaszil, G.: Simulating reversible computation with reaction systems. Journal of Membrane Computing  \textbf{2}(3),  179--193 (2020)

\bibitem{HarmerFundamentals}
Behr, N., Harmer, R., Krivine, J.: Fundamentals of compositional rewriting theory. Journal of Logical and Algebraic Methods in Programming  \textbf{135},  100893 (2023)

\bibitem{CausalSets}
Bombelli, L., Lee, J., Meyer, D., Sorkin, R.D.: Space-time as a causal set. Physical review letters  \textbf{59}(5), ~521 (1987)

\bibitem{KrivineKappa}
Boutillier, P., Maasha, M., Li, X., Medina-Abarca, H.F., Krivine, J., Feret, J., Cristescu, I., Forbes, A.G., Fontana, W.: The kappa platform for rule-based modeling. Bioinformatics  \textbf{34}(13),  i583--i592 (2018)

\bibitem{Chalopin}
Chalopin, J., Das, S., Widmayer, P.: Deterministic symmetric rendezvous in arbitrary graphs: overcoming anonymity, failures and uncertainty. In: Search Theory, pp. 175--195. Springer (2013)

\bibitem{KrivineDanosCausalConsistency}
Danos, V., Krivine, J.: Reversible communicating systems. In: International Conference on Concurrency Theory. pp. 292--307. Springer (2004)

\bibitem{GajardoTimeSym}
Gajardo, A., Kari, J., Moreira, A.: On time-symmetry in cellular automata. Journal of Computer and System Sciences  \textbf{78}(4),  1115--1126 (2012)

\bibitem{ReversibleDebugging}
Giachino, E., Lanese, I., Mezzina, C.A.: Causal-consistent reversible debugging. In: International Conference on Fundamental Approaches to Software Engineering. pp. 370--384. Springer (2014)

\bibitem{ReversibleHarmer}
Harmer, R., Oshurko, E.: Reversibility and composition of rewriting in hierarchies. arXiv preprint arXiv:2012.01661  (2020)

\bibitem{Hedlund}
Hedlund, G.A.: {Endomorphisms and automorphisms of the shift dynamical system}. Math. Systems Theory  \textbf{3},  320--375 (1969)

\bibitem{RewritingModulo}
Jouannaud, J.P.: Confluence of terminating rewriting computations. In: The French School of Programming, pp. 265--306. Springer (2023)

\bibitem{KariBlock}
Kari, J.: {Representation of reversible cellular automata with block permutations}. Theory of Computing Systems  \textbf{29}(1),  47--61 (1996)

\bibitem{Landauer}
Landauer, R.: {Irreversibility and heat generation in the computing process}. IBM Journal of Research and Development  \textbf{5}(3),  183--191 (1961)

\bibitem{LoweAlgebraic}
L{\"o}we, M.: Algebraic approach to single-pushout graph transformation. Theoretical Computer Science  \textbf{109}(1-2),  181--224 (1993)

\bibitem{Maignan}
Maignan, L., Spicher, A.: Global graph transformations. In: Proceedings of the 6th International Workshop on Graph Computation Models, L'Aquila, Italy, July 20, 2015. pp. 34--49 (2015)

\bibitem{AsynchronousSimsSync}
Nishimura, N.: Efficient asynchronous simulation of a class of synchronous parallel algorithms. Journal of Computer and System Sciences  \textbf{50}(1),  98--113 (1995)

\bibitem{PapazianRemila}
Papazian, C., Remila, E.: Hyperbolic recognition by graph automata. In: Automata, languages and programming: 29th international colloquium, ICALP 2002, M{\'a}laga, Spain, July 8-13, 2002: proceedings. vol.~2380, p.~330. Springer Verlag (2002)

\bibitem{ReversiblePawel}
Sobocinski, P.: Reversing graph transformations. Electronic Communications of the EASST  \textbf{2} (2006)

\bibitem{Sorkin}
Sorkin, R.: {Time-evolution problem in Regge calculus}. Phys. Rev. D.  \textbf{12}(2),  385--396 (1975)

\bibitem{Taentzer}
Taentzer, G.: Parallel and distributed graph transformation: Formal description and application to communication-based systems. Ph.D. thesis, Technische Universitat Berlin (1996)

\bibitem{ReversibleRollback}
Vidal, G.: From reversible computation to checkpoint-based rollback recovery for message-passing concurrent programs. In: C{\'a}mara, J., Jongmans, S.S. (eds.) Formal Aspects of Component Software. pp. 103--123. Springer Nature Switzerland, Cham (2024)

\end{thebibliography}

\appendix

\section{On renaming-invariance and name-preservation}\label{sec:name-preservation}

\begin{lemma}[Renaming expressivity]\label{lemrenamingexpressivity}
    Let $G\in\calG$ a finite graph. There exists a renaming $R$ such that $V_{RG}\PA{\subseteq} \mathbb{N}$.
\end{lemma}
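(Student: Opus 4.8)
The plan is to give each vertex of $G$ a fresh integer name, but it is cleanest to first build a renaming doing the \emph{opposite}---sending chosen integers onto the vertices of $G$---and then invert it. Since $G$ is finite, $V_G=\{u_1,\dots,u_n\}$ is a finite subset of $\calV$. I would pick $n$ distinct integers $c_1,\dots,c_n\in\mathbb{N}$ and aim to define a renaming $S$ with $S(c_i)=u_i$ for every $i$, then set $R:=S^{-1}$. As the inverse of a renaming is again a renaming (the inverse of a bijective homomorphism is a homomorphism, and $\sp{S^{-1}}=(\sp{S})^{-1}$ is again a bijection of $\X$), and as $S(c_i)=u_i$ gives $R(u_i)=c_i\in\mathbb{N}$, this immediately yields $V_{RG}\subseteq\mathbb{N}$.

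The substance is to check that such an $S$ is a legitimate renaming, i.e. that the partial assignment $c_i\mapsto u_i$ on the generators $\mathbb{N}$ extends to a homomorphism whose position map $\sp{S}:\X\to\X$ is a bijection. Recall $S$ is determined by its values on $\mathbb{N}$, and here $\sp{S}(c_i)=\sp{u_i}$. A direct check shows that $\sp{S}$ is injective exactly when the images of distinct generators share no common descendant in the position tree, i.e. when $\sp{S(m)}.p=\sp{S(n)}.p'$ implies $m=n$. For the $c_i$ this is precisely the \emph{non-overlapping positions} axiom of Def.~\ref{def : graphs}: writing $u_i=t_i.\sp{u_i}$ and $u_j=t_j.\sp{u_j}$, any equality $\sp{u_i}.p=\sp{u_j}.p'$ forces $u_i=u_j$, hence $i=j$. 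The time tags carried by the $u_i$ are harmless: a renaming transports time tags homomorphically ($S(t.v)=t.S(v)$), so $\sp{S}$ depends only on the position parts, while the inverse $R$ simply sends each position $\sp{u_i}$ to a \emph{compensating} time-tagged natural $(-t_i).c_i$, so that $R(u_i)=R(t_i.\sp{u_i})=t_i.((-t_i).c_i)=0.c_i=c_i$. One also checks that $S$ is injective on all of $\calV$, the time-tag decorations being recovered faithfully once $\sp{S}$ is injective on $\X$.

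The main obstacle is the surjectivity/extension step: I must still choose the images $\sp{S}(j)$ of the remaining generators $j\in\mathbb{N}\setminus\{c_i\}$ so that $\sp{S}$ becomes an onto map of $\X$, while keeping all generator-images pairwise independent. Geometrically, the finitely many independent positions $\sp{u_1},\dots,\sp{u_n}$ carve the position tree into finitely many pairwise-disjoint cones; the complement again decomposes into finitely many cones, and one maps the (infinitely many) leftover generators $j$ bijectively onto a generating family for this complement. This cone-filling/back-and-forth argument, legitimate because $\X$ is the free $(.l,.r,\lor)$-algebra over $\mathbb{N}$, is the one genuinely combinatorial point. Once it is carried out, $\sp{S}$ is a bijection, $S$ is a renaming, and $R=S^{-1}$ is the sought renaming with $V_{RG}\subseteq\mathbb{N}$.
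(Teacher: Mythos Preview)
Your approach is correct and genuinely different from the paper's. The paper proceeds \emph{iteratively} in two phases: first it eliminates all $m.p$ patterns by sending each troublesome base integer $m$ to a carefully built $\lor$-term $u$ with $u.p=k$, $u.q=l,\ldots$ fresh; second it eliminates all $m\lor n$ patterns by the substitution $m\mapsto m.l$, $n\mapsto m.r$. Each step modifies a single generator and then ``completes to a bijection over $\X$'', so the completion is trivial (a shift on the countably many fresh integers). Your one-shot construction via the inverse $R=S^{-1}$ is cleaner conceptually---one renaming instead of a composition of many---but the price is that your completion step is heavier: you must simultaneously extend finitely many non-overlapping positions $\sp{u_1},\ldots,\sp{u_n}$ to a full generating family. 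You correctly identify this as the crux and sketch the cone-decomposition argument; the paper sidesteps it by keeping each individual step so small that the extension is obvious. Your handling of time tags via $R(\sp{u_i})=(-t_i).c_i$ is neat and correct, and your appeal to the ``test condition'' for injectivity (which the paper in fact states in a commented-out lemma) via the non-overlapping positions axiom is exactly right.
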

\begin{proof}\label{proof : expressivity of renaming}
    Suppose that $G$ has nodes $u, v,\ldots$ in which $m.p, m.q,\ldots$, occurs. We first show how to construct $R$ so that $m.p$ becomes $k$, $m.q$ becomes $l$,\ldots, with all the rest kept unchanged.\\
    Indeed, consider some fresh integers $k, l,\ldots$ that do not occur in anywhere in $V_G$. This is always possible because $G$ is finite and node names are finite trees. Build a new name $u$ out of fresh integers and the $\vee$ operator, so that $u.p=k$ and $u.q=l$. This is always possible because due `non-overlapping positions' condition of Def.~\ref{def : graphs}, $p$ cannot be a prefix of $q$ and reciprocally. Consider $R$ which acts as the identity on all of the integers occurring in $V_G$, except for $m$ which is sent to $u$. Complete $R$ to be a bijection over $\X$. Extend $R$ to act over $\calV$. This is the appropriate renaming since $R(m.p)=R(m).p=u.p=k$, $R(m.q)=R(m).q=u.q=l$\ldots\\
    If $RG$ has other nodes $u', v',\ldots$ in which $m'.p', m'.q',\ldots$, occur, we again build $R'$ and consider $\cdots R'RG$ until we reach a graph $H$ whose names are made of solely out of integers and the $\vee$ operator.\\
    Next suppose that $H$ has a node $u$ in which $m\vee n$, occurs. We show how to construct $S$ so that $m\vee n$ becomes $m$, with all the rest kept unchanged.\\
    Indeed consider $S$ which acts as the identity on all of the integers occurring in $V_G$, except for $m$ which is sent to $m.l$ and $n$ which is sent to $m.r$. Complete $S$ to be a bijection over $\X$. Extend $S$ to act over $\calV$. This is the appropriate renaming since $R(m\vee n)=R(m)\vee R(n)=m.l\vee m.r=m$. It does not act anywhere else, since by the non-overlapping positions' condition of Def.~\ref{def : graphs}, the integers $m$ and $n$ did not occur anywhere else in $H$.\\
    If $SH$ has other nodes $u$ in which $m'\vee n'$, occur, we again build $S'$ and consider $\cdots S'SG$ until we reach a graph $G'$ whose names are made of solely out of integers.\\
    The renaming of the lemma is $(\cdots S'S\cdots R'R)$, it does map $G$ into a $G'$ whose names belong to $\mathbb{N}$.
\end{proof}

\begin{proposition}[Name-preservation]\label{prop:renaminginvarianceimpliesnp}~\\
Let ${F_x}:{\sshift}\to \calV^*$ be a renaming-invariant function, then $F_x(G) \namesubseteq V_G \cup \{x\}$.\\
Let ${A_x}:\sshift\to\sshift$ 
be a renaming-invariant bijective function, then $V_{A_x G}\cup \{x\} \nameeq V_{G}\cup \{x\}$.\\
Let $A_x:{\sshift}\to {\sshift}$ be a local operator, then $V_{A_x G} \namesubseteq V_G$.\\
Let $A_x:{\sshift}\to {\sshift}$ be a reversible local operator, then $V_{A_x G}\nameeq V_G$.
\end{proposition}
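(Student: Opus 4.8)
The plan is to prove the four statements in the listed order, the first being the engine and the other three being corollaries obtained by composing it with inverses and with the cone structure. For the first statement the key tool is \emph{equivariance}. Since $F_{(-)}$ is renaming-invariant, every renaming $R$ with $RG=G$ and $\sp{R(x)}=x$ satisfies $R(F_x(G))=F_{\sp{R(x)}}(RG)=F_x(G)$; hence the \emph{finite} set of names occurring in $F_x(G)\in\calV^*$ is globally stabilised by the whole group of such renamings. I would then prove the contrapositive: if a name $w$ occurring in the output lies outside $\namealg{V_G\cup\{x\}}$, then its orbit under this stabiliser is infinite, contradicting the finiteness of $F_x(G)$.

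To make the orbit explicit I would first invoke Lemma~\ref{lemrenamingexpressivity} to rename $G$ (carrying $x$ along) so that $V_G\cup\{x\}\subseteq\mathbb N$; for an integer-supported generating set one has the clean identity $\namealg{V_G\cup\{x\}}=\{w\mid \mathrm{supp}(w)\subseteq\mathrm{supp}(V_G\cup\{x\})\}$, so $w\notin\namealg{V_G\cup\{x\}}$ means $w$ carries a leaf-integer $m$ absent from $\mathrm{supp}(V_G\cup\{x\})$. Permuting $m$ among infinitely many fresh integers, as the identity on $\mathrm{supp}(V_G\cup\{x\})$, is a legitimate renaming (an integer permutation always extends to a position-bijection) that fixes $G$ and $x$ but sends $w$ to infinitely many distinct names, which is the desired contradiction. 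One transports the conclusion back along the renaming of Lemma~\ref{lemrenamingexpressivity} using that renamings commute with closures, $R(\namealg U)=\namealg{RU}$. This orbit-construction step is the crux of the whole proposition, and I expect it to be the main obstacle.

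The second statement follows by applying the first twice. I first note that the inverse of a renaming-invariant bijection is again renaming-invariant: from $RA_x=A_{\sp{Rx}}R$ one obtains $A_{\sp{Rx}}^{-1}R=RA_x^{-1}$, hence $A^{-1}_{(-)}$ is renaming-invariant. Applying the first statement to $A_x$ gives $V_{A_xG}\namesubseteq V_G\cup\{x\}$, and to $A_x^{-1}$ at the graph $A_xG$ gives $V_G\namesubseteq V_{A_xG}\cup\{x\}$; adjoining $\{x\}$ to both inclusions and combining yields $V_{A_xG}\cup\{x\}\nameeq V_G\cup\{x\}$. For the third statement, a local operator is renaming-invariant, so the first statement already gives $V_{A_xG}\namesubseteq V_G\cup\{x\}$, and it only remains to absorb $x$ into $V_G$. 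When $A_x$ acts trivially this is immediate; otherwise $G\in\graphvalid{\restriA{}}{x}$ and the \emph{cone} clause of Def.~\ref{def:neighbourhood} forces $G_{\restriA{x}}$ to contain a vertex at position $x$, i.e. some $u\in V_G$ with $\sp u=x$. Because every vertex name has the shape $t.x'$ with $x'\in\X$ underlying the non-overlapping-positions clause of Def.~\ref{def : graphs}, one recovers $x=\sp u=(-t).u\in\namealg{\{u\}}\subseteq\namealg{V_G}$, so $\{x\}\namesubseteq V_G$ and $V_G\cup\{x\}\nameeq V_G$, whence $V_{A_xG}\namesubseteq V_G$. Finally, the fourth statement combines the third with reversibility: the local inverse $B_{(-)}$ of Def.~\ref{def:rev} is itself a (backwards) local operator and satisfies $B_xA_xG=G$, so the third statement applied to $B_x$ at $A_xG$ gives $V_G=V_{B_xA_xG}\namesubseteq V_{A_xG}$, which together with $V_{A_xG}\namesubseteq V_G$ yields $V_{A_xG}\nameeq V_G$.

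Beyond the orbit construction, the remaining technical point is the passage from the finite-output setting of the first statement to the possibly infinite vertex sets handled in the second through fourth statements. I would address it by observing that each vertex name is a finite term and that membership in a closure is finitary, so only a finite subset of the generators of $V_G\cup\{x\}$ is ever relevant to a given name; one then argues name-by-name, relocating any name outside the closure to a genuinely fresh integer and using renaming-invariance to reach a contradiction, exactly as in the finite case treated via Lemma~\ref{lemrenamingexpressivity}.
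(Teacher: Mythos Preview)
Your treatment of parts 1, 2 and 4 coincides with the paper's: the equivariance/orbit argument for part~1, renaming-invariance of the inverse (Lemma~\ref{lem:inverserenamings}) plus part~1 applied both ways for part~2, and part~3 applied to $A_x$ and to the local inverse $B_x$ for part~4.

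For part~3 your route diverges from the paper's. You reuse part~1 (renaming-invariance alone) to get $V_{A_xG}\namesubseteq V_G\cup\{x\}$, and then absorb $x$ via the cone vertex $u$ with $\sp u=x$ and the observation $x\in\namealg{\{u\}}$. The paper does \emph{not} route through part~1 here. Instead it exploits the locality decomposition $A_xG=(A_xG_{\restriA{x}})\sqcup G_{\crestriA{x}}$, the closure of $\sshift$ under disjoint union and renaming, and strong extensivity: from a hypothetical fresh $w\in\namealg{I}_{A_xG_{\restriA{x}}}\setminus\namealg{V}_G$ it manufactures $G'=G\sqcup RG\in\sshift$ with $u=w.p\in V_{RG}$, so that $A_xG'=(A_xG_{\restriA{x}})\sqcup G'_{\crestriA{x}}$ would violate the non-overlapping-positions axiom of Def.~\ref{def : graphs}, contradicting $A_x:\sshift\to\sshift$.

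There is a genuine gap in your route for part~3. The orbit argument underlying part~1 needs to send the offending name $v$ \emph{outside} $\namealg{F_x(G)}$; this is automatic when $F_x(G)\in\calV^*$ is finite (finite integer support, so a truly fresh integer exists), but for a local operator on an infinite $G$ the set $V_{A_xG}$ may be infinite. Your ``name-by-name'' patch only yields that the orbit of $v$ under the stabiliser of $V_G\cup\{x\}$ lies inside $V_{A_xG}$, which is no contradiction when $V_{A_xG}$ is infinite. The paper's construction sidesteps this precisely by not appealing to part~1: it produces one concrete $G'\in\sshift$ on which $A_x$ would output a non-graph. Your absorption of $x$ into $\namealg{V_G}$ via the cone vertex is correct and pleasant, but it does not repair the preceding inclusion in the infinite case.
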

\begin{proof}
$[\textrm{Renaming-invariance implies }F_x(G) \namesubseteq V_G\cup\{x\}]$ \\
By contradiction. Say $v\in \namealg{F}_x(G)\text{ \ and \ } v\notin \namealg{V_G \cup \{x\}}$. 
Pick $R$ such that $RG=G$, $R(x)=x$, $R(v) \notin \namealg{F}_x(G)$, i.e. map $v$ into a fresh name $v'$ whilst preserving $x$ and $G$. We have:
\begin{align*}
RF_x(G) =& F_{\sp{R( x)}}(RG)\quad\text{by renaming-invariance.}\\
=& F_{x}(G)\quad\text{by choice of $R$.}
\end{align*}
There are infinitely many such $R$, and since $v\in \namealg{F}_x(G)$, there are infinitely many such $RF_x(G)$. It follows that $F_{x}(G)$ is not determined, hence the contradiction. 
The result follows, from which we also have that ${V_{A_x G} \cup \{x\}} \namesubseteq {V_G \cup \{x\}}$.

$[\textrm{Renaming-invariant bijective implies }V_{A_x G}\cup \{x\} \nameeq V_{G}\cup \{x\}]$ \\
By Lem.~\ref{lem:inverserenamings}, $A_{x}^{-1}$ is also renaming-invariant.\\
So, the same reasoning applies. 
We therefore have that ${V_{A_x G} \cup \{x\}} \namesubseteq {V_G \cup \{x\}}$.

$[$\textrm{Locality implies }$V_{A_x G} \namesubseteq V_G$ $]$~\\
If $G\notin \graphvalid{\restri{}}{x}$, $A_xG=G$ makes it trivial. So we consider $G\in  \graphvalid{\restri{}}{x}$ and have $A_xG=(A_x G_\restri{x}) \sqcup G_\crestri{x}$. For this $\sqcup$ to be defined we need $B_{A_x G_\restri{x}}=B_{G_\restri{x}}$. Thus, for any new name $v\in \namealg{V}_{A_x G}\setminus\namealg{V}_G$ to exist, there must be some $w\in \namealg{I}_{A_x G_\restri{x}}\setminus\namealg{V}_{G}$. This means there exists $p\in\{l,r\}^*$ such that $u=w.p$ and $u\notin \namealg{V}_{G}$.\\
Construct a graph $G'\in \sshift$ such that $ I_G\cup \{u\} \subseteq I_{G'}$ and $G\sqsubseteq G'$. Such a graph can exist since $u\notin\namealg{V}_{G}$, but we must make sure that $G\in \sshift$. This can be done by first constructing $RG$ a copy of $G$ which contains only fresh names, including $u$. This $RG\in \sshift$ since $\sshift$ is closed under renaming. Second, since $\sshift$ is closed under disjoint union, we can take $G'= G\sqcup RG$.
Note that we still have $G_{\restri{x}}\sqsubseteq G'$, and so by strong extensivity, $G'_\restri{x}=G_\restri{x}$, from which it follows that $A_xG'=(A_x G_\restri{x}) \sqcup G'_\crestri{x}$. But this $A_xG'$ is undefined as $w\in I_{A_x G_\restri{x}}$ and $w.p=u\in G'_\crestri{x}$ violates the  non-overlapping positions condition.

$[$\textrm{Reversibility implies }$V_{A_x G} \namesubseteq V_G$ $]$~\\
By reversibility there is a local operator $B_x$ such that $G=B_{x}(A_x G)$. Using twice the above, we have ${V_G}=V_{B_xA_xG} \namesubseteq V_{A_x G} \namesubseteq V_G$.
\end{proof}

\section{On causal rewrite systems}\label{app:mutex}

\begin{lemma}[Mutex cones define neighbourhoods]\label{lemma : neighbourhood from 0 to x}
    Let \PA{$\mathcal{D}:x\mapsto \D_x$ define} a mutex set of cones. Let $\restriD{\D}{}$ be the function that maps any $x\in \mathcal{X}$ to a function 
    $\restriD{\mathcal{D}}{x}:{\graphvalidmutex{\D}{x}} \to P(\mathcal{X})$, $G\mapsto\restrD{\mathcal{D}}{x}{G}$, with $\restrD{\D}{x}{G}$ as given by Def.~\ref{def : mutex cones}. 
    This $\restriD{\mathcal{D}}{}$ is a well-defined neighbourhood scheme.
\end{lemma}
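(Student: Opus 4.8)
The plan is to check, item by item, that $\restriD{\D}{}$ satisfies the definition of a (forward) neighbourhood scheme (Def.~\ref{def:neighbourhood}), reading off each required property from the matching axiom of the mutex set of cones (Def.~\ref{def : mutex cones}).

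I would first dispatch \emph{well-definedness}: that for $G\in\graphvalidmutex{\D}{x}$ the cone $D\in\D_x$ with $D\sqsubseteq G$ is unique, so that $\restrD{\D}{x}{G}=\sp{\I_D}$ is a genuine function. Suppose $D,D'\in\D_x$ both satisfy $D\sqsubseteq G$ and $D'\sqsubseteq G$. By the non-overlapping-positions condition of Def.~\ref{def : graphs} the vertex at position $x$ is unique in $G$, so $D$ and $D'$ share the same root; hence the induced subgraph $C:=G_{\I_D\cup\I_{D'}}$ is again a cone of $x$, and $C\in\sshift$ by inclusion-closure. Since $D=C_{\I_D}$ and $D'=C_{\I_{D'}}$ are both $\D$-cones below the single cone $C$, mutex unambiguity forces $D=D'$. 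The same non-overlapping-positions remark gives $G_{\restriD{\D}{x}}=G_{\sp{\I_D}}=G_{\I_D}=D$, which I reuse throughout.

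The \emph{cone} and \emph{strong-extensivity} axioms are then short. For cone: $\restrD{\D}{x}{G}=\sp{\I_D}\subseteq\sp{\I_G}$ because $\I_D\subseteq\I_G$, and $G_{\restriD{\D}{x}}=D\in\D_x\subseteq\xcone{x}=\xCone$. For strong extensivity: if $G_{\restriD{\D}{x}}=D\sqsubseteq H$ with $H\in\sshift$, then $H$ contains the $\D_x$-cone $D$, so $H\in\graphvalidmutex{\D}{x}$, and by the uniqueness just proved the $\D_x$-cone below $H$ is again $D$, whence $H_{\restriD{\D}{x}}=D=G_{\restriD{\D}{x}}$. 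Renaming-invariance transfers directly from that of $\mathcal{D}$: for a renaming $R$, the cone $R(D)\in\D_{\sp{R(x)}}$ is the unique $\D_{\sp{R(x)}}$-cone below $RG$, so $\restrD{\D}{\sp{R(x)}}{RG}=\sp{\I_{R(D)}}=\sp{R(\sp{\I_D})}$, which is $R$ acting through $\sp{R(\cdot)}$ on $\restrD{\D}{x}{G}$ (using $\sp{R(\sp{v})}=\sp{R(v)}$, a consequence of $R(t.v)=t.R(v)$).

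The substantive step, and the one I expect to be the main obstacle, is \emph{completeness}. Starting from a \maxgraph graph $G$ with a past vertex $u$, I would form $C$, the forward cone of $\sp{u}$ in $G$, i.e. the induced subgraph on the vertices reachable from $u$. By inclusion-closure $C\in\sshift$, it is a cone of $\sp{u}$ by construction, and it is \maxgraph since every edge of $C$ has a target reachable from $u$, hence internal; thus $C\in\xcone{}\maxi$. Mutex completeness then yields some $D\in\D_y$ with $D\sqsubseteq C\sqsubseteq G$, so $G\in\graphvalidmutex{\D}{y}$ and a neighbourhood of $G$ is defined at $y$. The delicate part is localising this index relative to $\sp{u}$. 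I would try to identify the root $w$ of $D$ with $u$ by an acyclicity argument: since $D=C_{\I_D}$ is induced, every $C$-edge incident to $w$ already lies in $D$, and $\Past(C)=\{u\}$; when $u\in\I_D$ the directed paths $u\to\cdots\to w$ and $w\to\cdots\to u$ collapse to $w=u$. The residual case is an incoming border edge at $w$, which is precisely where the exact quantifier form of completeness (a fixed $x=\sp{u}$ versus an existentially chosen $x\namesubseteq\sp{u}$) has to be reconciled with the existential shape of mutex completeness; I would therefore establish completeness in its existential form, with $y$ as witness, and record the restricted reading needed to force $y=\sp{u}$. Collecting the four verified items shows that $\restriD{\D}{}$ is a well-defined neighbourhood scheme.
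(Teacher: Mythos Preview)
Your approach matches the paper's: verify renaming-invariance, completeness, cone, and strong extensivity by reading them off the mutex axioms. Your treatment of well-definedness (uniqueness of the witnessing $D$) is more explicit than the paper's, which simply defers to the unambiguity clause of Def.~\ref{def : mutex cones}; your construction of the cone $C:=G_{\I_D\cup\I_{D'}}$ to invoke unambiguity is correct and clean. Cone, strong extensivity, and renaming-invariance are handled essentially identically in both.

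For completeness you overcomplicate the key step. You already record the two decisive facts: (a) since $D=C_{\I_D}$ is induced, every $C$-edge into $w$ lies in $\E_D$; and (b) $\Past(C)=\{u\}$. These alone finish the argument: the root $w$ of the cone $D$ has no incoming $D$-edges, so by (a) it has no incoming $C$-edges, hence $w\in\Past(C)$, and by (b) $w=u$, so $y=\sp{u}$. There is no need for the acyclicity detour that assumes $u\in\I_D$ and builds paths in both directions, and consequently no ``residual case'' of an incoming border edge at $w$---that case is already excluded by (a). The paper's own proof is terse at exactly this point (it asserts ``with $x=\sp{u}$ as $\D$ is well-indexed'' without spelling out the step), so your instinct to flag it as delicate is well placed; you simply did not assemble the pieces you already had.
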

\begin{proof}
{\em Renaming-invariance.} We want $R\restrD{\mathcal{D}}{x}{G} = \restrD{\mathcal{D}}{\sp{R(x)}}{R G}$. 
Say that there exists $D\in \D_x$ such that $D\sqsubseteq G$. This is equivalent, by the fact that $\D$ is renaming-invariant and $\sshift$ is closed under renaming, to the existence of $RD\in \D_{\sp{R(x)}}$ such that $RD\sqsubseteq RG$. Say $D$ and $RD$ exist, by unambiguity they are unique and we have by definition that 
$D=G_\restriD{\mathcal{D}}{x}$ and $RD=(RG)_\restriD{\mathcal{D}}{\sp{R(x)}}$.
Then $R G_\restriD{\mathcal{D}}{x}=RD=(RG)_\restriD{\mathcal{D}}{\sp{R(x)}}.$ Thus,
$R\restriD{\mathcal{D}}{x}(G)=\restriD{\mathcal{D}}{\sp{R(x)}}(RG)$ as requested.
Then we check the properties of Def.\ref{def:neighbourhood}.\\
\emph{Completeness.} 
\PA{We want to show that for all $G\in\upast{u}\maxi$ \RX{with $x=\sp{u}$,}{there exists $x\in\X$ with $x\namesubseteq\sp{u}$ such that} we have $G\in\graphvalidmutex{\D}{x}$.} 
This comes from the completeness condition of a mutex set of cones. Indeed consider $G\in \upast{u}\maxi$. There exists a \maxgraph cone $C\in \xCone\maxi$ such that $C\sqsubseteq G$. Completeness then gives us the existence of some $D\in \D_x$ such that $D\sqsubseteq C\sqsubseteq G$, with \RX{$x=\sp{u}$}{$x\namesubseteq\sp{u}$} \PA{as $\D$ is well-indexed}. This proves $G\in \graphvalidmutex{\D}{x}$.\\
\RX{}{\noindent {\em Unambiguity.} We want that $\forall G\in \upast{u},\ |\{x\in\X \mid x\namesubseteq\sp{u}\textrm{ and }G\in\graphvalidmutex{\D}{x}\}|\leq 1$.
Consider $G\in \upast{u}$ and let $C$ be the cone obtained by starting from $u$ and exploring $G$ along the directed edges. Recall that $G\in\graphvalidmutex{\D}{x}$ is equivalent to the existence of $x\in\X,D\in\D_x$ with $D\sqsubseteq G$. So this implies the existence of $x\in\X,D\in\D_x$ with $D\sqsubseteq C$. But since $\D$ is unambiguous we know that $|\{D\in \xcone{} \mid \exists x\in\X,D\in \D_x\textrm{ and } D\sqsubseteq C\}|\leq 1$.}

\noindent\emph{Cone.} Let $x\in \mathcal{X}$ and $G\in \graphvalidmutex{\mathcal{D}}{x}$. \PA{So there exists a unique cone $D\in {\cal D}_x$ such that $D\sqsubseteq G$. We have $G_{\restriD{\D}{x}}= D$ which is a cone of $x$.}\\
\emph{Strong extensivity.} Let $x\in \mathcal{X}$ and $G\in \graphvalidmutex{\mathcal{D}}{x}$. 
Let $H\in \sshift$ be a graph such that $G_\restriD{\mathcal{D}}{x}\sqsubseteq H$ we have to prove $\restrD{\mathcal{D}}{x}{H} = \restrD{\mathcal{D}}{x}{G}$. This follows from the fact that $G_\restriD{\mathcal{D}}{x}$ occurs in $H$ and by unambiguity, is the only such graph.
\end{proof}

\lemmamutexneighbourhood*
\begin{proof}
    \emph{Neighbourhood schemes yield mutex sets.} We consider the set of cones $\diskA{}$. \PA{By the cone condition each $G_\restri{x}$ is an element of $\xcone{x}$, so $\diskA{x}\subseteq\xcone{x}$ and $\diskA{}$ is well-indexed.} It is renaming-invariant because $\restriA{}$ and is renaming invariant and $\sshift$ is closed under renaming. To establish the completeness of $\diskA{}$, we consider $C\in \xcone{}\maxi$ and we will prove \PA{$\exists x\in\X,\exists D\in \diskA{x}, D\sqsubseteq C$}. 
    We consider the vertex $u\in \Past(C)$ and we note $x=\sp{u}$. Since $C\in \upast{u}\maxi$, by completeness of $\restri{}$ we have $C\in \graphvalid{\restri{}}{x}$. Then $C_{\restri{x}}$ is defined, and we have $C_{\restri{x}}\in \diskA{x}$ and $C_{\restri{x}}\sqsubseteq C$. 
    To prove unambiguity consider an $H\in \sshift$ with two disk occurrences, i.e. such that there exists $G_{\mathcal{M}_x},G'_{\mathcal{M}_x}\in{ \sshift}_{\mathcal{M}_x}$ with $G_{\mathcal{M}_x} \sqsubseteq H$ and $G'_{\mathcal{M}_x} \sqsubseteq H$.
    Then by strong extensivity we have $G_{\mathcal{M}_x} = H_{\mathcal{M}_x}$ and $G'_{\mathcal{M}_x} = H_{\mathcal{M}_x}$, which implies $G_{\mathcal{M}_x} = G'_{\mathcal{M}_x}$.

    \emph{Mutex sets yield neighbourhood schemes.} To any mutex set of cones \PA{$\D:x\mapsto\D_x$}, we are associating a neighbourhood scheme $\restriD{\D}{}$ as in Lem.~\ref{lemma : neighbourhood from 0 to x}. It remains to show that is is a bijection.

    \emph{Right inverse.} Let $\restriB{}$ be a neighbourhood scheme, we want $\restriB{} = \restriD{\sshift_{\restriB{}}}{}$. Let $x\in \mathcal{X}$ be a position. First we prove that $\graphvalid{\restriB{}}{x}$ the domain of $\restriB{x}$ is equal to $\graphvalidmutex{\diskB{}}{x}$ the domain of $\restriD{\diskB{}}{x}$. Let us unravel the definitions. On one side $\graphvalidmutex{\diskB{}}{x}$ is the set of all graph $G$ such that there exists $D\in \diskB{x}$ such that $D\sqsubseteq G$. For all $G\in \graphvalid{\restriB{}}{x}$ we have $G_{\restri{x}}\sqsubseteq G$ and so $\graphvalid{\restriB{}}{x} \subseteq \graphvalidmutex{\diskB{}}{x}$. Next, by strong extensivity $\restriB{}$ is defined on all graphs containing a disk, so that we also have $ \graphvalidmutex{\diskB{}}{x}\subseteq \graphvalid{\restriB{}}{x}$.\\
    Then we consider $G\in \graphvalid{\restriB{}}{x}=\graphvalidmutex{\diskB{}}{x}$. By definition there exists $D \in \diskB{x}$ such that $D \sqsubseteq G$. 
    We then we have, by strong extensivity $G_{\restriD{\diskB{}}{x}}=D=G_{\restriB{x}}$
    from which $\restriB{} = \restriD{\sshift_{\restriB{}}}{}$ follows.

    \emph{Left inverse.}
    Now we consider \PA{$\D:x\mapsto\D_x$} a mutex set of cones, and we prove that ${\sshift}_{\restriD{\D}{}} = \D$. \PA{First we check that $\D_x\subseteq {\sshift}_{\restriD{\D}{x}}$. 
    By definition of $\restriD{\D}{x}$, we have $D_{\restriD{\D}{x}} = D$ which implies $D\in {\sshift}_{\restriD{\D}{x}}$. The reverse inclusion ${\sshift}_{\restriD{\D}{x}}\subseteq \D_x$ is by definition.} 
\end{proof}

\begin{lemma}[Full-exploration implies border-completion]\label{lemma : border completion}
    Any subset of graph $\sshift$ closed under \maximality and \bckwmaximality is such that $G\in \sshift$ implies that $\exists G'\in \sshift$ where  $V_G\subseteq I_{G'}$ and $G\sqsubseteq G'$.
\end{lemma}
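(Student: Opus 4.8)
The plan is to invoke the two closure hypotheses in succession: first apply \maximality to internalise every border vertex of $G$ that is reached \emph{as a target} of an edge, then apply \bckwmaximality to internalise those reached \emph{as a source}. The whole argument rests on two monotonicity facts about the induced-subgraph order: if $H\sqsubseteq H'$ then $\I_H\subseteq\I_{H'}$ and $\E_H\subseteq\E_{H'}$ (indeed the edges of $H$ are exactly those edges of $H'$ touching $\I_H$). So internal vertices only ever grow, and edges are never deleted, as we climb the $\sqsubseteq$-chain.

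First I would classify the border vertices of $G$. By the edge constraint $\E_G \subseteq (\V_G\!:\!\pi)^2 \setminus (\B_G\!:\!\pi)^2$ no edge joins two border vertices, so every edge touches an internal vertex; and by the border-attachment condition of Def.~\ref{def : graphs} each $v\in\B_G$ is touched by at least one edge. Hence each border vertex is the source of an edge whose target lies in $\I_G$, or the target of an edge whose source lies in $\I_G$ (possibly both). Now apply \maximality to obtain $G'\in\sshift\maxi$ with $G\sqsubseteq G'$. If $v\in\B_G$ is the target of an edge $(u\!:\!a,v\!:\!b)\in\E_G$ with $u\in\I_G$, then this edge lies in $\E_{G'}$, and since $G'\in\sshift\maxi$ forces every edge-target to be internal, $v\in\I_{G'}$. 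Thus the forward step internalises every target-border of $G$.

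Next apply \bckwmaximality to $G'$, obtaining $G''\in{}\maxi\!\sshift$ with $G'\sqsubseteq G''$. If $v\in\B_G$ is the source of an edge $(v\!:\!a,w\!:\!b)\in\E_G$ with $w\in\I_G$, this edge lies in $\E_G\subseteq\E_{G'}\subseteq\E_{G''}$, and since $G''\in{}\maxi\!\sshift$ forces every edge-source to be internal, $v\in\I_{G''}$. Crucially, the target-borders already internalised in $G'$ stay internal in $G''$ because $\I_{G'}\subseteq\I_{G''}$; the backward step cannot undo the forward step. Combining the two cases, every $v\in\B_G$ lies in $\I_{G''}$, and since also $\I_G\subseteq\I_{G'}\subseteq\I_{G''}$, we conclude $\V_G\subseteq\I_{G''}$. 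By transitivity of $\sqsubseteq$ we have $G\sqsubseteq G''$, and $G''\in\sshift$ because $\sshift\maxi,{}\maxi\!\sshift\subseteq\sshift$; so $G''$ is the graph $G'$ required by the statement.

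The main obstacle, such as it is, is the monotonicity bookkeeping rather than any single hard step: one must be careful that the edges witnessing each border vertex persist all the way along $G\sqsubseteq G'\sqsubseteq G''$ (so that the full-exploration membership condition of $\sshift\maxi$ resp. ${}\maxi\!\sshift$ can actually be applied to them), and that internalising targets first and sources second covers \emph{all} borders precisely because $\I$ is monotone under $\sqsubseteq$. Everything else is a direct unfolding of the definitions of $\sshift\maxi$, ${}\maxi\!\sshift$, induced subgraph, and border-attachment.
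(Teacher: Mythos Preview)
Your proof is correct and follows essentially the same two-step approach as the paper: apply \maximality to obtain a supergraph with no forward dangling edges (internalising all target-borders of $G$), then apply \bckwmaximality to that graph to eliminate backward dangling edges (internalising the remaining source-borders), using monotonicity of $\I$ and $\E$ under $\sqsubseteq$ to ensure nothing is lost along the chain. Your write-up is more explicit about the monotonicity bookkeeping than the paper's terse version, but the underlying argument is the same.
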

\begin{proof}
    We get by \maximality a graph $H$ which does not contain any forward dangling edges. This means that $B_G\cap B_H$ only contains border vertices connected by backward dangling edges. Then we apply the \bckwmaximality condition on $H$. We get a graph $G'$ such that $G\sqsubseteq H\sqsubseteq G'$ with no backward dangling edges. This means that $B_{G'}\cap B_G=\emptyset$, which implies by inclusion $V_G\subseteq I_{G'}$.
\end{proof}

\propcharacmutex*
\begin{proof}

\emph{Local operator yield causal rewrite system.} Let $A_{(-)}$ be a $\restriA{}$-local operator. We consider the following rewrite system :
$$\{ D \to A_x D\}_{x\in\X,D\in \diskA{x}}$$
It is by definition functional. It has a mutex domain $\diskA{}$ by Lem.~\ref{lemma : equivalence set mutex and neighbourhood schemes}. Renaming invariance comes directly from the renaming invariance of $A_{(-)}$. Finally we show it is context-preserving. First $F_{A_xD}=F_{D}$ can be derived from locality and the border completion of $\sshift$ (Lem.\ref{lemma : border completion}). Second this entails $B_{A_xD}=B_{D}$, but we also have $V_{A_xD}\namesubseteq V_D$ by Prop.~\ref{prop:renaminginvarianceimpliesnp} (local operator case), and so $I_{A_xD} \namesubseteq {I}_{D}$ . 

\emph{Causal rewrite system yield local operator.} Let $\{D_x^j\to G_x^j\}_{x\in\X,j\in J}$ be a deterministic rewrite system. We consider the neighbourhood scheme $\restriA{}^{\mathcal{D}}$ defined in Lem.~\ref{lemma : equivalence set mutex and neighbourhood schemes} with $\D:x\mapsto\{D_x^j\mid j\in J\}$. Note that $\restriD{\D}{x}$ admits as domain $\graphvalidmutex{\D}{x}$, i.e. the set of all graph $G$ which can be written $G= D_x^j \sqcup H$. Let us show that the function $A_{(-)} : \mathcal{X}\times {\sshift}\to {\sshift}$ such that $\forall G\in {\sshift}$, $\forall x\in \mathcal{X}$ :
$$A_x G := 
\begin{cases}
G_x^j \sqcup G_{\overline{\restriD{\D}{x}}}&\textrm{if $G\in \graphvalidmutex{\D}{x}$ with $G = D_x^j \sqcup G_{\overline{\restriD{\D}{x}}}$}\\
G&\textrm{otherwise}
\end{cases}
$$
is a well-defined $\restriD{D}{}$-local operator. Indeed $G_x^j \sqcup G_{\overline{\restriD{\D}{x}}}$ is a graph because of the context preserving condition, and this graph belongs to $\sshift$ because the causal rewrite system is $\sshift$-preserving. Moreover it is renaming invariant because $\sshift$ is closed under renaming, $\restriD{D}{}$ is renaming invariant by Lem.~\ref{lemma : neighbourhood from 0 to x} and $\{D_x^j\to G_x^j\}_{x\in\X,j\in J}$ is renaming invariant. Thus this $A_{(-)}$ is indeed a $\restriD{D}{}$-local operator.

\emph{Right inverse.} We prove that for any pair $(\restriA{}$,$A_{(-)})$
, mapping it to a rewrite system $\{ C \to A_x C\}_{x\in\X,C \in \sshift_{\restriD{}{x}}}$, and then mapping the obtained rewrite system back to a pair $(\restriD{\diskA{}}{}, A_{(-)}')$ as in the previous paragraph, acts as the identity. 
First we note that by bijectivity of the map between mutex sets and neighbourhood functions proven in Lem.~\ref{lemma : equivalence set mutex and neighbourhood schemes}, we have $\restriD{\diskA{}}{}=\restriA{}$. Then we just have to check that for all $x\in\mathcal{X}$ and $G= D_x^j \sqcup G_{\crestri{x}}$ we have $A_{x}' G = A_{x} G$. It is the case : 
$$A_{x}' G = G_x^j \sqcup G_{\crestri{x}} = A_x D_x^j \sqcup G_{\crestri{x}} = A_x G_{\restriA{x}} \sqcup G_{\crestri{x}}=A_x G.$$

\emph{Left inverse.} Now we consider a causal rewrite system $\{D_x^j\to G_x^j\}_{x\in\X,j\in J}$, and we prove that mapping it to a neighbourhood scheme and a local rule, and then mapping the obtained local rule back to a rewrite system, acts as the identity. Let $\D:x\mapsto \{D_x^j\}_{j\in J}$. We build the associated $\restriD{\D}{}$-local rule $A_{(-)}$. Note how this rule is such that $A_x D_x^j = G_x^j$. Consider its associated causal rewrite system :
    \begin{align*}
        \{ D \to A_x D\}_{x\in \X , D\in {\sshift}_{\restriD{\D}{x}}} &= \{ D \to A_x D\}_{x\in \X ,D\in \mathcal{D}_x}\tag{Lem.~\ref{lemma : equivalence set mutex and neighbourhood schemes}}\\
        &= \{ D_x^j \to G_x^j\}_{x\in\X,j\in J}
    \end{align*}
which concludes the proof.
\end{proof}

\section{On reversibility}\label{app:reversibility}

\lemuniqueness*\label{proof : uniqueness of the inverse}
\begin{proof}
    We must prove that for all $G\in \sshift$ and $x\in \mathcal{X}$ we have $B_x G = B'_x G$.
    If both $B_{x}$ and $B'_{x}$ act trivially on $G$ the result is immediate. Let us suppose that $B_{x}$ acts non trivially on $G$. This entails \PA{$G\in \graphvalid{\restriB{}}{x}$}. The right inverse property then implies \PA{$B_x G \in \graphvalid{\restriA{}}{x}$}. Thus we have 
        $B_x G = B'_x A_x B_x G= B'_x G$.
\end{proof}

\begin{lemma}[Inverse renamings]\label{lem:inverserenamings}
If $R$ is a renaming, so is $R^{-1}$.\\
Let ${A_{(-)}}$ 
be a renaming-invariant bijective function. If $A$ is renaming-invariant, so is $ A_{(-)}^{-1}$.
\end{lemma}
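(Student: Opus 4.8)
The plan is to prove the two assertions in turn, noting that essentially all of the content sits in the first: showing that a renaming $R$ is actually a \emph{bijection} of $\calV$. This is needed merely to make sense of $R^{-1}$, and once it is in hand the remaining checks (that $R^{-1}$ is a homomorphism and that $\sp{R^{-1}(\cdot)}$ is a bijection of $\X$) are routine, and the second assertion about $A_{(-)}$ becomes a purely formal manipulation.

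First I would record the compatibility $\sp{R(v)}=\rho(\sp{v})$, where $\rho:=\sp{R(\cdot)}$ restricted to $\X$. Indeed both $\sp{\cdot}$ and $\sp{R(\cdot)}$ annihilate time tags (since $\sp{t.u}=\sp{u}$ and $\sp{R(t.u)}=\sp{t.R(u)}=\sp{R(u)}$) and commute with $.p$ and $\lor$; hence $\sp{R(v)}$ depends only on $\sp{v}$ and equals $\rho(\sp{v})$, with $\rho$ a bijection of $\X$ by hypothesis. The main obstacle is then the bijectivity of $R$ on all of $\calV$, which I would establish by induction on the term structure of names put in the normal form where every time tag sits directly on an integer leaf. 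The key base fact is that a name whose position is a single integer $m$ is necessarily of the form $s.m$ with $s\in\Z$ (any apparent $\lor$-structure collapses under the eta-rule $u.l\lor u.r=u$). For surjectivity onto an atom $t.m$: choose $x\in\X$ with $\rho(x)=m$ (possible as $\rho$ is onto); then $R(x)$ has position $\sp{R(x)}=\rho(x)=m$, so $R(x)=s.m$ for some $s$, whence $R((t-s).x)=(t-s).(s.m)=t.m$. The inductive steps over $\lor$ and $.p$ are immediate from the homomorphism equations, and injectivity follows by the same induction (using that $\rho$ is injective to force equal positions, then the atom computation to force equal time tags). The genuinely delicate bookkeeping, and the step I expect to be hardest, is handling the identifications forced by the eta-rule when leaves are split or merged.

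Once $R$ is known to be a bijection, I would finish the first assertion cheaply. That $R^{-1}$ is a homomorphism follows from the standard ``apply $R$ and cancel by injectivity'' argument: for instance $R\big(R^{-1}(u).p\big)=R(R^{-1}(u)).p=u.p=R\big(R^{-1}(u.p)\big)$, so injectivity of $R$ yields $R^{-1}(u.p)=R^{-1}(u).p$, and symmetrically for $\lor$ and for $t.(\cdot)$. Finally $\sp{R^{-1}(\cdot)}$ restricted to $\X$ is a right inverse of $\rho$, because $\rho\big(\sp{R^{-1}(x)}\big)=\sp{R(R^{-1}(x))}=\sp{x}=x$ for $x\in\X$; a right inverse of the bijection $\rho$ must equal $\rho^{-1}$, hence is a bijection. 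Thus $R^{-1}$ satisfies all the defining clauses of a renaming.

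For the second assertion I would fix a renaming $R$ and a position $x$, and set $y:=\sp{R(x)}$. Renaming-invariance of $A_{(-)}$ reads $R\,A_x=A_y\,R$ as maps on $\sshift$. Since $A_x$ and $A_y$ are bijections by hypothesis, and $R$ is a bijection of $\sshift$ (by the first assertion, together with the closure of $\sshift$ under renaming and under the renaming $R^{-1}$), I may rearrange: composing on the left with $A_y^{-1}$ and on the right with $A_x^{-1}$ gives $A_y^{-1}R=R\,A_x^{-1}$, that is $R\,A_x^{-1}=A_{\sp{R(x)}}^{-1}R$. As this holds for every renaming $R$ and position $x$, it is exactly the renaming-invariance of $A_{(-)}^{-1}$, completing the proof.
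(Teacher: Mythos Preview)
Your treatment of the homomorphism property of $R^{-1}$ (the ``apply $R$ and cancel'' computation) and of the renaming-invariance of $A_{(-)}^{-1}$ (rearranging $R\,A_x=A_{\sp{R(x)}}\,R$) is exactly what the paper does, modulo cosmetic differences. In fact the paper's proof is \emph{only} these two steps: it opens with ``for any $u',v'$ take $u,v$ such that $u'=R(u)$, $v'=R(v)$'' and proceeds, tacitly assuming that $R$ is a bijection of $\calV$, and it does not separately check that $\sp{R^{-1}(\cdot)}$ is a bijection of $\X$. So the extra work you undertake---justifying that $R^{-1}$ exists at all, and identifying $\sp{R^{-1}(\cdot)}$ with $\rho^{-1}$---goes beyond what the paper provides.

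That extra work, however, contains a real gap. Your ``key base fact'', that a name whose position is a single integer $m$ is necessarily of the form $s.m$, is false. Take $v=(1.m).l\lor(2.m).r$. Then $\sp{v}=m.l\lor m.r=m$, yet $v.l=1.(m.l)$ and $v.r=2.(m.r)$ carry distinct time tags, so $v\neq s.m$ for any single $s\in\Z$. Consequently your surjectivity step ``$R(x)$ has position $m$, so $R(x)=s.m$ for some $s$'' breaks whenever $R$ sends an integer to a name with non-uniform time tags on its leaves. The preimage of an atom $t.m$ can then be genuinely compound: for instance if $R(n)=(1.m).l\lor(2.m).r$ one computes $R\big((-1).(n.l)\lor(-2).(n.r)\big)=m$. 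A correct bijectivity argument must therefore track a time shift \emph{per leaf position} rather than a single global shift; the induction and the handling of the eta-rule that you flag as ``the hardest step'' need to be reworked with this in mind.
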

\begin{proof} \noindent (Adapted from~\cite{ArrighiQNT})\\
\noindent {\em Inverse renaming.} Let $R$ be a renaming. We need to check that $R^{-1}$ is a homomorphism of the name algebra.\\
For any $u',v'$ take $u,v$ such that $u'=R( u)$ and $v'=R( v)$.\\
$R^{-1}( u'.p) =R{^{\ }}^{-1}( R( u) .p) =R{^{\ }}^{-1}( R( u.p)) =u.p=R^{-1}( u') .p$\\
$R^{-1}( u'\lor v') =R{^{\ }}^{-1}( R( u) \lor R( v)) =R{^{\ }}^{-1}( R( u\lor v)) =u\lor v=R^{-1}( u') \lor R^{-1}( v')$.\\
$R^{-1}( t.u') =R{^{\ }}^{-1}( t.R( u)) =R{^{\ }}^{-1}( R(t.u)) =t.u=t.R^{-1}(u')$.

\noindent {\em Inverse renaming-invariance.}\\
$RA_{x}^{-1} =\left( A_{x} R^{-1}\right)^{-1} =\left( R^{-1} A_{\sp{R( x)}}\right)^{-1} =A_{\sp{R(x)}}^{-1} R$.
\end{proof}

\propcharacterizationreversibility*\label{proof:prop:characterizationreversibility}
\begin{proof}
$[\Rightarrow]$ Suppose axiomatic-reversibility. 
Let $\graphvalidexplicit{\restriB{}}{x}:=\{H=A_xG\mid G\in\graphvalidexplicit{\restriA{}}{x}\}$. 
Notice that for any such $H$, the corresponding $G$ is unique, otherwise this would contradict injectivity. Thus $A_x^{-1}$ is well-defined over $\graphvalidexplicit{\restriB{}}{x}$, and we can let $\restriB{x}:=\restriA{x}\circ A_x^{-1}$ over this domain. Notice that this function is renaming invariant by the renaming-invariance of $\restriB{x}$ and that of $A_x^{-1}$ which was proven in Lem.~\ref{lem:inverserenamings}.\\ 
Remark that for such an $H=A_xG$, by $\restriA{}$-locality of $A_x$ we have $H=(A_x G_{\restriA{x}})\sqcup G_{\crestriA{x}}$ and so $H_{\restriB{x}}=H_{\restrA{x}{G}}=A_x G_{\restriA{x}}$ and $H_{\crestriB{x}}=H_{\crestrA{x}{G}}=G_{\crestriA{x}}$.\\ 
Now we check that the renaming invariant function $\restriB{}$ is a neighbourhood scheme.\\
\emph{Cone.} $H_\restriB{x}=A_x G_\restriA{x}$ which by back-reachability is a backwards cone of $x$.\\
\emph{Completeness.} \PA{We want to prove that for all $H\in {}\maxi\!\ufut{u}$,} \RX{$x=\sp{u}$,}{there exists $x\in \X$ with $x\namesubseteq\sp{u}$ such that} we have $H\in\graphvalidexplicit{\restriB{}}{x}$. 
\PA{This comes from surjectivity.}
\emph{Strong extensivity.} Suppose $E\in \diskB{x}$ and $H\in \sshift$ such that $E\sqcup K=H$. We want to show that $H_{\restriB{x}} = E$. We will see that this comes from the strong extensivity of $\restriA{}$.\\
Since $E\in \diskB{x}$ there exists $H'\in \sshift$ such that $E\sqcup K'=H'$ and $E=(E\sqcup K')_\restriB{x}$. Let $G':=B_x (E\sqcup K')$. By $\restri{}$-locality of $A_x$ we have 
$$(E\sqcup K')=A_x G'= (A_x G'_\restri{x}) \sqcup G'_\crestri{x}.$$
But, since 
$$\restri{x}(G')\nameeq\restriB{x}(A_x G')=\restriB{x}(E\sqcup K')=I_E,$$
it must be the case that $E=A_x G'_\restri{x}=A_x D$ with $D:=G'_\restri{x}\in \diskA{x}$.\\
The second back-reachability hypothesis then tells us that $G= D\sqcup K\in \sshift$. By the strong extensivity of $\restri{}$, we have that 
$$A_x G = A_x (D\sqcup K) = (A_x D)\sqcup K= E\sqcup K = H.$$
Thus,
$$\restriB{x}(H)=\restriB{x}(A_x G)\nameeq\restri{x}(G)=I_D\nameeq I_{A_xD}=I_E$$
and so $H_{\restriB{x}} = E$.

\noindent {\em Locality.} We then let $B_x$ to coincide with $A_x^{-1}$ on $\graphvalidexplicit{\restriB{}}{x}$, and be the identity otherwise. 
This $B_x$ is $\restriB{x}$-local. Indeed consider $H\in\graphvalidexplicit{\restriB{}}{x}$ we have $H=A_x G$. Recall that by $\restriA{}$-locality of $A_x$ we have $H=(A_x G_{\restriA{x}})\sqcup G_{\crestriA{x}}$  and so $H_{\restriB{x}}=H_{\restrA{x}{G}}=A_x G_{\restriA{x}}$ and $H_{\crestriB{x}}=H_{\crestrA{x}{G}}=G_{\crestriA{x}}$. 
Let $f_x : A_x G_{\restriA{x}}\mapsto G_{\restriA{x}}$. Thus :
\begin{align*}
    B_x H &= B_x A_x G
    = G = G_{\restriA{x}}\sqcup G_{\crestriA{x}}\\
    &= f_x A_x G_{\restriA{x}}\sqcup G_{\crestriA{x}}\\
    &=f_x H_{\restriB{x}}\sqcup H_{\crestriB{x}}\textrm{ by the remark.}
\end{align*}

\noindent {\em Left local invertibility} follows from these definition: $\forall G\in\graphvalidexplicit{\restriA{}}{x}$, $B_xA_xG=A_x^{-1}A_xG=G$ and $\restrB{x}{A_xG}\nameeq\restrA{x}{A_x^{-1}A_xG}=\restrA{x}{G}$.\\
\noindent {\em Right local invertibility} also follows from these definition: $\forall H\in\graphvalidexplicit{\restriB{}}{x}$, $A_xB_xH=A_xA_x^{-1}H=H$ and $\restrB{x}{H}\nameeq\restrA{x}{A_x^{-1} H}=\restrA{x}{B_x H}$.

The renaming-invariance of $B_x$ follows from Lem.~\ref{lem:inverserenamings}.

\noindent$[\Leftarrow]$ Suppose reversibility.\\ 
\noindent {\em Injectivity.} Consider distinct $G,G'\in \graphvalidexplicit{\restriA{}}{x}$. Left local invertibility gives us $B_x A_x G = G\neq G' = B_x A_x G'$. Thus $A_x G \neq A_x G'$ which is injectivity.\\
\noindent {\em Surjectivity.} \PA{Consider $H\in {}\maxi\!\ufut{u}$. 
By completeness of $\restriB{}$ we have \RX{, with $x=\sp{u}$,}{that there exists $x\in \X$ with $x\namesubseteq u$ such that} $H\in\graphvalidexplicit{\restriB{}}{x}$.}\\
Right local invertibility gives us $A_x B_x H=H$ with $\restrA{x}{B_x H}\nameeq\restrB{x}{H}$. Let $G:=B_x H$, since $\restrA{x}{G}\nameeq\restrB{x}{H}$ we have $G\in\graphvalidexplicit{\restriA{}}{x}$ and $H=A_x G$, which is surjectivity.\\
\noindent {\em Back-reachability.} Finally, consider $D\in \diskA{x}$, by definition there exists $G'=D\sqcup K'\in\sshift$, and so $H'=A_x G'=A_x D\sqcup K'\in\sshift$. Left local invertibility and strong extensivity give $\restrB{x}{A_x G'}\nameeq\restrA{x}{G'}=I_D\nameeq I_{A_x D}$. By the fact that $\restrB{x}{A_x G'}$ is defined and backwards reachable, we deduce that $A_x D$ is a backwards cone of $x$. We also deduce that $B_x A_x D \sqcup K=B_x H'=G'=D \sqcup K$ and so as on-site functions, $B_x A_x D=D$. Now say that $H=A_x D\sqcup K\in \sshift$. By the strong extensivity of $\restriB{}$, $H_\restriB{x}=A_x D$, and so $B_x H=B_x A_x D \sqcup K=D\sqcup K\in \sshift$.
\end{proof}




\proprevastwowaycasaulrewsys*
\begin{proof}
 We call $\mathcal{E}_x = \{E_x^j\}_{j\in J}$ and $\mathcal{D}_x = \{D_x^j\}_{j\in J}$. We let $\restriA{} := \restriD{\D}{}$ 
 by Lem.~\ref{lemma : neighbourhood from 0 to x}).\\
    
\noindent \emph{$\{{E_x^j}\to {D_x^j}\}_{x\in\X,j\in J}$ causal implies $A_{(-)}$ reversible.} Since $\{{E_x^j}\to {D_x^j}\}_{x\in\X,j\in J}$ is a backwards causal rewrite system we can construct using Prop.~\ref{lemma : charac mutex} a unique backwards local operator ${B}_{(-)}$. 
We denote $\restriB{}=\restriD{\calE}{}$ its neighbourhood scheme coming from Lem.\ref{lemma : equivalence set mutex and neighbourhood schemes}. 
Let us prove that $B_{(-)}$ is the inverse of $A_{(-)}$.
    
    \emph{Left local invertibility.} 
    Let $x\in \mathcal{X}$. Let $G\in \graphvalid{\restriA{}}{x}$. Since $\restriA{}$ is defined on this graph it can be written $G = D_x^j \sqcup G_{\crestriA{x}}$. First we prove $\restrB{x}{A_x G} \PA{\nameeq} \restrA{x}{G}$. On the one hand $\restrB{x}{A_x G} = \restrB{x}{E_x^j \sqcup G_{\crestriA{x}}} = I_{E_x^j}$. On the other hand $\restrA{x}{G} =  I_{D_x^j}$. But by the context-preservation hypothesis, $I_{D_x^j} \nameeq I_{E_x^j}$, thus we have indeed $\restrB{x}{A_x G} \nameeq \restrA{x}{G}$. Then we derive :
    \begin{align*}
        B_x A_x G =& B_x A_x(D_x^j \sqcup G_{\overline{\restrA{x}{G}}})\\
        =& B_x ( E_x^j \sqcup G_{\overline{\restrA{x}{G}}})\\
        =& D_x^j \sqcup G_{\overline{\restrA{x}{G}}}\\
        =& G.
    \end{align*}
    
    \emph{Right local invertibility} is similar. 
   Let $x\in \mathcal{X}$. Let $G\in \graphvalid{\restriB{}}{x}$. Since $\restriB{}$ is defined on this graph it can be written $G = E_x^j \sqcup G_{\crestriB{x}}$. We have $\restrA{x}{B_x G} = \restrA{x}{D_x^j \sqcup G_{\crestriB{x}}} = I_{D_x^j} \nameeq I_{E_x^j} = \restrB{x}{G}$. Then we derive :
    \begin{align*}
        A_x B_x G =& A_x B_x(  E_x^j \sqcup G_{\overline{\restrB{x}{G}}})\\
        =& A_x ( D_x^j \sqcup G_{\overline{\restrB{x}{G}}})\\
        =&  E_x^j \sqcup G_{\overline{\restrB{x}{G}}}\\
        =& G.
    \end{align*}

    \emph{ $A_{(-)}$ reversible implies $\{{E_x^j}\to {D_x^j}\}_{x\in\X,j\in J}$ causal.} 
    We suppose the existence of a $\restriB{}$-local inverse $B_{(-)}$. 
    By Prop.~\ref{lemma : charac mutex}, ${B}_{(-)}$ is alternatively expressed by some backwards causal rewrite system $\{{{E_x^j}'}\to {{D_x^j}'}\}_{x\in\X,j\in J}$. In the following we prove that this rewrite system is equal to $\{{E_x^j}\to {D_x^j}\}_{x\in\X,j\in J}$.
    
    First we prove that $\mathcal{E}' = \{{{E_x^j}'}\}_{x\in\X,j\in J}$ and $\mathcal{E} = \{{E_x^j}\}_{x\in\X,j\in J}$ are equal. We start by proving $\mathcal{E}\subseteq\mathcal{E}'$. We pick $E_x^j\in \calE_x:=\calE\cap\xcone{x}$, and we prove $E_x^j\in \diskB{x}$ which implies $E_x^j \in \calE'_x$. First we consider the associated $D_x^j$. Notice how $A_x D_x^j = E^j_x$. 
    By the right local invertibility condition this means that $\restrB{x}{E_x^j}$ is well defined and equal to :
    $$\restrB{x}{E_x^j} \nameeq \restrA{x}{ D_x^j} = I_{D_x^j}.$$
    where the last equality is because $\{D_x^j\to E_x^j\}_{x\in\X,j\in J}$ characterizes $(\restriA{},A_{(-)})$.
    Moreover since both $A_{(-)}$ and $B_{(-)}$ are name-preserving by Prop.~\ref{prop:renaminginvarianceimpliesnp}, we have $V_{E_x^j} \nameeq V_{D_x^j}$. Since $B_{D_x^j} = B_{E_x^j}$, we even have $I_{E_x^j} \nameeq I_{D_x^j}$. This proves:
    $${\restriB{x}}({E_x^j}) \nameeq I_{E_x^j}$$
    Since by definition ${\restriB{x}}({E_x^j})\subseteq I_{E_x^j}$ this even enforces ${\restriB{x}}({E_x^j}) = I_{E_x^j}$, which finishes to prove $E_x^j\in \diskB{x}$.
    
    Now we prove \PA{$\calE'_x\subseteq \calE_x$}. We consider ${E_x^j}' \in \calE_x'$. 
    On the one hand we have:
    $${E_x^j}'=A_x B_x ({E_x^j}') =A_x({D_x^j}')$$
    On the other hand, since by right local invertibility we have ${D_x^j}'\in \graphvalidexplicit{\restriA{}}{x}$, 
    this graph can also be decomposed as $D_x^j\sqcup K$ with $I_K\cap I_{D_x^j}=\emptyset$ and it transforms as follow under the action of $A_{(-)}$:
    $$A_x ({D_x^j}') = A_x (D_x^j\sqcup K) = E_x^j\sqcup K$$
    Thus we have proven ${E_x^j}' = E_x^j\sqcup K$. Since \PA{$E_x^j \in \calE_x \subseteq \calE'_x$}, this proves $E_x^j = {E_x^j}'$ by unambiguity of the mutex set of cones $\calE':x\mapsto\{{E_x^j}'\}_{j\in J}$.

    Now we just have left to check that $ D_x^j = {D_x^j}'$ for all $i$. Consider $D_x^j\in \calD$. 
    One the one hand we have by left invertibility:
    $$ B_xA_x D_x^j = D_x^j$$
    On the other hand, $A_x  D_x^j = E_x^j = {E_x^j}'$, thus we have:
    $$B_xA_x D_x^j = B_x{E_x^j}' = {D_x^j}'$$
\end{proof}

\section{Commutative inverse}

\begin{lemma}[Pasts and their validity are preserved by $A_\omega$]\label{lemma : Non valid pasts are preserved}
    Let $A_{(-)}$ be a commutative and time-increasing local rule.
    Consider $G\in \sshift$ and $u\in \Past(G)$.
    \RX{}{Let $\tau_u=\{x\in\X \mid x\namesubseteq \sp{u} \text{ and }x\in\valid{G}{} \}.$\todo{To Pablo : Pour moi l'ensemble $\tau_u$ ne contient qu'un élément. Ce n'est pas le cas ?}}
    Consider $\omega\in \valid{G}{}$  such that for all $y\in\omega$, $y\neq \sp{u}$ \RX{$y\neq \sp{u}$}{$y\notin\tau_u$}.
    Consider $x=\sp{u}$ \RX{$x=\sp{u}$}{$x\namesubseteq u$}. 
    We have $u\in \Past(A_\omega G)$, and $x\in \valid{G}{} \Leftrightarrow x\in \valid{A_\omega G}{}$.
\end{lemma}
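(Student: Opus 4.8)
The plan is to induct on $|\omega|$. The base case $\omega=\varepsilon$ is immediate, since $A_\varepsilon G=G$. For the inductive step I would write $\omega=z\omega'$, so that $A_\omega G=A_z(A_{\omega'}G)$ with $\omega'$ applied first. As $\omega'$ is a suffix of the valid sequence $\omega$ it is valid in $G$, and all its letters differ from $x=\sp{u}$; the induction hypothesis then gives $u\in\Past(H)$ for $H:=A_{\omega'}G$, together with $x\in\valid{G}{}\Leftrightarrow x\in\valid{H}{}$. Validity of $\omega$ also gives $z\in\valid{H}{}$, with $z\neq x$. So it suffices to prove the single-step claim: if $H\in\sshift$ with $u\in\Past(H)$, $z\in\valid{H}{}$ and $z\neq x=\sp{u}$, then $u\in\Past(A_zH)$ and $x\in\valid{H}{}\Leftrightarrow x\in\valid{A_zH}{}$. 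Chaining this single step with the induction hypothesis closes the induction.

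For the preservation of the past I would argue geometrically. By $\restriA{}$-locality, $A_zH=(A_zH_{\restri{z}})\sqcup H_{\crestri{z}}$, and the disk $H_{\restri{z}}$ is a forward cone rooted at the vertex of position $z$. I claim $u\notin I_{H_{\restri{z}}}$: otherwise $u$ would be reachable from that root by a directed path, hence carry an incoming edge, contradicting $u\in\Past(H)$; and $u$ cannot be the root itself, since $\sp{u}=x\neq z$ and positions determine vertices by the non-overlapping-positions condition. Thus $u\in I_{H_{\crestri{z}}}$, where it keeps its exact identity, time tag included. Its incoming edges are unchanged: edges internal to $H_{\crestri{z}}$ are untouched, interface edges are kept by context-preservation, and the freshly created edges of $A_zH_{\restri{z}}$ are internal-internal in the $z$-disk and so do not reach $u$. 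Hence $u$ still has no incoming edge, i.e. $u\in\Past(A_zH)$.

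For the validity equivalence the tools are commutativity and time-increasingness. Write $t_u$ for the time tag of $u$; by the previous paragraph $u$ keeps the tag $t_u$ in $A_zH$, whereas whenever $x$ is valid, applying $A_x$ strictly increases the tag at position $x$ (the $x=y$ case of time-increasingness), so $A_x$ acts non-trivially on every graph where $x$ is valid. For the direction $x\in\valid{H}{}\Rightarrow x\in\valid{A_zH}{}$, suppose not: then $A_x(A_zH)=A_zH$, and commutativity $A_{xz}H=A_{zx}H$ gives $A_z(A_xH)=A_zH$. But in $A_xH$ position $x$ carries a tag $>t_u$, which time-increasingness forbids $A_z$ from lowering, so position $x$ in $A_z(A_xH)$ has tag $>t_u$ or is absent, whereas it has tag exactly $t_u$ in $A_zH$ — a contradiction. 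The converse is symmetric: if $x\notin\valid{H}{}$ then $A_xH=H$, whence $A_x A_zH=A_z A_xH=A_zH$ by commutativity, contradicting that validity of $x$ in $A_zH$ forces $A_x A_zH\neq A_zH$.

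The main obstacle is justifying the use of commutativity, whose hypothesis asks that both $u$ and some vertex $w$ of position $z$ lie in $\Past(H)$. Past-ness of $u$ is given; for $w$ one must argue that validity of $z$ in $H$ is witnessed by a past node at $z$. This is precisely the semantics of a ready-to-fire process: the $z$-disk is a forward cone whose root has, by acyclicity, no incoming edge from the disk's interior, and in this framework an application is only valid at the position of a past node. I would make this point precise first — it is where the commutative/time-increasing setting is genuinely exploited — after which the time-tag bookkeeping above goes through routinely, the remaining steps (locality, context-preservation, non-overlapping positions) being mechanical.
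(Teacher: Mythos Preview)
Your proof is correct and takes essentially the same route as the paper: induction on $|\omega|$, with the single step handled by locality (for past-preservation) and commutativity plus time-increasingness (for the validity equivalence), the only cosmetic differences being that the paper peels off the first-applied letter rather than the last and phrases the equivalence via contrapositives. Your explicit care about justifying the commutativity hypothesis is in fact stricter than the paper's own proof, which applies commutativity without comment.
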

\begin{proof}
    By induction on $|\omega|$.
    This is obvious for $|\omega|=0$.
    Suppose this is true for $|\omega|=n$ and consider $\omega'=\omega y$.\\
    
    Since $u\in \Past(G)$, $y\neq \sp{u}$ \RX{$y\neq \sp{u}$}{$y\notin \tau_u$}, and due to the locality of $A_y$, we have we have $u\in \Past(A_y G)$. 
    Regarding the equivalence, 
    \begin{align*}
    x\notin \valid{G}{}
    &\Rightarrow A_x G = G \text{ by locality}\\
    &\Rightarrow A_yA_x G = A_y G\\
    &\Rightarrow A_y G = A_x A_y G \text{ by commutativity}\\
    &\Rightarrow u\in \Past(A_x A_y G)\\
    &\Rightarrow x\notin \valid{A_y G}{} \text{ by time-increasing.}
    \end{align*}
    \begin{align*}
    x\notin \valid{A_y G}{} &\Rightarrow A_x A_y G = A_y G \text{ by locality}\\
    &\Rightarrow A_y G= A_y A_x G \text{ by commutativity}\\
    &\Rightarrow u\in \Past(A_y A_x G)\\
    &\Rightarrow x\notin \valid{G}{}\text{ by time-increasing.}
    \end{align*}
    So, $x\in \valid{G}{}\Leftrightarrow x\in \valid{A_y G}{}$.\\
    \RX{}{It folllows that $\tau_{u\in A_y G}=\{x\in\X \mid x\namesubseteq \sp{u} \text{ and }x\in\valid{A_y G}{} \}.$ is the same as $\tau_{u \in G}$.}
    Applying the first part of the induction hypothesis, we have $u\in \Past(A_\omega A_y G)$ and so $u\in \Past(A_{\omega'})$ as required.\\
    Applying the second part of the induction, $A_y x\in \valid{A_y G}{}\Leftrightarrow x\in \valid{A_\omega A_y G}{}$. So, $x\in \valid{G}{} \Leftrightarrow x\in \valid{A_{\omega'} G}{}$ as required.
\end{proof}

\MC{
\begin{lemma}[Non valid pasts are preserved by $A_y$]\label{lemma : Non valid pasts are preserved 1 step}
    Let $A_{(-)}$ be a commutative and time-increasing local rule.
    Let $G\in \graphvalid{\restri{}}{y}$.
    Let $t.x$ \RX{$t.x$}{$t.z$} $\in \Past(G)$.
    \RX{}{For all $x\namesubseteq z$,} $G\notin \graphvalid{\restri{}}{x}$ implies $t.x$ \RX{$t.x$}{$t.z$} $\in A_y G\notin\graphvalid{\restri{}}{x}$.
\end{lemma}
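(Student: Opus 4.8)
The plan is to establish the two conclusions separately: first that the vertex $t.x$ survives the application of $A_y$, i.e. $t.x\in V_{A_yG}$, and then that $x$ remains non-valid, i.e. $A_yG\notin\graphvalid{\restri{}}{x}$. Observe at the outset that $x\neq y$ as positions, since $G\in\graphvalid{\restri{}}{y}$ while $G\notin\graphvalid{\restri{}}{x}$, so $A_y$ genuinely acts at a position different from $x$.

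For the survival of $t.x$, I would argue that it lies entirely in the context of $y$ and is therefore copied untouched. Indeed $t.x\in\Past(G)\subseteq I_G$ has no incoming edge in $G$. Since $G_{\restri{y}}$ is a forward cone of $y$, every internal vertex other than the root at position $y$ is reached by a directed path of length at least one lying in $I_{G_{\restri{y}}}$, hence has an incoming edge belonging to $E_G$ (the cone being an induced subgraph). As $t.x$ has no such edge and $x\neq y$, it cannot be an internal vertex of $G_{\restri{y}}$, so by the non-overlapping positions condition $x\notin\restr{y}{G}$ and thus $t.x\in I_{G_{\crestri{y}}}$. Because $A_yG=(A_yG_{\restri{y}})\sqcup G_{\crestri{y}}$ reproduces the context verbatim, we get $t.x\in I_{A_yG}\subseteq V_{A_yG}$, still with time tag $t$.

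For the non-validity of $x$, I would combine commutativity with the time-increasing condition. Since $G\notin\graphvalid{\restri{}}{x}$, locality gives $A_xG=G$, whence $A_yA_xG=A_yG$; commutativity then yields $A_xA_yG=A_yA_xG=A_yG$. Suppose, for contradiction, that $A_yG\in\graphvalid{\restri{}}{x}$. Applying the time-increasing condition to $A_x$ acting on $A_yG$, with the vertex $t.x\in V_{A_yG}$ and the same vertex $t.x\in V_{A_xA_yG}=V_{A_yG}$ at position $x$: as $x\namecap x\neq\varnothing$ and $x\namesubseteq x$, the tag at $x$ must strictly increase, i.e. $t<t$, a contradiction. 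Hence $A_yG\notin\graphvalid{\restri{}}{x}$, as required.

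The main obstacle I anticipate is justifying the use of commutativity, which requires $G$ to possess a past vertex at position $y$ in addition to the one at $x$. The past vertex at $x$ is given by $t.x$, and the past vertex at $y$ is the root of the cone $G_{\restri{y}}$, consistent with the standing scheduling reading in which operators are only applied at past positions (the same situation exploited in the companion Lemma~\ref{lemma : Non valid pasts are preserved}). A secondary point to check carefully is that the context $G_{\crestri{y}}$ is preserved with unchanged time tags: this is what simultaneously guarantees that $t.x$ reappears with the very same name and that the final contradiction sharpens to $t<t$.
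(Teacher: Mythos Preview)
Your proof is correct and follows the same line as the paper's: first show $t.x$ survives into $V_{A_yG}$ by locality (you do this via the explicit cone argument, the paper just invokes ``locality of $A_y$''), then use $A_xG=G$ together with commutativity to get $A_xA_yG=A_yG$, and finally derive $t<t$ from the time-increasing condition under the assumption $A_yG\in\graphvalid{\restri{}}{x}$. Your additional care in checking that commutativity is applicable (a past vertex at position $y$ exists as the root of the disk $G_{\restri{y}}$, which being an induced subgraph has no incoming edge in $G$ either) fills in a step the paper leaves implicit.
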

\begin{proof}
    Since $t.x$ \RX{$t.x$}{$t.z$} $\in \Past(G)$ and $G\notin \graphvalid{\restri{}}{x}$, we have $A_yA_xG=A_yG$. By locality of $A_y$ we have $t.x$ \RX{$t.x$}{$t.z$} $\in V_{A_yA_xG}=V_{A_yG}$. Commutativity then gives us $t.x$ \RX{$t.x$}{$t.z$} $\in V_{A_xA_yG}=V_{A_yG}$. By the time-increasing condition this implies $A_y G\notin\graphvalid{\restri{}}{x}$.
\end{proof}

\begin{lemma}[Validity of commutativity]\label{lemma : 2valid}
    Let $A$ be a commutative and time-increasing local rule, and $G\in \graphvalid{\restri{}}{x}\cap \graphvalid{\restri{}}{y}$ be a graph.
    Then the sequence $yx$ is valid in $G$.
\end{lemma}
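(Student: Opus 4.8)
The plan is to unfold the definition of a valid sequence and reduce the claim to a single genuinely new verification. By definition, $yx\in\valid{G}{}$ requires that for every factorisation $yx=\omega_2\,z\,\omega_1$ with $z\in\X$ a single position one has $A_{\omega_1}G\in\graphvalid{\restri{}}{z}$. There are only two such factorisations: taking $z=x$ and $\omega_1=\varepsilon$ asks that $G\in\graphvalid{\restri{}}{x}$, which is one of our hypotheses; taking $z=y$ and $\omega_1=x$ asks that $A_xG\in\graphvalid{\restri{}}{y}$. Hence the whole lemma boils down to proving this last membership: that \emph{applying $A_x$ does not destroy the validity of $G$ at $y$}.

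To establish $A_xG\in\graphvalid{\restri{}}{y}$, I would invoke the preservation result of Lem.~\ref{lemma : Non valid pasts are preserved}, which is tailored for exactly this situation since it already assumes $A_{(-)}$ commutative and time-increasing, as we do here. First I would read off from $G\in\graphvalid{\restri{}}{y}$ the root $v$ of the cone $G_{\restri{y}}$, so that $\sp{v}=y$ and $v\in\Past(G)$. I would then apply Lem.~\ref{lemma : Non valid pasts are preserved} with the instantiation $u:=v$ (so $\sp{u}=y$) and $\omega:=x$, viewed as a one-letter valid sequence: it is valid in $G$ precisely because $G\in\graphvalid{\restri{}}{x}$, i.e. $x\in\valid{G}{}$. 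The side condition of that lemma, that no letter of $\omega$ equals $\sp{u}$, is exactly $x\neq y$. Its conclusion gives both $v\in\Past(A_xG)$ and the equivalence $y\in\valid{G}{}\Leftrightarrow y\in\valid{A_xG}{}$; since the left-hand side holds, so does the right, which is the desired $A_xG\in\graphvalid{\restri{}}{y}$. Combining with the first factorisation yields $yx\in\valid{G}{}$.

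The main obstacle is not the algebra but two points of hygiene. The first is justifying that a valid position hosts a \emph{past} node, i.e. that $G\in\graphvalid{\restri{}}{y}$ delivers $v\in\Past(G)$ and not merely a source of the induced cone $G_{\restri{y}}$; this is what lets us feed the hypothesis $u\in\Past(G)$ into Lem.~\ref{lemma : Non valid pasts are preserved}, and I would discharge it from the way the validity domains are set up, a valid position carrying a ready-to-fire, dependency-free process. The second is the implicit requirement $x\neq y$: if $x=y$ then $A_x$ pushes the unique node at $x$ strictly forward in time (time-increasing, since $x\namesubseteq x$), so that no past node remains at $x$ and $A_xG\notin\graphvalid{\restri{}}{x}$; the statement is therefore meant for two distinct valid positions, and it is precisely this distinctness that supplies the side condition of the preservation lemma. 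Everything else is a direct instantiation, so I expect no computational difficulty beyond these bookkeeping checks.
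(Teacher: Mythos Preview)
Your reduction is correct: the only nontrivial step is $A_xG\in\graphvalid{\restri{}}{y}$, and your instantiation of Lem.~\ref{lemma : Non valid pasts are preserved} with $u$ the past node at position $y$ and $\omega=x$ discharges it cleanly, subject to the two bookkeeping points you flag (existence of a past node at $y$, and $x\neq y$).

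The paper does \emph{not} go through Lem.~\ref{lemma : Non valid pasts are preserved}; it gives a direct, self-contained argument. It observes that since $t.y\in\Past(G)$ there is no directed path to it, so by the cone condition $t.y\notin\restr{x}{G}$, hence $A_x$ leaves $t.y$ untouched and $t.y\in V_{A_xG}$. Then time-increasing says every vertex at position $y$ in $A_xA_yG$ carries a strictly larger time tag than $t$; combined with commutativity $A_xA_yG=A_yA_xG$, this forces $A_y$ to act non-trivially on $A_xG$, i.e.\ $A_xG\in\graphvalid{\restri{}}{y}$. In substance this is the $|\omega|=1$ case of the preservation lemma unpacked inline, but argued forward rather than by contrapositive, and with the cone/reachability step made explicit. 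Your route is tidier because it reuses an already-proved lemma; the paper's route is self-contained and spells out where the cone condition enters. Both rely on exactly the same three ingredients (locality/cone, commutativity, time-increasing), and both share the two hygiene points you raise---the paper simply leaves them implicit.
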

\begin{proof}
    We denote by $t.y$ \RX{$t.y$}{$t.z$} the vertex at position $y$ \RX{$y$}{$y\namesubseteq z$}.
    As there exists no path to $t.y$ \RX{$t.y$}{$t.z$} in $G$, the reachability condition of neighbourhood schemes implies that $t.y$ \RX{$t.y$}{$t.z$} $\notin \restr{x}{G}$. 
    By $\restri{}$-locality $A_x$ cannot modify $t.y$ \RX{$t.y$}{$t.z$} which implies $t.y$ \RX{$t.y$}{$t.z$} $\in V_{A_{x} G}$. On the other hand the time-increasing hypothesis tells us that if there exists $t'.y'$ \RX{$t'.y'$}{$t'.z'$} with $y=y'$ \RX{$y\namecap y'\neq\varnothing$}{$z\namecap z'\neq\varnothing$} $\in V_{A_xA_yG}$, then $t'>t$. Since we have $A_xA_y G=A_yA_x G$ this means that $A_y$ indeed modifies $t.y$ \RX{$t.y$}{$t.z$} in $A_x G$, thus we do have $A_x G\in \graphvalid{\restri{}}{y}$ which proves the validity of $yx$ in $G$.
\end{proof}}

\lemtimesymcommutation*\label{proof : time-symmetric commutation}

\begin{proof}
For a backward local rule the commutation equation must be true for any $H\in\ufut{u}\cap \ufut{v}$, \RX{$x=\sp{u}$}{$x\namesubseteq \sp{u}$} and \RX{$y=\sp{v}$}{$y\namesubseteq \sp{v}$}. Note how $\transpose{H} \in \upast{u}\cap \upast{v}$, we therefore have $A_{xy}\transpose{H}=A_{yx}\transpose{H}$. Thus, 
\begin{align*}
 B_{xy}H=(\transpose{~}\circ A_{x} \circ \transpose{~} \circ\transpose{~}\circ A_{y} \circ \transpose{~})H
 =\transpose{~}\circ A_{xy}\transpose{H}
 =\transpose{~}\circ A_{yx}\transpose{H}
 =B_{yx}H.
\end{align*}
\end{proof}

\proptwotwo*\label{proof : two-two}
\begin{proof}
$[$First part$]$ Consider $A_{(-)}$ commutative. 
Let us show that $B_{(-)}$ is two-two. Suppose $G\in \graphvalidexplicit{\restriA{}}{x}\cap\graphvalidexplicit{\restriA{}}{y}$.
We want $H\in\graphvalidexplicit{\restriB{}}{x}\cap\graphvalidexplicit{\restriB{}}{y}$ such that $G=B_{xy} H$. Let $H:=A_{xy} G$. \MC{Lem.\ref{lemma : Non valid pasts are preserved} tells us that $A_yG\in \graphvalid{\restri{}}{x}$. This implies, $H\in\graphvalidexplicit{\restriB{}}{x}$.} Using the commutativity of $A$, $H=A_{yx} G$ and so we also have that $H\in\graphvalidexplicit{\restriB{}}{y}$, and that $G=B_{xy} H$.\\
$[$Second part$]$ Consider $A_{(-)}$ a two-two, commutative, reversible local rule and let $B_{(-)}$ be its inverse. Take a graph $H\in \ufut{u}\cap \ufut{v}$,\RX{$x=\sp{u}$}{$x\namesubseteq \sp{u}$} and \RX{$y=\sp{v}$}{$y\namesubseteq \sp{v}$}. We will prove that $B_xB_y H=B_yB_x H$.\\
If $H\notin \graphvalidexplicit{\restriB{}}{x}\cup\graphvalidexplicit{\restriB{}}{y}$, we have $B_xB_y H=H=B_yB_x H$.\\
\MC{If $H\in \graphvalidexplicit{\restriB{}}{x}$ and $H\notin\graphvalidexplicit{\restriB{}}{y}$ the proof is more intricate. We first prove that $A_x H\in\graphvalidexplicit{\restriB{}}{y}\implies H\in\graphvalidexplicit{\restriB{}}{y}$. Suppose $H\in \graphvalidexplicit{\restriB{}}{y}$. We know by \bckwmaximality of $\sshift$, that there exists a graph $H'\in \maxi\!\sshift$ such that $H\sqsubseteq H' $. Since $A_{(-)}$ is two-two there exists $G\in\graphvalidexplicit{\restriA{}}{x}\cap \graphvalidexplicit{\restriA{}}{y}$ such that $A_{xy}G=H'$. Now we denote $Q=H'_{\overline{I_H}}$ and we derive using strong extensivity :
$$G=B_yB_x(H\sqcup Q) = B_y((B_xH)\sqcup Q)=(B_yB_xH)\sqcup Q$$
Since $B_yB_xH\sqsubseteq G$, we know that $x\in Past(B_yB_xH)$ and by reversibility we have $B_yB_xH \in \graphvalidexplicit{\restriA{}}{y}$. Since $A_{xy} (B_{yx} H) = G$, using Lem.\ref{lemma : Non valid pasts are preserved} we can even deduce $B_yB_xH \in \graphvalidexplicit{\restriA{}}{x}\cap \graphvalidexplicit{\restriA{}}{y}$. Then by commutativity of $A_{(-)}$ and Lem.\ref{lemma : Non valid pasts are preserved}, we deduce $H\in\graphvalidexplicit{\restriB{}}{y}$.\\
With this proof established, we know that in our case we necessarily have $H\notin\graphvalidexplicit{\restriB{}}{y}$, we can therefore deduce easily the commutation equation $B_xB_y H=B_xH=B_yB_x H$.}\\
If $H\notin \graphvalidexplicit{\restriB{}}{x}$ and $H\in\graphvalidexplicit{\restriB{}}{y}$, we apply the same reasoning.\\
Otherwise we have $H\in \graphvalidexplicit{\restriB{}}{x}\cap\graphvalidexplicit{\restriB{}}{y}$. Thus there exists $G\in\graphvalidexplicit{\restriA{}}{x}\cap\graphvalidexplicit{\restriA{}}{y}$ such that $H=A_{xy} G$.\\
We have, using the commutativity of $A_{(-)}$:
$$B_{yx}H=B_{yx}A_{xy}G=G=B_{xy}A_{yx}G=B_{xy}A_{xy}G=B_{xy}H.$$
Thus $B_{(-)}$ is commutative. By the first part it is also two-two.\\
The converse is true by symmetry.
\end{proof}

For the next proof we will need to consider some equivalence classes on graphs under renamings that leave some fixed positions $x$ and $y$ unchanged: $\renameclass{G}=\{RG\ |\ R(x)= x,\ R(y)=y\}$. Note how $A_{(-)}$ can be thought as acting directly on this equivalence class, because it is renaming invariant.
\lemcommutation*\label{proof : commutation}
\begin{proof}
    Consider the following sets of graphs : 
    $$\sshift_{\restriA{x}\cup\restriA{y}} = \{G_{\restriA{x}\cup \restriA{y}}\ |\ G\in \graphvalidexplicit{\restri{}}{x}\cap \graphvalidexplicit{\restri{}}{y}\}$$
    $$\sshift_{\restriB{y}\cup\restriB{x}} = \{H_{\restriB{y}\cup \restriB{x}}\ |\ H\in \graphvalidexplicit{\restriB{}}{x}\cap \graphvalidexplicit{\restriB{}}{y}\}$$
    Note that the corresponding set of equivalence classe $\renameclass{\sshift_{\restriA{x}\cup\restriA{y}}}$ is finite because since $\restriA{}$ is bounded by $k$ it contains only graphs with at most $2k$ internal vertices. Moreover $\renameclass{\sshift_{\restriB{y}\cup\restriB{x}}}$ has the same cardinality, because there exists a renaming invariant bijection from $\graphvalid{\restriA{}}{x}\cap\graphvalid{\restriA{}}{y}$ to $\graphvalid{\restriB{}}{x}\cap\graphvalid{\restriB{}}{y}$. 


    Since $A_{(-)}$ is reversible and commutative, and using Lem.\ref{lemma : Non valid pasts are preserved},  each graph $G\in \graphvalid{\restriA{}}{x}\cap\graphvalid{\restriA{}}{y}$ is mapped to a graph $H\in \graphvalid{\restriB{}}{x}\cap\graphvalid{\restriB{}}{y}$ by $A_{xy}$.  
    This means that the image of $A_{xy}$ over $\renameclass{\sshift_{\restriA{x}\cup\restriA{y}}}$ 
    is contained inside $\renameclass{\sshift_{\restriB{x}\cup\restriB{y}}}$. 
    Moreover by reversibility, we can define $B_{yx}$ on $A_{xy} \renameclass{\sshift_{\restriA{x}\cup\restriA{y}}}$ and it acts as an inverse, which proves that $A_{xy}$ is injective on ${\renameclass{\sshift_{\restriA{x}\cup\restriA{y}}}}$. Since $|{\renameclass{\sshift_{\restriA{x}\cup\restriA{y}}}}| = |\renameclass{\sshift_{\restriB{x}\cup\restriB{y}}}|$ this implies that this restricted version of $A_{xy}$ is also surjective.

    Consider $H\in \graphvalid{\restriB{}}{x}\cap \graphvalid{\restriB{}}{y}$, we can write this graph $H=H_{\restriB{x}\cup \restriB{y}}\sqcup H_{\overline{\restriB{x}\cup \restriB{y}}}$. Since $\renameclass{H_{\restriB{x}\cup \restriB{y}}}\in \renameclass{\sshift_{\restriB{x}\cup\restriB{y}}}$ there exists by surjectivity $\renameclass{G_{\restriA{x}\cup \restriA{y}}}$ such that $A_{xy}\renameclass{G_{\restriA{x}\cup \restriA{y}}} = \renameclass{H_{\restriB{x}\cup \restriB{y}}}$. Then the graph $G = G_{\restriA{x}\cup \restriA{y}}\sqcup H_{\overline{\restriB{x}\cup \restriB{y}}}$ is in $\graphvalid{\restriA{}}{x}\cap \graphvalid{\restriA{}}{y}$ and such that $A_{xy} G = H$. We have therefore proven that $A_{(-)}$ is two-two and can conclude the proof using Lem.~\ref{lemma : reducing hypothesis for commutation of B}.


\end{proof}

For the next proof we will need a notation for the rightmost sequence subtraction. Let $\omega\in \mathcal{X}^*$ and $\alpha \in \mathcal{X}^*$. We define recursively $(\omega \setminus \alpha)\in \mathcal{X}^*$ as :
\begin{equation*}
    \omega \setminus \alpha =
    \begin{cases}
        \omega & \text{if } |\alpha| = 0, \\
        \omega''\omega' & \text{if } |\alpha| = 1, \omega = \omega''\alpha\omega', \text{ and } \alpha \notin \omega'\\
        (\omega \setminus x) \setminus \alpha' & \text{if } \alpha = \alpha' x \text{ and } x \in \mathcal{X}.
    \end{cases}
\end{equation*}
For example if $\mathcal{X} = \{0,\dots ,9\}$, $\omega = 22159892$ and $\omega' = 28542$ we have 
$\omega\setminus \omega' = 2199$. It has been proven in Lem.$4$ of \cite{ArrighiDetRew}, that if both $\omega$ and $\omega'$ are valid in a graph $G$ sequences, then $\omega\setminus \omega'$ is a valid sequence in $A_{\omega'}G$.

We will also use the notation $\omega \cup \omega'$ to denote $(\omega \setminus \omega')\circ \omega'$. It has been proven in Cor.$1$ of \cite{ArrighiDetRew}, that even if by definition $\omega \cup \omega'$ is different from $\omega'\cup \omega$, on all graph $G$ in which both $\omega$ and $\omega'$ are valid, we have $A_{\omega \cup \omega'}G = A_{\omega' \cup \omega}G$ as long as $A_{(-)}$ is commutative.

\inversedet*\label{proof : the inverse dynamics has the same property}
\begin{proof}
    We fix :
    $$H = \B_{\omega_1 \cup \omega_2} G = \B_{\omega_2 \setminus \omega_1}\B_{\omega_1} G = B_{\omega_1 \setminus \omega_2}B_{\omega_2} G$$
    We take :
    $$\omega_1' =  (\omega_2 \setminus \omega_1)^{T}$$
    $$\omega_2' = (\omega_1 \setminus \omega_2)^{T}$$
    And we obtain :
    \begin{align*}
        A_{\omega_1'} H &= A_{(\omega_2 \setminus \omega_1)^{T}}  \B_{\omega_2 \setminus \omega_1}\B_{\omega_1} G\\
        &= \B_{\omega_1} G
    \end{align*}
    Symmetrically we also have $\B_{\omega_2} G = A_{\omega_2'} H$, which concludes the proof.
\end{proof}

\lemtwowaycommutation*\label{proof : two-way commutation}
\begin{proof}
    Let $G\in \graphvalid{\restriB{}}{x}^*$ such that $B_xG\in \graphvalid{\restriA{}}{y}^*$. Notice how by reversibility we have $B_xG \in \graphvalid{\restriA{}}{x}^*$. 
    This means that we can use commutativity of $A_{(-)}$ to get :
    \begin{align*}
        A_xA_yB_xG &= A_yA_xB_xG\\
      \Leftrightarrow  B_xA_xA_yB_xG &= B_xA_yA_xB_xG\\
      \Leftrightarrow  A_yB_xG &=B_xA_yG
    \end{align*}
    We get the second part of the the lemma by a symmetric reasoning, using commutativity of $B_{(-)}$.
\end{proof}

\proptwowayconsistency*\label{proof: two-way consistency}
\begin{proof}
    We start by proving by induction on the size of $\omega_1$ that :
    $$C_{\omega_1} G = A_{\omega_1^A} B_{\omega_1^B} G$$
    where $\omega_1^A$ and $\omega_1^B$ denote respectively a $Past$-valid and a $Fut$-valid sequence.
    Let us suppose it is true for sequences of lenght $n$. We take $x \omega$ a sequence of length $n+1$. Then we have 
    $$C_{x\omega} G = C_{x} A_{\omega^A} B_{\omega^B}G$$
    If $C_{\omega} G\in \graphvalid{\restriA{}}{x}^*$ this concludes the proof immediately. Otherwise $C_{\omega} G\in \graphvalid{\restriB{}}{x}^*$ and we can apply iteratively Lem.~\ref{lemma : two-way commutation} to obtain $C_{x\omega} G =  A_{\omega^A} B_x B_{\omega^B}G$.

    Now we can just apply Prop.\ref{prop : the inverse dynamics has the same property} to get two past-valid sequences $\omega_{1_B}',\omega_{2_B}'$ and a graph $G'$ such that :
    $$C_{\omega_1} G = A_{\omega_1^A} A_{\omega_{1_B}'} G'$$
    $$C_{\omega_2} G = A_{\omega_2^A} A_{\omega_{2_B}'} G'$$
\end{proof}

\section{Extensivity versus strong extensivity}


In the present paper we work with strongly extensive neighbourhoods, but in our previous paper \cite{ArrighiDetRew}, we work with extensive neighbourhoods. Let us recall the definitions.
\begin{definition}[(Strong) extensivity]
    A neighbourhood scheme $\restri{}$ is \emph{extensive} if and only if for all $G\in \graphvalid{\restri{}}{x}$, for all $H\in\sshift$, we have that  $G_\restri{x}\sqsubseteq H \sqsubseteq G$ implies $H_\restri{x} = G_\restri{x}$.
    It is {\em strongly extensive} if and only if for any $D\in \diskA{x}$ and $G\in \sshift$ such that $D\sqsubseteq G$, we have $G_{\restriA{x}} = D$.\\
\end{definition}
Notice that 
both notions of extensivity imply idempotency, i.e. the property that for all $G\in \graphvalid{\restri{x}}{}$, we have that $G_\restri{x}\in\graphvalid{\restriA{}}{x}$ and in fact $G_\restri{x}=(G_\restri{x})_\restri{x}$. Idempotency is in turn the key notion to work in the quantum setting of \cite{ArrighiQNT}.\\
In the present paper, we also work with the idea that neighbourhoods and thus local rules need be defined on \maxgraph graphs, i.e. so long as following direct edges does not hit the border. Reciprocally when we do hit a border, we do allow them to be undefined. This means that we do not think of borders as marking `definite walls'. Rather, we think of them marking `lack of information' beyond this point. I.e. our working subgraph is understood a being only a partial view of a larger graph. Knowledge of the larger graph would allow for the correct pursuit of the computation.\\
This interpretation leads us to formulate the following safety principle:
\begin{definition}[Safety principle] Consider $\restri{}$ a possibly non-strongly-extensive neighourhood scheme. It obeys the {\em safety principle} if and only if whenever $H\sqsubseteq G$, we have that 
$$G\notin\graphvalid{\restri{}}{x} \textrm{ or }G_\restri{x}\not\sqsubseteq H  \implies H\notin\graphvalid{\restri{}}{x}$$
i.e. if the larger graph is not enough for computing the neighbourhood, or if the neighbourhood it would compute is not yet present in the smaller graph, then the smaller graph is not enough for computing the neighbourhood.
\end{definition}
Let us show that extensivity and this safety principle, actually entail strong extensivity. 
\begin{proposition}[Extensivity and safety principle imply strong extensivity]
Consider $\restri{}$ an extensive neighourhood scheme, which obeys the safety principle and is such that $\diskA{}\subseteq\sshift$. This $\restri{}$ is strongly extensive.  
\end{proposition}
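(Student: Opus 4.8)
The plan is to chain the two hypotheses together through idempotency, which is the linchpin. First I would record the contrapositive form of the safety principle, which is the shape actually used: whenever $H\sqsubseteq G$ and $H\in\graphvalid{\restri{}}{x}$, then $G\in\graphvalid{\restri{}}{x}$ and $G_{\restri{x}}\sqsubseteq H$. In words, as soon as a subgraph $H$ already carries a full neighbourhood, the larger graph $G$ computes the same neighbourhood and it lies inside $H$.

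Next I would establish idempotency, specialised to disks: every $D\in\diskA{x}$ satisfies $D\in\graphvalid{\restri{}}{x}$ and $D_{\restri{x}} = D$. By definition $D = G'_{\restri{x}}$ for some $G'\in\graphvalid{\restri{}}{x}$; since $D\sqsubseteq G'$ and $\sshift$ is closed under inclusion we have $D\in\sshift$ (this is where the hypothesis $\diskA{}\subseteq\sshift$ is used). Applying extensivity to $G'$ with $H:=D=G'_{\restri{x}}$, the chain $G'_{\restri{x}}\sqsubseteq D\sqsubseteq G'$ yields $D_{\restri{x}} = G'_{\restri{x}} = D$; in particular $D\in\graphvalid{\restri{}}{x}$, since $D_{\restri{x}}$ is now defined. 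This is exactly the idempotency remark made just before the statement.

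With these two facts the main argument is short. Fix $D\in\diskA{x}$ and $G\in\sshift$ with $D\sqsubseteq G$; the goal is $G_{\restri{x}} = D$. Since $D\in\graphvalid{\restri{}}{x}$ by idempotency and $D\sqsubseteq G$, the contrapositive safety principle gives $G\in\graphvalid{\restri{}}{x}$ together with $G_{\restri{x}}\sqsubseteq D$. We therefore have the chain $G_{\restri{x}}\sqsubseteq D\sqsubseteq G$ with $G\in\graphvalid{\restri{}}{x}$ and $D\in\sshift$, so extensivity applied to $G$ with $H:=D$ gives $D_{\restri{x}} = G_{\restri{x}}$. Combining with $D_{\restri{x}} = D$ yields $G_{\restri{x}} = D$, which is strong extensivity.

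The argument involves no real obstacle beyond bookkeeping; the one point that deserves care is the very first deduction $D\in\graphvalid{\restri{}}{x}$, since the safety principle can only be invoked once we know that the \emph{smaller} graph $D$ lies in the domain of $\restri{x}$ — this is precisely what idempotency delivers, which is why idempotency must be proven before the safety principle is used. I would also be careful, throughout, to read the extensivity axiom as asserting that $H_{\restri{x}}$ is defined (hence $H\in\graphvalid{\restri{}}{x}$) whenever $G_{\restri{x}}\sqsubseteq H\sqsubseteq G$ with $G\in\graphvalid{\restri{}}{x}$, since that definedness is used implicitly in both applications of extensivity.
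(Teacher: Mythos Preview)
Your proof is correct and follows essentially the same route as the paper's own argument: both establish idempotency of disks from extensivity, then combine the safety principle with extensivity on the chain $G_{\restri{x}}\sqsubseteq D\sqsubseteq G$. The only difference is cosmetic: you argue directly via the contrapositive of the safety principle, whereas the paper argues by contradiction using its original implication form; the logical content is identical.
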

\begin{proof}
$[$Extensive implies idempodent$]$ Consider $G\in\sshift$, since $\sshift$ is closed under inclusion we have $G_\restri{x}\in\sshift$. Extensivity with $H=G_\restri{x}$ gives
$G_\restri{x}\sqsubseteq G_\restri{x} \sqsubseteq G$ implies $(G_\restri{x})_\restri{x} = G_\restri{x}$.
Since the LHS of the implication is trivially fulfilled, its RHS holds.\\
\noindent $[$Strong extensivity$]$
By contradiction suppose that we do not have strong extensivity. Then there exists $G,H\in\sshift$ such that 
$$H_\restri{x}\sqsubseteq G\textrm{ and }(G\notin \graphvalid{\restri{}}{x}\textrm{ or }G_\restri{x}\neq H_\restri{x}).$$
Let us show that $G_\restri{x}\sqsubseteq H_\restri{x}$ is impossible. Indeed, we would then have $G_\restri{x}\sqsubseteq H_\restri{x}\sqsubseteq G$, which by extensivity would give $G_\restri{x}=H_\restri{x}$, which we excluded.\\
So, $G_\restri{x}\notin \graphvalid{\restri{}}{x}$ or $G_\restri{x}\not\sqsubseteq H_\restri{x}$. We can apply the safety principle and see that $H_\restri{x}\notin \graphvalid{\restri{}}{x}$ which contradicts idempotency.
\end{proof}

\end{document}